\documentclass[11pt,reqno]{amsart}
\usepackage[margin=1in]{geometry}
\usepackage{comment}    
\usepackage{lmodern}
\usepackage{bm}         
\usepackage{dsfont}     
\usepackage{braket}     
\usepackage{lipsum}     
\usepackage{amsmath,amssymb,amsthm}  
\usepackage{mathtools}               
\usepackage{esvect}                  

\usepackage[ruled,vlined]{algorithm2e}
\DontPrintSemicolon
\usepackage{xcolor, graphicx}
\usepackage{subcaption}
\usepackage{tikz}
\usetikzlibrary{shapes.geometric, arrows.meta, positioning, quantikz2}

\definecolor{darkgreen}{rgb}{0.0,0.72,0.06}
\definecolor{cerisepink}{rgb}{0.93,0.23,0.51}
\definecolor{red-brown}{rgb}{0.65,0.16,0.16}
\definecolor{blue}{rgb}{0,0,1}

\usepackage{enumitem}
\setlist[enumerate]{leftmargin=*}
\setlist[itemize]{leftmargin=*}
\usepackage[colorlinks]{hyperref}
\usepackage[capitalize,noabbrev]{cleveref}
\hypersetup{
    colorlinks=true,
    citecolor=darkgreen,
    linkcolor=blue,
    filecolor=magenta,
    urlcolor=cerisepink
}

\usepackage[backend=biber,style=numeric-comp,sorting=none,url=false,maxbibnames=99,doi=false]{biblatex}
\newcounter{mainthm} 

\theoremstyle{plain}
\newtheorem{thm}[mainthm]{Theorem}
\newtheorem{lem}[mainthm]{Lemma}
\newtheorem{prop}[mainthm]{Proposition}
\newtheorem{cor}[mainthm]{Corollary}

\theoremstyle{definition}
\newtheorem{defn}[mainthm]{Definition}
\newtheorem{assumption}[mainthm]{Assumption}

\newtheorem{example}[mainthm]{Example}
\newtheorem{remark}[mainthm]{Remark}
\newtheorem{result}[mainthm]{Result}

\crefname{thm}{Theorem}{Theorems}
\crefname{lem}{Lemma}{Lemmas}
\crefname{prop}{Proposition}{Propositions}
\crefname{cor}{Corollary}{Corollaries}
\crefname{defn}{Definition}{Definitions}
\crefname{assumption}{Assumption}{Assumptions}
\crefname{notation}{Notation}{Notations}
\crefname{remark}{Remark}{Remarks}
\crefname{example}{Example}{Examples}
\crefname{fact}{Fact}{Facts}
\crefname{result}{Result}{Results}
\crefname{ques}{Question}{Questions}

\newcommand{\C}{\mathbb{C}}

\begin{document}

\title{Non-Uniform Quantum Fourier Transform}

\author[J. Aftab]{Junaid Aftab}
\address{Department of Mathematics, University of Maryland, College Park, MD 20742, USA}
\email{junaida@umd.edu}

\author[Y. Khoo]{Yuehaw Khoo}
\address{Department of Statistics, University of Chicago, IL 60637, USA}
\email{yuehaw.khoo@uchicago.edu}

\author[H. Yang]{Haizhao Yang}
\address{Departments of Mathematics and Computer Science, University of Maryland, College Park, MD 20742, USA}
\email{haizhao.yang@umd.edu}

\date{\today}

\begin{abstract}
The Discrete Fourier Transform (DFT) is fundamental to the analysis of uniformly sampled signals; however, many practical settings involve non-uniform sampling, necessitating the Non-Uniform Discrete Fourier Transform (NUDFT). While efficient quantum algorithms for the standard DFT are well established, a systematic treatment of the non-uniform case remains limited.
We present a quantum algorithm for the Non-Uniform Quantum Fourier Transform (NUQFT) based on a low-rank approximation. This factorization is translated into an explicit quantum algorithm, yielding an $\epsilon$-accurate block encoding with rigorously controlled errors arising from both classical truncation and quantum implementation.
The resulting complexity scales polylogarithmically in the target precision, quadratically in the number of qubits, and logarithmically in a geometry-dependent conditioning parameter induced by the non-uniform sampling grid.
Numerical experiments validate the predicted low-rank truncation behavior, confirm the mild precision dependence on the geometry parameter, and demonstrate small-scale implementations of the key unitary primitives.
\end{abstract}

\maketitle 


\section{Introduction}
The Discrete Fourier Transform (DFT), introduced by Gauss and later popularized by Cooley and Tukey~\cite{cooley1965algorithm}, provides a framework for computing Fourier coefficients from uniformly sampled data and underpins applications in signal processing, data compression, and numerical solutions of differential equations (see, e.g.,~\cite{tan2018digital,shen2011spectral,briggs1995dft}). In practice, however, uniform sampling is often infeasible due to physical constraints, adaptive sampling, or irregular geometries. The Non-Uniform Discrete Fourier Transform (NUDFT)~\cite{bagchi2012nonuniform} generalizes the DFT to arbitrary sampling grids, enabling spectral analysis on non-uniform data and irregular computational domains, with important applications in imaging and numerical methods (see, e.g.,~\cite{qin2013robust,liu2000applications}). Consequently, a broad class of efficient classical algorithms for computing the NUDFT has been developed (see, e.g.,~\cite{dutt1993fast,ware1998fast,potts2001fast,fessler2003nonuniform,greengard2004accelerating,lee2005type,ruiz2018nonuniform,wilber2025superfast,li2025superfastdirectsolvertypeiii}).

The Quantum Fourier Transform (QFT) (see, e.g.,~\cite{coppersmith2002approximatefouriertransformuseful,nielsen2002quantum,lin2022lecture}) is the quantum analogue of the DFT and a central primitive in quantum algorithm design. It appears as a core subroutine in many quantum algorithms, including Shor’s factoring algorithm~\cite{shor1999polynomial}, Kitaev’s quantum phase estimation~\cite{kitaev1995quantummeasurementsabelianstabilizer}, quantum linear systems algorithms~\cite{harrow2009quantum}, ground-state energy estimation~\cite{abrams1999quantum}, and quantum arithmetic~\cite{draper2000additionquantumcomputer,ruiz2017quantum}. Recent work has explored several extensions of the QFT, such as multi-dimensional extensions, approximate implementations with reduced gate complexity, topology-aware variants adapted to specific hardware architectures, and fault-tolerant constructions for error-corrected quantum computation (see, e.g.,~\cite{pfeffer2023multidimensional,nam2020approximate,bäumer2025approximatequantumfouriertransform,jin2024optimizing,klaver2024swaplessimplementationquantumalgorithms,goto2014resource}). In contrast, quantum algorithms for Fourier transforms on non-uniform grids, particularly in a fully quantum setting, remain relatively underexplored.

In this work, we propose a quantum algorithm for the NUDFT, termed the Non-Uniform Quantum Fourier Transform (NUQFT). The construction builds on the low-rank approximation framework of Antol\'{i}n and Townsend~\cite{ruiz2018nonuniform} and translates this decomposition into an explicit quantum algorithm. The factorization is implemented using Quantum Signal Processing (QSP)~\cite{low2017qsp,gilyen2019qsvt} together with the Linear Combination of Unitaries (LCU)~\cite{ChildsWiebe2012} algorithms. This yields an explicit block encoding~\cite{gilyen2019qsvt} of the NUDFT matrix to arbitrary precision, with controlled approximation errors stemming from both the classical low-rank truncation and its quantum realization. We also provide explicit gate-level resource estimates for the resulting circuits under standard oracle access assumptions for non-uniform sampling points.

\subsection{Problem Statement} 
Let $\{x_j\}_{j=0}^{N-1} \subseteq \mathbb{C}$ denote samples taken at non-uniform locations $\{t_j\}_{j=0}^{N-1} \subseteq \mathbb{T}$, viewed as a discrete representation of an underlying function $f:\mathbb{T}\to\mathbb{C}$. The Non-Uniform Discrete Fourier Transform (NUDFT) of $\{x_j\}_{j=0}^{N-1}$, evaluated at non-uniform frequencies $\{\omega_k\}_{k=0}^{N-1} \subseteq \mathbb{R}$, is defined by
\begin{equation}\label{nudftIII-matrix}
\begin{pmatrix}
X_0 \\
X_1 \\
\vdots \\
X_{N-1}
\end{pmatrix}
=
\begin{pmatrix}
e^{-i \omega_0   t_0} & e^{-i \omega_0   t_1} & \cdots & e^{-i \omega_0   t_{N-1}} \\
e^{-i \omega_1   t_0} & e^{-i \omega_1   t_1} & \cdots & e^{-i \omega_1   t_{N-1}} \\
\vdots & \vdots & \ddots & \vdots \\
e^{-i \omega_{N-1} t_0} & e^{-i \omega_{N-1}   t_1} & \cdots & e^{-i \omega_{N-1}   t_{N-1}} \\
\end{pmatrix}
\begin{pmatrix}
x_0 \\
x_1 \\
\vdots \\
x_{N-1}
\end{pmatrix}.
\end{equation}
In the literature, \cref{nudftIII-matrix} is known as the Type--III NUDFT, and the matrix is denoted by $F_{\operatorname{III}}$. Two special cases arise depending on whether the sampling locations and frequency points are uniform or non-uniform:
\begin{enumerate}
    \item[(I)]\textbf{Type--I NUDFT}.  
    The sequence \( \{t_j\}_{j=0}^{N-1} \) is sampled uniformly at points \( t_j = 2\pi j/N \), but the
    sequence \( \{ \omega_k \}_{k=0}^{N-1} \) is sampled at potentially non-uniform points.  We denote the corresponding matrix as $F_{\operatorname{I}}$.

    \item[(II)] \textbf{Type--II NUDFT}. 
    The sequence \( \{t_j\}_{j=0}^{N-1} \) is sampled at potentially non-uniform points \( \{t_j\}_{j=0}^{N-1} \), but the sequence \( \{\omega_k\}_{k=0}^{N-1} \) is uniformly space with $\omega_k=k$.  We denote the corresponding matrix as $F_{\operatorname{II}}$.
\end{enumerate}

We argue in \cref{townsend-alg} that it suffices to develop a quantum algorithm for the Type--II case.
Hence, we focus on developing a quantum algorithm for the Type--II NUDFT in this work.

\subsection{Summary of Results} 
Let \(N = 2^n\) for \(n \ge 1\). Antol\'{i}n and Townsend~\cite{ruiz2018nonuniform} approximate \(F_{\operatorname{II}}\) by \(F_{\operatorname{II}} \approx \sum_{r=0}^{K-1} D_{\vec u_r} F D_{\vec v_r}\). (See later sections for the more general form of \(F_{\operatorname{II}}\).) Here, \(F\) is the standard discrete Fourier transform matrix, and \(D_{\vec u_r}\), \(D_{\vec v_r}\) are diagonal matrices. Assuming oracle access to the sampling points \(\{t_j\}_{j=0}^{2^n-1}\), our algorithm constructs quantum circuits \(U_{\vec u_r}\) and \(U_{\vec v_r}\) that provide block encodings of \(D_{\vec u_r}\) and \(D_{\vec v_r}\). Combining these with the standard block encoding of \(F\) and applying the Linear Combination of Unitaries (LCU) technique implements \(\sum_{r=0}^{K-1} D_{\vec u_r} F D_{\vec v_r}\). We provide non-asymptotic gate-count estimates (\cref{nuqftII-exact}), resource bounds for achieving \(\epsilon\)-accurate block encodings (\cref{nuqft-error}), and simplified complexity estimates (\cref{final-estimate-cor}), including explicit gate counts sufficient to obtain an \(\epsilon\)-accurate block encoding.

\begin{result}
[Asymptotic Version of \cref{nuqftII-exact} based on \cref{final-estimate-cor}]
\label{res:nuqft-informal}
Let $\epsilon > 0$ and let $n \in \mathbb{N}$ with $N = 2^n$. There exists a quantum circuit, $V_{\operatorname{II}}$, that implements an $\epsilon$-accurate block-encoding of $F_{\operatorname{II}}$.
Let 
\begin{align}
L_{n,\epsilon} := n + \log(1/\epsilon), \quad
\kappa := \max_j (1 - (y^*_j)^2)^{-1/2},
\end{align} 
where $y^*_j \in (-1,1)$ are points determined by the non-uniform sampling points. The circuit uses
$\mathcal{O}(L_{n,\epsilon} + \log(1 + \kappa L_{n,\epsilon}))$ qubits,
$\mathcal{O}(n + \log L_{n,\epsilon})$ Hadamard gates,
$\mathcal{O}(L^2_{n,\epsilon} + \log(1 + \kappa L_{n,\epsilon}))$ $\mathrm{CNOT}$ gates,
$\mathcal{O}(L_{n,\epsilon} + \log(1 + \kappa L_{n,\epsilon}))$ Toffoli and $\sigma_X$ gates, and
$\mathcal{O}(n^2 + L^2_{n,\epsilon} + \log(1 + \kappa L_{n,\epsilon}))$ single-qubit controlled rotation gates.
The circuit depth is $\mathcal{O}(L_{n,\epsilon} \log L_{n,\epsilon} + \log(1 + \kappa L_{n,\epsilon} ))$.
\end{result}

The resource bounds in \cref{res:nuqft-informal} depend on three parameters: the number of qubits $n$, the target accuracy $\epsilon$, and the geometry-dependent quantity $\kappa$. The accuracy parameter enters only through the logarithmic factor $\log(1/\epsilon)$, appearing additively with $n$ in the composite parameter $L_{n,\epsilon}$. Hence increasing precision affects gate counts and circuit depth only polylogarithmically.
The quadratic dependence on $n$ originates from the implementation of the quantum Fourier transform. All remaining gate counts scale as $L_{n,\epsilon}\log L_{n,\epsilon}$, except for a single $L_{n,\epsilon}^2$ term, up to additional logarithmic factors depending on $\kappa$.
The parameter $\kappa$, capturing conditioning from the non-uniform sampling points, contributes only through logarithmic factors of the form $\log(1+\kappa L_{n,\epsilon})$. In particular, no resource depends polynomially on $\kappa$, and its contribution remains subdominant whenever $\kappa$ grows at most exponentially in $n$ or $\log(1/\epsilon)$.

\subsection{Related Works} 
In contrast to the extensive literature on the QFT, little work has addressed quantum implementations of Fourier transforms on non-uniform grids. To the best of our knowledge, the closest related proposal is due to Gyongyosi and Imre~\cite{gyongyosi2012improvementquantumfouriertransform}. That work introduces a method termed ``Quantum-SVD,'' in which a standard QFT on a uniform grid is followed by a classical singular value decomposition--based interpolation step to approximate Fourier coefficients at non-uniform sampling points. As a result, the procedure is hybrid: the Fourier transform is quantum, while the handling of non-uniformity is performed classically. By contrast, the approach developed here provides a fully quantum implementation based on a block-encoding framework. Our work can also be placed in the broader context of efforts to develop circuit-level quantum analogues of classical integral transforms, including the quantum wavelet transform~\cite{fijany1998quantum,bagherimehrab2024efficient}, quantum Laplace transform~\cite{zylberman2024fastlaplacetransformsquantum,singh2025polylogarithmic}, quantum Hilbert transform~\cite{jha2025quantum,zhang2026efficient}, quantum cosine transform~\cite{klappenecker2001discrete,grigoryan2025new}, quantum Chebyshev transform~\cite{williams2023quantum}, and quantum Hermite transform~\cite{jain2025efficient}.

\subsection{Organization}
The remainder of this paper is organized as follows. \cref{prelims} reviews the necessary preliminaries, including the discrete and quantum Fourier transforms, as well as other quantum algorithmic primitives relevant to this study. \cref{townsend-alg} reviews the low-rank approximation approach to the non-uniform discrete Fourier transform. \cref{cheb-nuqfft} presents our quantum algorithm, and \cref{cheb-nuqfft-error} provides the corresponding error analysis. 
\cref{numerics-res} presents some numerical results.
Finally, \cref{conclusion} offers concluding remarks and discusses potential directions for future work.

\subsection*{Acknowledgments}
J. Aftab was supported by NSF DMS-2231533.
Y. Khoo was partially funded by NSF DMS-2339439, DOE DE-SC0022232, DARPA The Right Space HR0011-25-9-0031, and a Sloan research fellowship.
H. Yang was partially funded by NSF under awards IIS-2520978, GEO/RISE-5239902, the Office of Naval Research Award N00014-23-1-2007, DOE (ASCR) Award DE-SC0026052, and the DARPA D24AP00325-00.

\section{Preliminaries}\label{prelims}
This section presents the necessary preliminary material. \cref{not} introduces the notation used throughout the paper. \cref{qft} reviews the Quantum Fourier Transform (QFT). Finally, \cref{quant-algs-prims} summarizes the quantum algorithmic primitives that form the foundation of the proposed method.

\subsection{Notation}\label{not}
We briefly discuss the main notation used throughout this work.
\subsubsection{Standard Notation} 
Let $\mathbb{R}$, $\mathbb{C}$, and $\mathbb{N}$ denote the sets of real, complex and natural numbers, respectively. Let $\mathbb T$ denote the $1$-dimensional torus (i.e., the unit circle) identified with the interval~$[0,1]$ with its endpoints identified. If $f, g : \mathbb{N} \to \mathbb{R}^+$ are non-negative functions, we use the following standard notation from complexity theory:
\begin{enumerate}
  \item  $f(n) = \mathcal{O}(g(n))$ if and only if there exists a constant $C > 0$ and an integer $N \in \mathbb{N}$ such that $f(n) \leq C g(n)$ for all $n \geq N$.
  \item  $f(n) = \Omega(g(n))$ if and only if there exists a constant $C > 0$ and an integer $N \in \mathbb{N}$ such that $f(n) \geq C g(n)$ for all $n \geq N$.
  \item  $f(n) = \Theta(g(n))$ if and only if $f(n) = \mathcal O (g(n))$ and $f(n) = \Omega(g(n))$.
\end{enumerate}
For $r \in \mathbb N\cup \{0\}, T_r(x)$ denotes the degree-$r$ Chebyshev polynomial of the first-kind on $[-1,1]$ defined recursively by $T_0(x)=1, T_1(x)= x$ and
\begin{equation}
T_r(x)  = 2xT_{r-1}(x) - T_{r-2}(x), \quad r \geq 2.
\end{equation}
Similarly,  $U_r(x)$ denotes the degree-$r$ Chebyshev polynomial of the second-kind on $[-1,1]$ defined recursively by $U_0(x)=1, U_1(x)= 2x$ and
\begin{equation}
U_r(x)  = 2xU_{r-1}(x) - U_{r-2}(x), \quad r \geq 2.
\end{equation}

\subsubsection{Linear Algebra Notation}  
Vectors in $\mathbb R^n$ or $\mathbb C^n$ are denoted by lowercase Roman letters with an overhead arrow. The linear algebra notation used throughout this paper is summarized below:

\begin{enumerate}
    \item $\|\vec{x}\|_{1}$ denotes the $1$-norm of a vector $\vec{x}$. Similarly, $\|\vec{x}\|_{\infty}$ denotes the $\infty$-norm of $\vec{x}$.
    \item $\|A\|$ denotes the spectral norm of a matrix $A$, defined as its largest singular value.
    \item $\|A\|_{\operatorname{max}}$ denotes the maximum absolute value among the entries of $A$.
    \item The symbol $\circ$ denotes the elementwise (Hadamard) product, acting entry-wise on vectors or matrices of the same dimensions.
    \item If $\vec{x}$ is a vector and $f$ is a function, then $f(\vec{x})$ denotes the componentwise application of $f$ to $\vec{x}$.
\end{enumerate}

\subsubsection{Quantum Computing Notation}
A single-qubit quantum state is represented by a unit vector in \(\mathbb{C}^2\). In Dirac's notation, such a state is written as \(\ket{v}\) and referred to as a \emph{ket}, while its conjugate transpose, denoted \(\bra{v}\), is called a \emph{bra}. A single-qubit quantum gate is a unitary operator in \(\mathbb{C}^{2 \times 2}\). Standard examples include the Pauli matrices:
\begin{equation}
    \sigma_X = \begin{pmatrix} 0 & 1 \\ 1 & 0 \end{pmatrix}, \quad
    \sigma_Y = \begin{pmatrix} 0 & -i \\ i & 0 \end{pmatrix}, \quad
    \sigma_Z = \begin{pmatrix} 1 & 0 \\ 0 & -1 \end{pmatrix}.
\end{equation}
Single-qubit rotation gates about the \(X\), \(Y\), and \(Z\) axes are defined as follows:
\begin{equation}
    e^{-i \theta \sigma_X} = \begin{pmatrix} \cos\theta & -i \sin\theta \\ -i \sin\theta & \cos\theta \end{pmatrix}, \quad
    e^{-i \theta \sigma_Y} = \begin{pmatrix} \cos\theta & -\sin\theta \\ \sin\theta & \cos\theta \end{pmatrix}, \quad
    e^{-i \theta \sigma_Z} = \begin{pmatrix} e^{-i\theta} & 0 \\ 0 & e^{i\theta} \end{pmatrix}.
\end{equation}
We also denote the quantum gates  $e^{-i \theta \sigma_X}$, $e^{-i \theta \sigma_Y}$, and $e^{-i \theta \sigma_Z}$ as $R_X(\theta)$, $R_Y(\theta)$, and $R_Z(\theta)$, respectively. 
Another important example of a single-qubit gate is the Hadamard gate:
\begin{equation}
H =
\frac{1}{\sqrt{2}}
\begin{pmatrix}
1 & 1 \\
1 & -1
\end{pmatrix}
=
\begin{quantikz}
     & \gate{H} & \qw
    \end{quantikz}
\end{equation}
A $k$-qubit quantum state is a unit vector $\mathbb{C}^{2^k}$ and a $k$-qubit quantum gate is a unitary operator in \(\mathbb{C}^{2^k \times 2^k}\). An important example of a $k$-qubit gate is the $k$-fold tensor product of Hadamard gates:
\begin{equation}
H^{\otimes k} =
\frac{1}{\sqrt{2}}
\begin{pmatrix}
1 & 1 \\
1 & -1
\end{pmatrix}^{\otimes k}
=
\begin{quantikz}[wire types={b},classical gap=0.07cm]
&   \qwbundle{k} & \gate{H^{\otimes k}} & \qw
    \end{quantikz}
\end{equation}
It is often necessary to construct multi-qubit controlled gates. 
A generic $k$-qubit gate, \(U\), controlled on $l$ qubits, is denoted as follows:
\begin{equation}
C^l U
=
\begin{quantikz}[wire types={b,b}, classical gap=0.07cm]
&   \qwbundle{l} & \ctrl{1} & \qw \\
&   \qwbundle{k} & \gate{U} & \qw
\end{quantikz}
\end{equation}
An important example of a two-qubit gate is the CNOT gate, while a three-qubit example is the Toffoli (CCNOT) gate:
\begin{equation}
    \operatorname{CNOT} =
    \begin{pmatrix}
        1 & 0 & 0 & 0 \\
        0 & 1 & 0 & 0 \\
        0 & 0 & 0 & 1 \\
        0 & 0 & 1 & 0
    \end{pmatrix}
    =
    \begin{quantikz}
     & \ctrl{1} & \qw \\
     & \targ{} & \qw
    \end{quantikz}
    \;
    \operatorname{CCNOT} =
    \begin{pmatrix}
        1 & 0 & 0 & 0 & 0 & 0 & 0 & 0 \\
        0 & 1 & 0 & 0 & 0 & 0 & 0 & 0 \\
        0 & 0 & 1 & 0 & 0 & 0 & 0 & 0 \\
        0 & 0 & 0 & 1 & 0 & 0 & 0 & 0 \\
        0 & 0 & 0 & 0 & 1 & 0 & 0 & 0 \\
        0 & 0 & 0 & 0 & 0 & 1 & 0 & 0 \\
        0 & 0 & 0 & 0 & 0 & 0 & 0 & 1 \\
        0 & 0 & 0 & 0 & 0 & 0 & 1 & 0
    \end{pmatrix}
    =
    \begin{quantikz}
     & \ctrl{2} & \qw \\
     & \ctrl{1} & \qw \\
     & \targ{} & \qw
    \end{quantikz}
\end{equation}
Note that $\operatorname{CNOT}=C \sigma_X$ and 
$\operatorname{CCNOT}=C^2 \sigma_X$
Single-qubit controlled rotations also play an important role in quantum computation. In particular, the following controlled rotation operation arises in the implementation of the Quantum Fourier Transform.
\begin{equation}\label{rm-rot}
CR_m
=
\begin{pmatrix}
1 & 0 & 0 & 0 \\
0 & 1 & 0 & 0 \\
0 & 0 & 1 & 0 \\
0 & 0 & 0 & e^{-2\pi i / 2^{m}}
\end{pmatrix}
=
\begin{quantikz}[wire types={q,q},classical gap=0.07cm]
     & \ctrl{+1} & \qw \\
     & \gate{R_m} & \qw
    \end{quantikz} 
\end{equation}

\subsection{Quantum Fourier Transform}\label{qft}
Let $N \in \mathbb{N}$. The Discrete Fourier Transform (DFT) can be interpreted as the linear map
$\mathrm{DFT}_N : \mathbb{C}^N \to \mathbb{C}^N$
given by the matrix whose \((j,k)\)-entry is  
\begin{equation}
(\mathrm{DFT}_N)_{jk} =\frac{1}{\sqrt{N}} \omega_N^{jk},
\end{equation}
where \( \omega_N = e^{-2\pi i / N} \) is a primitive \( N \)-th root of unity and $0\leq j,k \leq N-1$. 
It is straightforward to verify that $\mathrm{DFT}_N$ is a unitary operator. The Quantum Fourier Transform (QFT) is the quantum analogue of the Discrete Fourier Transform and acts on quantum states by implementing $\mathrm{DFT}_N$ as a quantum circuit, denoted by $U_{\operatorname{QFT}}$. The algorithm realizes $U_{\operatorname{QFT}}$ through a sequence of single-qubit and two-qubit gates based on the following expression:
\begin{align}\label{product-rep-qft}
U_{\operatorname{QFT}} \ket{x} 
= \frac{1}{\sqrt{N}} \sum_{y=0}^{N-1} e^{-2 \pi i xy / N} \ket{y}  
 = \frac{1}{\sqrt{N}} \bigotimes_{\ell=0}^{n-1} 
   \left( \ket{0} + e^{-2\pi i \sum_{m=1}^{n-\ell} x_{n-\ell-m}/2^{m}} \ket{1} \right).
\end{align}
The tensor-product form in \cref{product-rep-qft} is important because it reveals a hierarchical dependence: the last output qubit depends on all \(n\) input qubits, while earlier outputs depend on progressively fewer qubits. This structure enables a simplified and more efficient circuit implementation.
In each of the \(n\) iterations in \cref{alg:qft}, a Hadamard gate is applied followed by progressively fewer controlled phase rotations, giving a total of
$\mathcal{O}(n^2)$
quantum gates. For a vector of size \(N = 2^n\), the QFT has computational complexity \(\mathcal{O}(\log^2 N) = \mathcal{O}(n^2)\), offering an exponential speedup over the classical FFT, which requires \(\mathcal{O}(N \log N) = \mathcal{O}(n2^n)\) operations.
\begin{algorithm}[h]
\caption{Quantum Fourier Transform (QFT) on an $n$-qubit register}
\KwIn{$n$-qubit quantum state $\ket{x} = \ket{x_0 x_1 \cdots x_{n-1}}$}
\KwOut{QFT-transformed quantum state $U_{\operatorname{QFT}}\ket{x}$}
\For{$\ell \gets 0$ \KwTo $n-1$}{
    Apply Hadamard gate $H$ to qubit $n-\ell$\; 
    \For{$m \gets 2$ \KwTo $n-\ell$}{
        Apply $CR_m$ (\cref{rm-rot}) with qubit $n-\ell-m$ (control) and qubit $n-\ell$ (target)\;
    }
}
\Return{$\mathrm{QFT \; transformed \; quantum \; state \;}  U_{\operatorname{QFT}}\ket{x}$}
\label{alg:qft}
\end{algorithm}

\subsection{Quantum Algorithm Primitives}
\label{quant-algs-prims}
We review the fundamental algorithmic primitives in quantum computation that underpin this work. In particular, we cover quantum arithmetic, the block encoding framework, quantum signal processing (QSP), and the Linear Combination of Unitaries (LCU) algorithm.

\subsubsection{Quantum Arithmetic}\label{quant-arithm}
We assume that all quantum algorithms take as input a fixed-precision binary number stored in an \(n\)-qubit quantum register, denoted \(\lvert x \rangle\). The leftmost \(m\) qubits encode the integer part, while the remaining \(n-m\) qubits encode the fractional part. When necessary, an additional qubit may be introduced to represent the sign of the number. Formally,
\begin{equation}
\lvert x \rangle 
= \ket{x^{(m-1)}} \otimes \cdots \otimes \ket{x^{(0)}} \otimes \ket{x^{(-1)} } \otimes \cdots \otimes  \ket{x^{(m-n)} },
\end{equation}
where each $x^{(j)} \in \{0,1\}$ and the number is represented as \begin{equation}
x = 
\underbrace{\sum_{j=0}^{m-1} x^{(j)} 2^j}_{\text{integer part}} + \underbrace{\sum_{j=-1}^{-(n-m)} x^{(j)} 2^{j}}_{\text{fractional part}}.
\end{equation}


Quantum algorithms for addition (see, e.g.,~\cite{draper2000additionquantumcomputer,Cuccaro2004quantumadd,draper2006logarithmic}) and multiplication (see, e.g.,~\cite{parent2017improvedreversiblequantumcircuits,gidney2019asymptoticallyefficientquantumkaratsuba,ramezani2023quantum}) have been extensively studied, with circuit complexities of \(\mathcal{O}(n)\) for addition and \(\mathcal{O}(n^2)\) for multiplication, measured in Toffoli and CNOT gates. Since the NUQFT algorithm considered here has overall complexity \(\Omega(n^2)\), it is sufficient to treat addition and multiplication as elementary primitives. See \cref{fig:mac} for a quantum oracle \(U_{\mathrm{AM}}\) that implements both addition and multiplication within a single quantum circuit.
\begin{figure}[h]
\centering
\begin{quantikz}[wire types={b,b,b}, classical gap=0.07cm]
\lstick{$\lvert x \rangle$} 
  & \gate[wires=3]{U_{\mathrm{AM}}} &  \rstick{$\ket{x}$} \\
\lstick{$\lvert y \rangle$} 
  & \qw                               & \rstick{$\ket{y}$} \\
\lstick{$\lvert z \rangle$} 
  & \qw                               & \rstick{$\ket{xy+z}$}
\end{quantikz}
\caption{Quantum oracle for multiplication and addition implementing  $\lvert x \rangle \lvert y \rangle \lvert z \rangle  \mapsto \lvert x \rangle \lvert y \rangle \lvert xy+z \rangle$.}
\label{fig:mac}
\end{figure}

Recent work by Burge et al.~\cite{burge2024quantumcordicarcsin} introduces a reversible quantum algorithm for computing the arcsine function using a quantum adaptation of the CORDIC (COordinate Rotation DIgital Computer) method. The algorithm iteratively approximates \(\arcsin(x)\) through a sequence of reversible rotations, enabling efficient implementation on quantum hardware. For a \(p\)-bit output, it requires \(\mathcal{O}(p)\) qubits and \(\mathcal{O}(p^2)\) CNOT gates, with circuit depth scaling as \(\mathcal{O}(p \log p)\), and achieves a worst-case approximation error of \(\mathcal{O}(2^{-p})\).

\begin{remark}
The implementation of Burge et al. employs $\mathcal{O}(p)$ additional qubit registers as scratch space for intermediate variables. As this overhead is absorbed into the asymptotic complexity, we use $p$ to denote the ancilla register that stores the final $p$-bit approximation whenever only the principal $p$ ancillas are being referenced.
\end{remark}

\subsubsection{Block Encoding}\label{block-encoding}
Quantum algorithms for matrix manipulation typically assume that a possibly non-unitary matrix can be embedded as a block of a unitary operator. The block-encoding framework formalizes this approach, enabling linear algebraic operations on a quantum computer by representing a subnormalized matrix \(A/\alpha\) as the upper-left block of a unitary. Since such embeddings may be approximate, we recall the following formal definition:

\begin{defn}\label{block-encoding-def}
Let $\alpha, \epsilon \in \mathbb{R}^+$. Given \(A \in \mathbb{C}^{r \times c}\), we say \(U \in \mathbb{C}^{d \times d}\) for $ d \geq  \max \{r,c\}$ is a $(\alpha, \epsilon)$ block encoding of \(A\) 
\begin{equation}
\| A - \alpha(B_L^\dagger \, U \, B_R) \|
\leq 
\epsilon,
\end{equation}
where \(B_L \in \mathbb{C}^{d \times r}\) and \(B_R \in \mathbb{C}^{d \times c}\) are the first \(r\) and \(c\) columns of the identity matrix, respectively. 
\end{defn}

We will often consider \(d\), \(r\), and \(c\) as powers of two. In this case, if $A \in \C^{2^m \times 2^n}$, we will say that  a unitary $U \in \C^{2^{m+a} \times 2^{n+a}}$ is a $(\alpha, a, \epsilon)$-block-encoding of $A$ if
\begin{equation}
\| A - \alpha( \bra{0}^{a} \otimes I_{2^n} )U( \ket{0}^{a} \otimes I_{2^m}) \|
\leq 
\epsilon
\end{equation}
In most cases, we have \(m = n\), and this is the case that will be considered below. For ease of notation, we write:
\begin{equation}
    U =
    \begin{pmatrix}
    A/\alpha & \ast\\
    \ast& \ast\\
\end{pmatrix}
\end{equation}
Let us consider an explicit computation to clarify the notation.
Let \(A \in \mathbb{C}^{2 \times 2}\) and \(\alpha = 1\). In this case, we have \begin{equation} (\langle 0| \otimes I_2) \, U \, (|0\rangle \otimes I_2)=\begin{pmatrix} I_2 & 0 \end{pmatrix}\begin{pmatrix} A & C \\ B & D \end{pmatrix}\begin{pmatrix} I_2 \\ 0 \end{pmatrix}=\begin{pmatrix} I_2 & 0 \end{pmatrix}\begin{pmatrix} A \\ B \end{pmatrix}= A. \end{equation} The case when \(A \in \mathbb{C}^{2^s \times 2^s}\) for $s > 1$ and \(a > 1\) is similar to the above. We will use the following lemma concerning block-encoding products of matrices in our analysis.

\begin{lem}
[Lemma 53 in \cite{gilyen2019qsvt}]
\label{block-lem}
If $U$ is an $(\alpha, a, \delta)$-block-encoding of an $s$-qubit operator $A$, and $V$ is a $(\beta, b, \epsilon)$-block-encoding of an $s$-qubit operator $B$, then $(I_b \otimes U)(I_a \otimes V)$ is an $(\alpha\beta, a+b, \alpha\epsilon + \beta\delta)$-block-encoding of $AB$, where $I_a$ and $I_b$ act on the ancilla qubits of $U$ and $V$ respectively.
\end{lem}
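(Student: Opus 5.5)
We verify directly that the product $W := (I_b \otimes U)(I_a \otimes V)$ has the required top-left block, and then bound the error with the triangle inequality. Order the registers as $b$ ancillas (of $V$), $a$ ancillas (of $U$), then the $s$ system qubits. Write $\tilde A := (\bra{0}^{a}\otimes I)U(\ket{0}^{a}\otimes I)$ and $\tilde B := (\bra{0}^{b}\otimes I)V(\ket{0}^{b}\otimes I)$. By hypothesis $\|A-\alpha\tilde A\|\le\delta$ and $\|B-\beta\tilde B\|\le\epsilon$.

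\textbf{Step 1: the top-left block of $W$ is $\tilde A\tilde B$.} Compute $(\bra{0}^{a+b}\otimes I)\,W\,(\ket{0}^{a+b}\otimes I)$. The factor $I_a\otimes V$ acts trivially on the $a$-register. So applying $\ket{0}^{a}$ on that register commutes past it. Similarly, $I_b\otimes U$ acts trivially on the $b$-register, so $\bra{0}^{b}$ commutes past it. Between the two unitaries, insert the resolution of identity $\sum_{k}\ket{k}\bra{k}$ on the $b$-register and $\sum_{j}\ket{j}\bra{j}$ on the $a$-register. Sandwiching gives
\begin{equation*}
(\bra{0}^{a}\otimes I)U(\ket{0}^{a}\otimes I)\cdot(\bra{0}^{b}\otimes I)V(\ket{0}^{b}\otimes I)=\tilde A\tilde B.
\end{equation*}
The terms with $k\neq 0$ vanish, because $\bra{0}^{b}$ on the left meets $\ket{k}$ through an operator acting as the identity on $b$. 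Likewise, the terms with $j\neq0$ vanish because of $\ket{0}^{a}$ on the right. This step needs only bookkeeping of the tensor factors and is the only place where care is required: one must check that each unitary genuinely acts as the identity on the other's ancillas.

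\textbf{Step 2: error bound.} Telescope:
\begin{equation*}
AB-\alpha\beta\tilde A\tilde B = A(B-\beta\tilde B) + \beta(A-\alpha\tilde A)\tilde B.
\end{equation*}
Taking norms gives $\|AB-\alpha\beta\tilde A\tilde B\|\le \|A\|\epsilon+\beta\delta\|\tilde B\|$. Since $\tilde B$ is a block of a unitary, $\|\tilde B\|\le1$.

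For $\|A\|$, the stated bound $\alpha\epsilon+\beta\delta$ requires $\|A\|\le\alpha$. I expect this to be the main subtlety. Directly, we only have $\|A\|\le\alpha\|\tilde A\|+\delta\le\alpha+\delta$, which would yield an extra $\delta\epsilon$ term. To avoid it, I would use the other telescoping instead:
\begin{equation*}
AB-\alpha\beta\tilde A\tilde B=(A-\alpha\tilde A)B+\alpha\tilde A(B-\beta\tilde B).
\end{equation*}
This gives the bound $\delta\|B\|+\alpha\epsilon$, so one still needs $\|B\|\le\beta$.

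I would therefore adopt the convention of Gily\'en et al., implicit in the cited Lemma 53: the encoded operators satisfy $\|A\|\le\alpha$ and $\|B\|\le\beta$, which is the natural normalization for block encodings. Under this convention, $\|B\|\le\beta$ together with $\|\tilde A\|\le1$ gives exactly $\alpha\epsilon+\beta\delta$. Alternatively, one can simply invoke Lemma 53 of \cite{gilyen2019qsvt} as cited. Finally, the unitary acts on $a+b$ ancillas with normalization $\alpha\beta$, completing the claim.
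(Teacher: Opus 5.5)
Your proposal is correct and follows the same route as the proof in \cite{gilyen2019qsvt}, which the paper cites without reproducing: identify the top-left block of $(I_b\otimes U)(I_a\otimes V)$ as $\tilde A\tilde B$, split $AB-\alpha\beta\tilde A\tilde B=(A-\alpha\tilde A)B+\alpha\tilde A(B-\beta\tilde B)$, and bound the two terms using $\|\tilde A\|\le 1$ and $\|B\|\le\beta$. You are also right that $\|B\|\le\beta$ (or $\|A\|\le\alpha$) is an unstated extra assumption under the paper's definition, since $U=V=I$, $\alpha=\beta=1$, $A=(1+\delta)I$, $B=(1+\epsilon)I$ gives error $\delta+\epsilon+\delta\epsilon$; this is harmless in the paper, which only applies the lemma with $\delta=\epsilon=0$.
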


We will also use a result on block-encoding sparse matrices, which can be found in \cite[Lemma~48]{gilyen2019qsvt}.
For our purposes, it suffices to restrict attention to the case in which \(A\) is a sparse \(0\!-\!1\) matrix.
The corresponding specialization of this result is stated below.

\begin{lem}
\label{sparse-block-lem}
Let \(A \in \mathbb{C}^{2^n \times 2^n}\) be a \(d_r\)-row-sparse and \(d_c\)-column-sparse $0$-$1$ matrix. Suppose that we have access to the following \(2(n + 1)\)-qubit oracles:
\begin{align}
O_r: \lvert i \rangle \lvert k \rangle &\mapsto \lvert i \rangle \lvert r_{ik} \rangle, \quad 0 \le i < 2^n, \ 0 \le k < d_r, \\
O_c: \lvert l \rangle \lvert j \rangle &\mapsto \lvert c_{lj} \rangle \lvert j \rangle, \quad 0 \le j < 2^n, \ 0 \le l < d_c,
\end{align}
where \(r_{ik}\) is the index of the \(k\)-th non-zero entry of the \(i\)-th row of \(A\), or if there are fewer than \(k\) non-zero entries, then \(r_{ik} = k + 2^n\), and similarly \(c_{lj}\) is the index of the \(l\)-th non-zero entry of the \(j\)-th column of \(A\), or if there are fewer than \(l\) non-zero entries, then \(c_{lj} = l + 2^n\).  Additionally, assume we have access to a one-bit oracle
\begin{equation}
O_A: \lvert i \rangle \lvert j \rangle \lvert 0 \rangle \mapsto \lvert i \rangle \lvert j \rangle \lvert A_{ij} \rangle, \quad 0 \le i,j < 2^n,
\end{equation}
where \(A_{ij} \in \{0,1\}\).
Then one can implement a \((\sqrt{d_r d_c}, n + 3, 0)\)-block-encoding of \(A\) with a single use of \(O_r, O_c\), two uses of \(O_A\) and  additionally \(\mathcal O(n))\) one-qubit and two-qubit gates, while using \(\mathcal O(1)\)
ancilla qubits (which are discarded before the \(\epsilon\)-post-selection step).
\end{lem}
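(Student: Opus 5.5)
The plan is to specialize the standard sparse-access construction of \cite[Lemma~48]{gilyen2019qsvt}. I would build two state-preparation unitaries, one from the row oracle and one from the column oracle, such that the overlap of the states they prepare is exactly $A_{ij}/\sqrt{d_r d_c}$. The block encoding is then $U_L^\dagger U_R$.

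\textbf{The two state-preparation maps.} I would work on a system register $\ket{j}$ of $n$ qubits, one extra qubit that enlarges it to an $(n+1)$-qubit register, a second $(n+1)$-qubit register initialized to $\ket{0}$, and one further qubit for the amplitude. This gives $n+2$ ancillas besides the $n$ system qubits, with one more slot for the $A_{ij}$ flag, which is consistent with the count $n+3$.

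Define $U_R$ as follows. Apply Hadamards, or a uniform state preparation on $\lceil\log d_c\rceil$ qubits, to the ancilla register to produce $\frac{1}{\sqrt{d_c}}\sum_{l<d_c}\ket{l}\ket{j}$. Then apply $O_c$ to obtain $\frac{1}{\sqrt{d_c}}\sum_l\ket{c_{lj}}\ket{j}$.

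Define $U_L$ symmetrically from $O_r$. It sends $\ket{i}$ to $\frac{1}{\sqrt{d_r}}\sum_k\ket{i}\ket{r_{ik}}$, after swapping registers so that both maps output pairs in the ordering (row index, column index).

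Between the two maps, I would use $O_A$ on the pair $\ket{c_{lj}}\ket{j}$ to write $A_{c_{lj},j}$ into a flag qubit. I would then apply a controlled rotation that takes the amplitude qubit from $\ket{0}$ to $\sqrt{A}\ket{0}+\sqrt{1-A}\ket{1}$, which for $A\in\{0,1\}$ is just a controlled NOT. A second use of $O_A$ uncomputes the flag. This accounts for the two uses of $O_A$.

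\textbf{The overlap computation.} I would compute $\bra{0}\bra{i}U_L^\dagger U_R\ket{0}\ket{j}$. This equals
\[
\frac{1}{\sqrt{d_r d_c}}\sum_{k,l}[\,i=c_{lj},\ r_{ik}=j\,]\,A_{ij}.
\]
Padded indices are at least $2^n$, while $i,j<2^n$, so padded entries never contribute. Each genuine nonzero $(i,j)$ appears exactly once in column $j$'s list and exactly once in row $i$'s list. The sum therefore collapses to $A_{ij}/\sqrt{d_r d_c}$, giving an exact $(\sqrt{d_r d_c},n+3,0)$ block encoding.

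\textbf{Resource count and main obstacle.} Counting resources is routine. The maps use one call each to $O_r$ and $O_c$, two calls to $O_A$, uniform superpositions and swaps costing $\mathcal O(n)$ gates, and $\mathcal O(1)$ extra qubits.

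I expect the main obstacle to be the bookkeeping that makes the overlap exact. The register orderings of $U_L$ and $U_R$ must match, so that $\bra{0}$ on the ancillas really projects onto the correct pairing. Padded or out-of-range indices must not create spurious matches. When $d_r$ or $d_c$ is not a power of two, the uniform superposition must either be exact or be handled by padding, with the padded indices $\ge 2^n$ doing the work. I would follow Gily\'en et al.'s register layout verbatim to settle these points.
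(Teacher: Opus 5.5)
Your proposal is correct and follows essentially the same route as the paper, which omits the proof and defers to the construction of \cite[Lemma~48]{gilyen2019qsvt}. That construction is exactly your $U_L^\dagger U_R$ overlap argument, in which padded indices $\ge 2^n$ never match and $O_A$ is computed and then uncomputed around the amplitude flip. One small bookkeeping slip: the extension qubit, the second $(n+1)$-qubit register and the amplitude qubit already total $n+3$, not $n+2$, so the flag qubit (returned to $\ket{0}$ by the second $O_A$ call) belongs among the $\mathcal{O}(1)$ auxiliary ancillas rather than in the block-encoding count.
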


The proof of \cref{sparse-block-lem} follows the same argument as that of the general result in \cite[Lemma~48]{gilyen2019qsvt} and is therefore omitted.

\subsubsection{Quantum Signal Processing}\label{qsp}
Quantum signal processing (QSP) is a framework for implementing polynomial transformations of scalars that are block-encoded in a unitary operator.
In particular, QSP constructs \( 2 \times 2 \) unitary matrices whose entries are complex polynomials of a real scalar \( x \in [-1,1] \) by composing single-qubit rotations and phase gates.
It considers a single-qubit unitary operator that block-encodes \( x \in [-1,1] \) as follows:
\begin{equation}\label{x-encoding}
e^{i \arccos(x) \sigma_X}
=
\begin{pmatrix}
x & i\sqrt{1 - x^2} \\
i\sqrt{1 - x^2} & x
\end{pmatrix} 
, \quad \theta \in [0, \pi].
\end{equation}
For $\boldsymbol\varphi = (\varphi_0, \dots, \varphi_\ell)$, QSP implements the following quantum circuit:
\begin{equation}\label{qsp-circuit}
V_{\boldsymbol\varphi}(x) = e^{i\varphi_0 \sigma_Z} e^{i \arccos(x) \sigma_X} e^{i\varphi_1 \sigma_Z} e^{i \arccos(x) \sigma_X} \cdots e^{i \arccos(x) \sigma_X} e^{i\varphi_\ell \sigma_Z}.
\end{equation}
Polynomial functions realizable by \cref{qsp-circuit}  are completely characterized by the following result:
	\begin{prop}[Theorem 3 in \cite{gilyen2019qsvt}]
		\label{qsp-complex-poly}
		Let $\ell \in \mathbb{N}$ and $x \in [-1, 1]$.
		There exists $\boldsymbol \varphi = (\varphi_0, \varphi_1, \ldots, \varphi_\ell) \in \mathbb{R}^{\ell+1}$ such that
		\begin{equation}\label{equation:qsp-result}
			V_{\boldsymbol{\varphi}}(x) = e^{i\varphi_0\sigma_Z} \prod_{j=1}^{\ell} \left( e^{i \arccos(x) \sigma_X} e^{i\varphi_j\sigma_Z} \right)
			=
			\begin{pmatrix}
				p(x) & iq(x)\sqrt{1-x^2} \\
				iq^*(x)\sqrt{1-x^2} & p^*(x)
			\end{pmatrix}
		\end{equation}
		if and only if $p, q \in \mathbb{C}[x]$ such that:
		\begin{enumerate}
			\item[(i)] $\mathrm{deg}(p(x)) \leq \ell$ and $\mathrm{deg}(q(x)) \leq \ell - 1,$
			\item[(ii)] $p(x)$ has parity $\ell \mod 2$ and $q(x)$ has parity $\ell - 1 \mod 2$\footnote{
            A polynomial is said to have parity zero (one) if it is
            an even (odd) polynomial.
            },
			\item[(iii)] For all  $x \in [-1, 1]$, we have $|p(x)|^2 + (1 - x^2) |q(x)|^2 = 1$.
		\end{enumerate}
	\end{prop}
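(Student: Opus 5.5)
The plan is to prove both implications by induction on $\ell$, working with the $2\times 2$ matrices directly. Throughout, write $W(x) := e^{i\arccos(x)\sigma_X}$ as in \cref{x-encoding}, so that $W(x) = x I + i\sqrt{1-x^2}\,\sigma_X$.

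\textbf{Forward direction ($\Rightarrow$).} For $\ell = 0$ we have $V_{\boldsymbol\varphi} = e^{i\varphi_0\sigma_Z}$. This has the required form with $p = e^{i\varphi_0}$, a constant and hence even, and $q = 0$. For the inductive step, assume $V_{(\varphi_0,\dots,\varphi_{\ell-1})}$ has the form in \cref{equation:qsp-result} with polynomials $(p,q)$ satisfying (i)--(iii) for $\ell-1$. Multiply on the right by $W(x)\,e^{i\varphi_\ell\sigma_Z}$ and expand entrywise. Using $(\sqrt{1-x^2})^2 = 1-x^2$, the new upper-left entry is
\begin{equation*}
\tilde p = e^{i\varphi_\ell}\bigl(x\,p - (1-x^2)\,q\bigr),
\end{equation*}
and the new off-diagonal coefficient is
\begin{equation*}
\tilde q = e^{-i\varphi_\ell}\bigl(p + x\,q\bigr).
\end{equation*}
From these formulas the degree bounds increase by exactly one, and the parities flip, which gives (i) and (ii). The structure of the lower row (conjugates) is preserved because $\sigma_X$ and $\sigma_Z$ act compatibly with complex conjugation and the swap of the diagonal/off-diagonal entries. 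Condition (iii) is just $\det V_{\boldsymbol\varphi} = 1$, which holds because every factor has determinant one. So (iii) comes for free in this direction.

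\textbf{Converse direction ($\Leftarrow$).} This is the main work, again by induction on $\ell$. In the base case $\ell = 0$, $p$ is a constant with $|p| = 1$ and $q = 0$, so $p = e^{i\varphi_0}$. For $\ell \ge 1$, given $(p,q)$ satisfying (i)--(iii), form the unitary $U(x)$ with these entries. The goal is to choose $\varphi_\ell$ so that
\begin{equation*}
U(x)\, e^{-i\varphi_\ell\sigma_Z}\, W(x)^{-1}
\end{equation*}
again has the QSP form with degrees at most $\ell-1$ and $\ell-2$ and the appropriate parities; induction then finishes the argument. Inverting the update formulas above gives the candidate lower-degree polynomials:
\begin{align*}
p' &= x\,e^{-i\varphi_\ell}p + (1-x^2)\,e^{i\varphi_\ell}q,\\
q' &= x\,e^{i\varphi_\ell}q - e^{-i\varphi_\ell}p.
\end{align*}
Parity of $p'$ and $q'$ is automatic. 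The unit-determinant condition (iii) for $(p',q')$ is also automatic, since it is again a determinant.

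\textbf{Main obstacle: the degree reduction.} The crux is showing that a single choice of $\varphi_\ell$ cancels the top coefficients of $p'$ and $q'$ simultaneously. Let $a$ be the $x^\ell$ coefficient of $p$ and $b$ the $x^{\ell-1}$ coefficient of $q$. Comparing leading terms in (iii), the $x^{2\ell}$ coefficient of $|p|^2 + (1-x^2)|q|^2$ is $|a|^2 - |b|^2$, and it must vanish when $\ell \ge 1$. Hence $|a| = |b|$.
\begin{itemize}
\item If $a = b = 0$, the degrees are already lower, and we instead peel off a factor using the parity-shift freedom, for example choosing $\varphi_\ell$ arbitrarily and checking that the degree still drops.
\item Otherwise choose $\varphi_\ell$ with $e^{2i\varphi_\ell} = a/b$. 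Then the $x^{\ell+1}$ term of $p'$, namely $e^{-i\varphi_\ell}a - e^{i\varphi_\ell}b$, vanishes. The $x^{\ell}$ term of $q'$, namely $e^{i\varphi_\ell}b - e^{-i\varphi_\ell}a$, vanishes by the same choice.
\end{itemize}
Parity then kills the next-lower coefficient, so $\deg p' \le \ell-1$ and $\deg q' \le \ell-2$.

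The degenerate and low-degree edge cases need care. These are when $p$ or $q$ has degree strictly below its bound, and the case $\ell = 1$. I expect these to be the fiddliest part. I would handle them by noting that if $\deg p < \ell$, parity combined with (iii) forces $\deg q < \ell - 1$ as well, and then checking directly that any $\varphi_\ell$ preserves the degree bounds.
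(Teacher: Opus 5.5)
The paper gives no proof of its own here and simply cites Theorem~3 of Gily\'en et al.; your argument is the standard proof from that reference and is correct. In the forward direction you induct on $\ell$ with $\tilde p = e^{i\varphi_\ell}(xp-(1-x^2)q)$ and $\tilde q = e^{-i\varphi_\ell}(p+xq)$, reading off (iii) as $\det V_{\boldsymbol\varphi}(x)=1$. For the converse you peel off $W(x)e^{i\varphi_\ell\sigma_Z}$ with $e^{2i\varphi_\ell}=a/b$, which is justified by $|a|=|b|$ from the $x^{2\ell}$ coefficient of (iii). Your degenerate case also works as you describe: $a=b=0$ plus parity gives $\deg p\le \ell-2$ and $\deg q\le \ell-3$, so every choice of $\varphi_\ell$ yields $\deg p'\le \ell-1$ and $\deg q'\le \ell-2$. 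This case cannot occur at $\ell=1$, since (iii) would fail.
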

	


    \begin{example}\label{qsp-cheb}
        QSP is an important quantum circuit because it enables the implementation of various polynomial transformations. A simple choice of phase factors $(0, \dots, 0) \in \mathbb{R}^{\ell+1}$ ensures that $p(x) = T_{\ell}(x)$, where $T_{\ell}(x)$ is the degree-$\ell$ Chebyshev polynomial of the first kind. 
        We will use this result below.
        This observation has also recently been used to advocate QSP as a framework for function approximation (see, e.g., \cite{aftab2024korobov, aftab2025gpcpreprint}).
    \end{example}

\subsubsection{Linear Combination of Unitaries}\label{lcu} 
The Linear Combination of Unitaries (LCU) algorithm~\cite{Childs2012lcu} provides a framework for implementing a linear combination of unitary operators on a quantum computer.
Let \(M \in \mathbb{N}\), and let \(\alpha_1, \ldots, \alpha_M \in \mathbb{R}\) and \(U_1, \ldots, U_M\) be unitary operators.
The LCU method aims to implement the operator $U = \sum_{j=1}^{M} \alpha_j U_j$.
The algorithm assumes that the following two operators can be implemented. First, it assumes the existence of a state preparation (\(\operatorname{PREP}\)) oracle:
\begin{align}\label{F-LCU}
    \operatorname{PREP} \ket{0}  = \frac{1}{\sqrt{\| \vec \alpha \|_1 }} \sum_{j=1}^{M} \sqrt{\alpha_{j}} \ket{j},
\end{align}
where $\sqrt{\cdot}$ denotes the principal branch of the square root.
Second, it assumes that a multi-qubit controlled gate, also called a select (\(\operatorname{SEL}\)) oracle in the literature, can be implemented:
\begin{equation}\label{controlled-LCU}	
\operatorname{SEL} = \sum_{j=1}^{M} U_j \otimes \ket{j}\bra{j}.\end{equation}
Assuming access to the required operators, the LCU algorithm implements the unitary operator $U_{\text{LCU}} = \operatorname{PREP}^{\dagger} \cdot \operatorname{SEL} \cdot \operatorname{PREP}$.
See \cref{figure:lcu} for the corresponding quantum circuit. 
If \(\ket{\psi}\) denotes an input state, \(\ket{0}^{\log M}\) represents \(\mathcal{O}(\log M)\) ancilla qubits, we have
\begin{equation} U_{\text{LCU}} \ket{\psi} \ket{0}^{\log M} = \frac{1}{\|\vec \alpha\|_1} \left(\sum_{j=1}^{M} \alpha_j U_j\right) \ket{\psi} \ket{0}^{\log M} + \ket{\perp}, \end{equation}
where \(\ket{\perp}\) denotes a potentially non-normalized state satisfying
\((I_{2^k} \otimes \ket{0}^{\log M}\bra{0}^{\log M}) \ket{\perp} = 0\).
Note that the LCU algorithm implements a \((\|\vec \alpha\|_1, \log M, 0)\)-block-encoding of \(\sum_{j=1}^{M} \alpha_j U_j\).
\begin{figure}[h]
\centering
\begin{quantikz}[wire types={b,b},classical gap=0.07cm]
			\lstick{$\ket{\psi} \; \; $} &   \qwbundle{k} & &  & \gate[2]{\operatorname{SEL}} &  & \\
			\lstick{$\ket{0} \; \; $} & \qwbundle{\log M} &   & \gate{\operatorname{PREP}} &  & \gate{\operatorname{PREP}^\dagger} &  
		\end{quantikz}
	\caption{Quantum circuit $U_{\text{LCU}}$ implementing the LCU algorithm.
	}
	\label{figure:lcu}
\end{figure}

\section{Classical Algorithm for Non-Uniform Discrete Fourier Transform}\label{townsend-alg}
In this section, we review a classical algorithm for the nonuniform discrete Fourier transform based on a low-rank matrix factorization approach, as presented in~\cite{ruiz2018nonuniform}.

\subsection{Type--II NUDFT}
We first present the algorithm for the Type--II NUDFT, which assumes uniform frequency points given by $\omega_k = k$ for $k = 0, \ldots, N-1$.

\subsubsection{Perturbed Equispaced Grid Setting}
We begin by describing the algorithm in the special case where the nonuniform grid points are obtained by perturbing an equispaced grid. 
WLOG, let $\{ t_j \}_{j=0}^{N-1} \subseteq [0,1)$ such that\footnote{We can compensate for this by introducing a factor of $2 \pi$ in the complex exponential.}
\begin{equation}
\left| t_j - \frac{j}{N} \right| \leq \frac{\gamma}{N}, 
\end{equation}
for some $\gamma \in [0,1/2]$. The simple but important observation is that the entries of ${F}_{\operatorname{II}}$ can be written as
\begin{equation}
({F}_{\operatorname{II}})_{jk} = e^{-2\pi i t_j k} = e^{-2\pi i (t_j - \frac{j}{N}) k}  e^{-2\pi i \frac{j k}{N}},
\end{equation}
for $0 \leq j, k \leq N - 1$.
This observation leads to the matrix decomposition 
\begin{equation}
F_{\operatorname{II}} = A \circ F,
\end{equation}
where \(F\) is the DFT matrix, and the entries of \(A\) are given by \(A_{jk} = e^{-2\pi i (t_j - j/N ) k}\). 
The matrix $A$ is defined by sampling the bivariate function $(x,y) \mapsto e^{-2 \pi i x y}$ over the domain $[-\gamma/N, \gamma/N] \times [0, N]$. 
As shown in \cite{ruiz2018nonuniform}, the construction of a low-rank factorization of $A$ is based on approximating this function by a bivariate polynomial of degree $(K-1)$, which can then be extended to obtain a rank-$K$ approximation of $A$. 
This polynomial approximation is derived from a truncated Chebyshev expansion of $e^{-i x y}$ \cite[Appendix A]{ruiz2018nonuniform}. 
As a result, the resulting rank-$K$ approximation of $A$ takes the form
\begin{equation}\label{cheb-approx}
A_K = \sum_{r=0}^{K-1}{}^{'} \underbrace{\left[ \sum_{q=0}^{K-1}{}^{'} \alpha_{qr} \left( \exp(-i \pi N (\vec t - \vec e)) \circ T_p\left( \frac{N( \vec t - \vec e)}{\gamma} \right) \right) \right]}_{\vec u_r}  \underbrace{T_r \left(  \frac{ 2 \vec{\omega}^\top}{N} - \vec{1}^\top \right)}_{\begin{cases}
2 \vec v^\top_r & r=0 \\
\vec v^\top_r & r \geq 1.
\end{cases}} 
\end{equation}
Here \(\vec{e} = (0, \tfrac{1}{N}, \ldots, \tfrac{N-1}{N})^\top\), \(\vec{\omega} = (0, 1, \ldots, N-1)^\top\), and the primes on the summands indicate that the first term should be halved. 
The coefficients \( \alpha_{qr} \) are given by
\begin{equation}\label{a-coeffs}
\alpha_{qr} =
\begin{cases}
4 i^r  J_{\frac{q + r}{2}}\left(-\gamma \pi/2 \right)  J_{\frac{r - q}{2}}\left(- \gamma \pi/2 \right), & \text{if } \operatorname{mod}(|q - r|, 2) = 0, \\
0, & \text{otherwise}
\end{cases}
\end{equation}
where \( J_\nu(\cdot) \) denotes the Bessel function of the first kind of order \( \nu \). In order to eliminate the primes in the summand, we define a new quantity $\alpha'_{qr}$, which will replace $\alpha_{qr}$ in subsequent expressions.
\begin{equation}\label{alpha-prime-coeffs}
\alpha'_{qr} = 
\begin{cases}
   \frac{1}{4} \alpha_{qr} & \text{if } q=r=0, \\
   \frac{1}{2} \alpha_{qr} & 
   \text{if } q=0, r \neq 0
   \text{ or } r=0, q \neq 0,
   \\
   \alpha_{qr} & \text{otherwise}
\end{cases}
\end{equation}
The analysis in \cite{ruiz2018nonuniform} shows that it suffices to choose $K = \mathcal{O}\Big( \frac{\log(1/\epsilon'_{\operatorname{trunc}})}{\log\log(1/\epsilon'_{\operatorname{trunc}})} \Big)$ to ensure that the approximation satisfies $\|A - A_K\|_{\operatorname{max}} \leq \epsilon'_{\operatorname{trunc}}$ for some $\epsilon'_{\operatorname{trunc}} > 0$. Consequently, the matrix $F_{\operatorname{II}}$ can then be approximated as
\begin{equation}\label{approx-FII}
{F}_{\operatorname{II}} = (A \circ F) \approx ( A_K \circ  F) = \sum_{r=0}^{K-1} D_{\vec u_r} F D_{\vec v_r},
\end{equation}
where \( D_{\vec u_r} = \mathrm{diag}(\vec{u}_r) \) and \( D_{\vec v_r} = \mathrm{diag}(\vec{v}_r) \). 
The last equality follows from the identity
$(\vec u_r \vec v_r^{T}) \circ F = D_{\vec u_r}  F  D_{\vec v_r}$.
Since the complexity of applying \(F\) is \(\mathcal{O}(N \log N)\), the overall complexity of implementing the Type--II NUDFT in the special case where the non-uniform grid points \( \{ t_j \}_{j=0}^{N-1} \subseteq [0,1) \) are generated by perturbing an equispaced grid is
\begin{equation}\label{nudft-complexity}
\mathcal{O}\left(  \frac{N \log N \log(1/\epsilon'_{\operatorname{trunc}})}{ \log \log(1/\epsilon'_{\operatorname{trunc}})}\right).
\end{equation}

\subsubsection{General Nonuniform Node Distribution}
We now discuss how to implement the Type--II NUDFT when the nodes $\{ t_j \}_{j=0}^{N-1} \subseteq [0,1)$ are distributed arbitrarily.
The key idea is to reduce the problem to the equispaced case considered previously by identifying a sequence
$\{ s_j \}_{j=0}^{N-1} \subseteq \{0, \ldots, N-1\}$ such that
\begin{equation}\label{nearest-point-function}
s_j 
=
\begin{cases}
    m & \text{if } | t_j - \frac{m}{N} | \leq 1/2N \text{ and } m \neq N, \\
    0 & \text{if } | t_j - 1 | \leq 1/2N.
\end{cases}
\end{equation}
Note that \( s_n/N \) is the closest point to $t_j$ on the equidistant grid \(  \{ 0, 1/N, \ldots, (N-1)/N\} \subseteq [0,1) \) We have
\begin{equation}
(F_{\operatorname{II}})_{j,k} = e^{-2\pi i t_j k} = e^{-2\pi i (t_j - \frac{s_j}{N})k} e^{-2\pi i \frac{s_j k}{N}}.
\end{equation}
We obtain the matrix decomposition 
\begin{equation}
F_{\operatorname{II}} 
= B \circ F_{\vec{s}}
,
\end{equation}
where \( (F_{\vec{s}})_{j,k} = e^{-2\pi i s_j k/N} \), and \( B_{jk} = e^{-2\pi i (t_j - s_j/N) k} \). 
Note that $(F_{\vec{s}})_{j,k}$  is the \((k,s_j)\)-th entry of the DFT matrix, $F$. As before, the matrix $B$ admits a rank-$K$ approximation via a truncated bivariate Chebyshev approximation applied to each entry. 
The resulting rank-$K$ approximation of $B$ is then given by
\begin{equation}\label{cheb-approx-general}
 B_K = \sum_{r=0}^{K-1}
 \underbrace{\left[ \sum_{q=0}^{K-1} \alpha'_{qr} \left( \exp(-i \pi N (\vec t - \vec s/N)) \circ T_q\left( 2 N( \vec t - \vec s / N ) \right) \right) \right]}_{\vec u_r}  \underbrace{T_r \left(  \frac{ 2 \vec{\omega}^\top}{N} - \vec{1}^\top \right)}_{\begin{cases}
2 \vec v^\top_r & r=0 \\
\vec v^\top_r & r \geq 1
\end{cases}}.
\end{equation}
Here $\vec{s} = (s_0, s_1, \dots, s_{N-1})$ and $\vec{\omega} = (0, 1, \dots, N-1)$. The coefficient $\alpha'_{qr} \propto \alpha_{qr}$ in \cref{a-coeffs} is defined with the choice $\gamma = \frac{1}{2}$.
Note that $F_{\vec{s}} = F M_{\vec s}$, 
where 
\begin{equation}\label{coeff-s-matrix}
(M_{\vec s})_{ij} =
\begin{cases}
1, & \text{if } i = s_j, \\
0, & \text{otherwise},
\end{cases}
\end{equation}
for $0 \leq i,j \leq N-1$.
Consequently, $F_{\operatorname{II}}$ in the general case can then be approximated as
\begin{equation}\label{approx-FII-general}
{F}_{\operatorname{II}} = (B \circ F) \approx ( B_K \circ  F) = \sum_{r=0}^{K-1} D_{\vec u_r} F_{\vec s} D_{\vec v_r}
= \sum_{r=0}^{K-1} D_{\vec u_r} F M_{\vec s} D_{\vec v_r}.
\end{equation}
Once again, it suffices to choose $K = \mathcal{O}\left( \frac{\log(1/\epsilon'_{\operatorname{trunc}})}{\log\log(1/\epsilon'_{\operatorname{trunc}})} \right)$ to guarantee that $\|B - B_K\|_{\operatorname{max}} \le \epsilon'_{\operatorname{trunc}}$. The overall computational complexity of the algorithm remains as stated in~\cref{nudft-complexity}. 

\subsection{Reduction to Type--II NUDFT}
Algorithms for Type--I and Type--III NUDFTs can now be derived from those of the Type--II NUDFT.
\begin{enumerate}
\item[(I)] \textbf{Type--I NUDFT}.  
The Type-I NUDFT can be efficiently implemented using the Type-II NUDFT. In particular, the entries of the matrix $F_{\operatorname{I}}$ satisfy
\begin{equation}
(F_{\operatorname{I}})_{jk} = e^{-2\pi i \frac{j}{N} \omega_k} = e^{-2\pi i \frac{\omega_k}{N} j} = (F_{\operatorname{II}})_{kj},
\end{equation}
which shows that \( F_{\operatorname{I}} \) is the transpose of \( F_{\operatorname{II}} \), where now $\frac{\omega_0}{N}, \ldots, \frac{\omega_{N-1}}{N}$ assume the role of non-equispaced sample points in a Type-I NUDFT. Hence, $F_{\operatorname{I}}$ can be approximated as
\begin{equation}
{F}_{\operatorname{I}} 
=
{F}^T_{\operatorname{II}} 
\approx
\sum_{r=0}^{K-1} (D_{\vec u_r} F M_{\vec s} D_{\vec v_r})^{T}
=
\sum_{r=0}^{K-1} D_{\vec v_r} M^{T}_{\vec s} F^T D_{\vec u_r}
.
\end{equation}
Note that 
$F^{T}$ can be efficiently computed using the inverse FFT.
Therefore, the Type-I NUDFT may be realized via the Type-II NUDFT, inheriting the same computational complexity.

\item[(III)] \textbf{Type--III NUDFT}. 
The Type-III NUDFT can be also efficiently implemented using the Type-II NUDFT. It is shown in~\cite{ruiz2018nonuniform} the entries of the matrix $F_{\operatorname{III}}$ satisfy
\begin{equation}
(F_{\operatorname{III}})
= A \circ B \circ (F M_{\vec{s}})
:=A \circ B \circ (F_{\operatorname{I}})_{\vec{s}},
\end{equation}
where $A_{jk}= e^{-2\pi i (t_j - s_j/N)\omega_k}$, $B$ is a matrix of rank at most $2$ and $(F_{\operatorname{I}})_{\vec{s}}$ is obtained
extracting the columns indexed by $(s_0, \cdots , s_{N-1})$ from $F_{\operatorname{I}}$.
Therefore, the Type-III NUDFT may be realized via the Type-II NUDFT, inheriting the same computational complexity.
\end{enumerate}

\section{Quantum Algorithm for Non-Uniform Discrete Fourier Transform}\label{cheb-nuqfft} 
We now present a quantum algorithm for the Non-Uniform Quantum Fourier Transform (NUQFT) based on the low-rank factorization framework reviewed in \cref{townsend-alg}. Throughout, we assume that $N = 2^n$ for some $n \geq 1$ and restrict attention to the Type--II NUQFT. The overall approach may be summarized as follows:
\begin{enumerate}
    \item \textbf{Construction of $U_{\vec{v}_r}$.}
    We first construct quantum circuits that encode the coefficient vectors $\vec{v}_r$,
    defined as follows:
    \begin{equation}
    (\vec{v}_r)_j =
    \begin{cases}
        2^{-1} T_0(2 j / N - 1), & r = 0, \\
        T_r(2 j / N - 1), & r \ge 1.
    \end{cases}
    \end{equation}
    For each $r$, we construct a unitary operator $U_{\vec{v}_r}$ that encodes the entries of $\vec{v}_r$ as amplitudes of a quantum state.

    \item \textbf{Construction of $U_{\vec{u}_r}$.}
    We next construct quantum circuits that encode the coefficient vectors $\vec{u}_r$,
    defined as follows:
    \begin{equation}\label{ur-construct}
    (\vec{u}_r)_j = \sum_{q=0}^{K-1} \alpha'_{qr}\,
    e^{-i \pi N (t_j - s_j / N)}\,
    T_q (2 N (t_j - s_j / N) ).
    \end{equation}
    For each $r$, we construct a unitary operator $U_{\vec{u}_r}$ that encodes the entries of $\vec{u}_r$ as amplitudes of a quantum state.

    \item \textbf{Block-encoding Operators.}
    Using $U_{\vec{v}_r}$ and $U_{\vec{u}_r}$, we implement block encodings of the diagonal matrices
    $D_{\vec{v}_r}$ and $D_{\vec{u}_r}$. These are combined with a standard block encoding of the Fourier
    transform and the matrix $M_{\vec{s}}$ defined in \cref{coeff-s-matrix}.

    \item \textbf{Assembly via LCU.}
    Finally, we employ the Linear Combination of Unitaries (LCU) algorithm to combine the block-encoded
    matrix products, thereby implementing a block encoding of the Type--II non-uniform discrete Fourier transform.
\end{enumerate}

\begin{remark}
In what follows, a quantum register denotes a collection of qubits prepared in superposition over computational basis states. Ancilla qubits serve as auxiliary qubits for intermediate operations within a circuit, while a computational qubit is the qubit in which the final result is stored. Any deviation from this convention will be stated explicitly.
\end{remark}


\subsection{Construction of $U_{v_r}$}\label{4.1}
We first construct \(U_{\vec{v}_r}\). The case \(r=0\) is particularly simple, since \((\vec{v}_0)_{j,j} = \tfrac{1}{2}\) because \(T_0(x) \equiv 1\).
We prepare a quantum register encoding \(j\) in uniform superposition, together with an ancilla qubit whose \(\ket{0}\) amplitude represents the constant value \(\tfrac{1}{2}\) for all \(j\) in the superposition. 
The corresponding quantum circuit, denoted \(U_{\vec{v}_0}\), can be implemented directly.

\begin{prop}
Let $n \in \mathbb{N}$.  
The unitary operator $U_{\vec v_0}$ can be implemented using
$\mathcal{O}(n)$ Hadamard gates and $\mathcal{O}(1)$ single-qubit rotation gates.
\end{prop}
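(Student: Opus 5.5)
The plan is to write down an explicit circuit for $U_{\vec v_0}$ and then count its gates. The circuit acts on an $n$-qubit index register and one ancilla qubit $a$. I would first fix the target. $U_{\vec v_0}$ should map $\ket{0}^{\otimes n}\ket{0}_a$ to a state whose $\ket{0}_a$ branch is $\frac{1}{2}\cdot\frac{1}{\sqrt N}\sum_{j=0}^{N-1}\ket{j}$. Equivalently, the $\ket{0}_a$-block of the ancilla action should be the scalar $\tfrac12$, so that it acts as a $(1,1,0)$-block-encoding of $D_{\vec v_0} = \tfrac12 I_{2^n}$.

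The construction is the tensor product $U_{\vec v_0} = H^{\otimes n}\otimes R_Y(\theta_0)$, and the steps are as follows.
\begin{enumerate}
\item Apply $H$ to each of the $n$ index qubits. This gives $H^{\otimes n}\ket{0}^{\otimes n} = \frac{1}{\sqrt N}\sum_{j}\ket{j}$, the uniform superposition over $j$.
\item Apply a single rotation $R_Y(\theta_0) = e^{-i\theta_0\sigma_Y}$ to the ancilla, with $\cos\theta_0 = \tfrac12$, that is, $\theta_0 = \pi/3$.
\end{enumerate}

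Using the matrix of $e^{-i\theta\sigma_Y}$ from the notation section, the second step gives
\begin{equation}
R_Y(\theta_0)\ket{0} = \cos\theta_0\ket{0} + \sin\theta_0\ket{1} = \tfrac12\ket{0} + \tfrac{\sqrt3}{2}\ket{1}.
\end{equation}
Hence the amplitude of $\ket{j}\ket{0}_a$ is $\frac{1}{2\sqrt N}$ for every $j$, which is the entry $(\vec v_0)_j = \tfrac12 T_0(2j/N-1) = \tfrac12$ up to the uniform normalization $1/\sqrt N$. Moreover, $(\bra{0}_a\otimes I)\,(I\otimes R_Y(\theta_0))\,(\ket{0}_a\otimes I) = \tfrac12 I$. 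This verifies the block-encoding form, with exact error $0$.

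For the gate count, step 1 uses exactly $n = \mathcal{O}(n)$ Hadamard gates and step 2 uses a single rotation, which is $\mathcal{O}(1)$. No controlled gates or arithmetic are required, because the encoded value does not depend on $j$. This proves the stated bound.

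There is essentially no obstacle. The only point needing care is the sign and angle convention: $R_Y(\theta) = e^{-i\theta\sigma_Y}$ has entries $\cos\theta$ and $\sin\theta$ rather than half-angles, so the angle must be $\theta_0 = \pi/3$ and not $2\pi/3$. Alternatively, one could use the QSP-style encoding $e^{i\arccos(1/2)\sigma_X}$ from \cref{x-encoding}, whose top-left entry is $\tfrac12$ and which is again a single rotation.
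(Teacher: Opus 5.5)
Your proposal is correct and matches the paper's construction: $n$ Hadamard gates on the index register, plus a single ancilla rotation whose $\ket{0}$-amplitude is $\tfrac12$. The only difference is the rotation axis. The paper uses $e^{i\pi/3\,\sigma_X}$, which gives $\tfrac12\ket{0}+\tfrac{i\sqrt{3}}{2}\ket{1}$; this is exactly the QSP-style alternative $e^{i\arccos(1/2)\sigma_X}$ you mention at the end, so both choices give the same gate count.
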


\begin{proof}
Prepare an $n$-qubit register in uniform superposition over all computational basis states using $n$ Hadamard gates.  
Next, prepare a single computational qubit in the state
$\frac{1}{2}\ket{0} + \frac{i \sqrt{3}}{2}\ket{1}$
using a single-qubit rotation of the form $e^{i \pi/3  \sigma_X}$.  
This defines the unitary $U_{\vec v_0}$. The corresponding pseudocode is given in~\cref{alg:u-v0}.
\end{proof}

\begin{algorithm}
\caption{Preparation of $U_{\vec{v}_0}$}
\KwIn{Integer $n \geq 1$}
\KwOut{Quantum state with $U_{\vec{v}_0}$ applied in superposition}
Initialize $(n+1)$ qubits $\ket{0} \ket{0}^n$\;
\For{$i \gets 2$ \KwTo $n+1$}{
    Apply Hadamard gate $H$ to qubit $i$ in $\ket{0}^n$\;
}
Apply single-qubit rotation $e^{i \pi/3 \sigma_X}$ to the computational qubit\;
\Return{$
\mathrm{Quantum \; state \; } 
U_{\vec{v}_0} \ket{0} \ket{0}^n = \frac{1}{\sqrt{2^{n}}} \sum_j ( \frac{1}{2}\ket{0} + \frac{i\sqrt{3}}{2}\ket{1}) \ket{j}
$}
\label{alg:u-v0}
\end{algorithm}

Before proceeding, we first prove a lemma on how to compute $1-x$ on a quantum computer, where $x \in (0,1)$.

\begin{lem}\label{lem:u-neg}
Let $k \in \mathbb{N}$ and let $x \in (0,1)$ be stored in binary in a $k$-qubit computational register.  
There exists a quantum circuit $U_{\mathrm{neg}}$ acting on the $k$-qubit register such that
$U_{\mathrm{neg}}\ket{x}  = \ket{1-x}$.
Moreover, $U_{\mathrm{neg}}$ can be implemented using $\mathcal{O}(k)\; \sigma_X$ gates, $\mathcal{O}(k)$ CNOT gates,
and $\mathcal{O}(k)$ Toffoli gates. 
\end{lem}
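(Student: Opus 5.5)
We reduce the claim to two's complement arithmetic on the $k$-bit fixed-point register. Write $x = \sum_{j=1}^{k} x^{(-j)} 2^{-j}$, a purely fractional number; this is possible because $x\in(0,1)$. Let $X = 2^k x \in \{1,\dots,2^k-1\}$ be the integer stored in the register. Then $1-x$ corresponds to the integer $2^k - X$, which again lies in $\{1,\dots,2^k-1\}$. Hence $1-x$ is exactly representable in the same $k$-qubit register with no overflow or sign bit.

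The key identity is
\begin{equation*}
2^k - X = \bigl((2^k-1) - X\bigr) + 1 = \overline{X} + 1,
\end{equation*}
where $\overline{X}$ is the bitwise complement of $X$. The circuit therefore has two stages.

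\textbf{Step 1.} Apply $\sigma_X$ to each of the $k$ qubits. This maps $\ket{X}$ to $\ket{\overline{X}}$ and uses $k$ $\sigma_X$ gates.

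\textbf{Step 2.} Apply a reversible increment-by-one circuit, $\ket{Y}\mapsto\ket{Y+1 \bmod 2^k}$. Because $\overline{X} \le 2^k-2$, the modular reduction never triggers on valid inputs, so the output is exactly $\ket{2^k - X} = \ket{1-x}$.

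The composite map is a product of unitaries, so it is unitary on all of $\mathbb{C}^{2^k}$. On valid inputs it acts as stated.

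For the increment I would avoid the naive cascade of multi-controlled NOTs. That approach flips bit $j$ controlled on all lower bits and, decomposed without ancillas, costs $\mathcal{O}(k^2)$ gates. Instead I would use a ripple-carry adder with linear cost. One option is the Cuccaro et al. adder cited in \cref{quant-arithm}: add the constant $1$, loaded into a temporary $k$-qubit register via one $\sigma_X$, then uncompute it. Another is a direct ripple construction that uses $\mathcal{O}(k)$ ancillas to hold the prefix ANDs $c_j = x^{(-k)}\wedge\cdots$. These are computed with a chain of Toffolis, applied to the target bits with CNOTs, and then uncomputed in reverse order. Either way the cost is $\mathcal{O}(k)$ CNOT, Toffoli and $\sigma_X$ gates, and all ancillas are returned to $\ket{0}$.

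Combining the two steps gives the stated counts. The main obstacle is the careful uncomputation in the increment. The carry bits depend on the \emph{old} register values, which are being overwritten. I would handle this in the standard way: compute all carries first, flip the targets from the most significant bit downward, and then uncompute the carries using the relation between the new and old bits. Alternatively, I would simply invoke the linear-size adder of \cref{quant-arithm} as a black box.
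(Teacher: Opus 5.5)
Your proposal is correct and follows the paper's proof. The paper also applies $\sigma_X$ to every qubit to obtain $\bar{x} = 1 - x - 2^{-k}$, then adds $2^{-k}$ (an increment of the underlying integer) using the linear-cost adder $U_{\mathrm{AM}}$. Your extra details fill in points the paper leaves implicit: that $\overline{X}\le 2^k-2$, so no overflow occurs, and how to realize the increment with $\mathcal{O}(k)$ Toffolis while uncomputing the carries correctly.
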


\begin{proof}
Let $x = 0.x_1 x_2 \dots x_k$ be a $k$-bit fractional number expressed in binary, so that $x = \sum_{i=1}^{k} x_i  2^{-i}$. We define the bitwise complement of $x$ as $\bar{x} = 0.\bar{x}_1 \bar{x}_2 \dots \bar{x}_k$, where $\bar{x}_i = 1 - x_i$ for each $i = 1, \dots, k$. Then
\begin{equation}
\bar{x} = \sum_{i=1}^{k} \bar{x_i} \cdot 2^{-i} = \sum_{i=1}^{k} (1 - x_i) 2^{-i} = \sum_{i=1}^{k} 2^{-i} - \sum_{i=1}^{k} x_i 2^{-i} = 1 - 2^{-k} - \sum_{i=1}^{k} x_i 2^{-i}
= 1 - x - 2^{-k}.
\end{equation}
$\bar{x}$ can be computed by applying the $\sigma_X$ gate to each of the $k$ qubits. It remains to add $2^{-k}$. This is a standard application of binary arithmetic and can be implemented by invoking the oracle $U_{\operatorname{AM}}$ using $\mathcal{O}(k)$ CNOT and Toffoli gates. The resulting quantum circuit is denoted by $U_{\operatorname{neg}}$.
\end{proof}

We now construct \(U_{\vec{v}_r}\) for \(r \geq 1\). 
In \cref{ur1}, a parameter \(p \in \mathbb{N}\) is selected to facilitate the approximation of \(\arccos(\cdot)\) on a quantum computer to arbitrary precision. 
This parameter is chosen freely at this stage, and will be fixed to a specific value later when required.
\begin{prop}\label{ur1}
Let $p, r, n \in \mathbb{N}$. $U_{\vec{v}_r}$ can be implemented using 
$\mathcal{O}(n)$ Hadamard gates, $\mathcal{O}(n+p^2)$ $\mathrm{CNOT}$ gates, $\mathcal{O}(n) \; \sigma_X$ gates and $\mathcal{O}(p)$ single-qubit controlled rotation gates generated by $\sigma_X$.
The corresponding quantum circuit has depth $\mathcal{O}(p\log p)$.
\end{prop}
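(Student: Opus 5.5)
We need an $(n+1)$-qubit state whose $\ket{0}$-ancilla amplitude on $\ket{j}$ is $T_r(2j/N-1)$, up to the uniform $2^{-n/2}$ normalization. The approach combines three ingredients: the identity $T_r(\cos\theta)=\cos(r\theta)$; a reversible arccos computed via the arcsine CORDIC circuit of Burge et al.; and the fact that a rotation $e^{i r\theta\sigma_X}$ has $\ket{0}\!\to\!\ket{0}$ amplitude $\cos(r\theta)$. Equivalently, by \cref{qsp-cheb}, $r$ repetitions of $e^{i\arccos(x)\sigma_X}$ with zero phases give $T_r(x)$. The whole cost then reduces to one arcsine evaluation plus $p$ controlled rotations.

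The steps, in order, are as follows.

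\begin{enumerate}
\item Apply $H^{\otimes n}$ to the index register to get $2^{-n/2}\sum_j\ket{j}$; this uses $\mathcal{O}(n)$ Hadamard gates.
\item Compute $x_j=2j/N-1$ reversibly.
 \begin{itemize}
 \item Read $j/N=0.j_{n-1}\cdots j_0$ as a fractional binary number.
 \item Multiplying by $2$ is only a relabelling of the binary point. Subtracting $1$ amounts to flipping the sign/integer bit, with magnitude $|x_j|$ obtained from the complement, using \cref{lem:u-neg} applied to $j/N$ or $1-j/N$ as appropriate.
 \item Copy the relevant bits into a work register with CNOTs, so the index register is untouched and can be uncomputed later.
 \item This costs $\mathcal{O}(n)$ $\sigma_X$, CNOT and Toffoli gates. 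Since the stated count lists no Toffolis, I would absorb the $2^{-n}$ carry into a fixed wiring or CNOT-based increment. Alternatively, I would note that $x_j$ is directly representable by sign-magnitude bits obtained by CNOT copies plus $\sigma_X$ flips of $j$ conditioned on the top bit.
 \end{itemize}
\item Compute $\theta_j/\pi$ to $p$ bits, where $\theta_j=\arccos(x_j)=\pi/2-\arcsin(x_j)$, into a $p$-qubit ancilla.
 \begin{itemize}
 \item Use the quantum CORDIC arcsine of Burge et al., which costs $\mathcal{O}(p^2)$ CNOTs and depth $\mathcal{O}(p\log p)$.
 \item The shift by $\pi/2$ and the sign are absorbed into the rotation angles in the next step (a fixed extra rotation $e^{i r\pi/2\,\sigma_X}$ and sign-controlled conventions) rather than by additional arithmetic.
 \end{itemize}
\item Write the $p$-bit register as $\theta_j=\sum_{\ell}b_\ell\,c_\ell$ with known constants $c_\ell$. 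For each bit $b_\ell$, apply a controlled rotation $e^{i r c_\ell\sigma_X}$ targeting the computational qubit.
 \begin{itemize}
 \item The product of these rotations is $e^{i r\theta_j\sigma_X}$, since all factors commute.
 \item The $\ket{0}$-amplitude is therefore $\cos(r\theta_j)=T_r(x_j)$.
 \item This step uses $\mathcal{O}(p)$ controlled $\sigma_X$-rotations.
 \end{itemize}
\item Uncompute the CORDIC and arithmetic registers by running them in reverse. This doubles the counts and leaves them asymptotically unchanged. The result is a state whose $\ket{0}$-branch is $2^{-n/2}\sum_j T_r(2j/N-1)\ket{j}$, which defines $U_{\vec v_r}$.
\end{enumerate}

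For the totals, the Hadamard count is $\mathcal{O}(n)$ and the $\sigma_X$ count is $\mathcal{O}(n)$. The CNOT count is $\mathcal{O}(n)$ from copying and negation plus $\mathcal{O}(p^2)$ from CORDIC. The rotation count is $\mathcal{O}(p)$. Depth is dominated by CORDIC at $\mathcal{O}(p\log p)$: the $\mathcal{O}(n)$ bitwise operations act in parallel with constant depth, and the $p$ rotations add only $\mathcal{O}(p)$.

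The main obstacle is step 2 together with the interface to CORDIC. The arcsine input must be in the format Burge et al. expect (presumably a $p$-bit signed fraction in $[-1,1]$), and getting it there must use only $\mathcal{O}(n)$ gates without Toffolis and without spoiling the depth bound. I would first try the sign-magnitude trick: the top bit of $j$ is the sign of $x_j$, and the magnitude is either the remaining bits or their complement. That gives $\mathcal{O}(n)$ CNOTs and $\sigma_X$ gates, plus an $\mathcal{O}(2^{-n})$ offset that can be tolerated or fixed by padding/truncating to $p$ bits, with the error deferred to the later error analysis.
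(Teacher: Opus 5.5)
Your proposal is correct and follows the paper's proof essentially step for step: Hadamards on the index register, then CNOT copies of $j_0,\dots,j_{n-2}$ together with a $U_{\mathrm{neg}}$ from \cref{lem:u-neg} controlled on $j_{n-1}=0$ to form $x_j$, then Burge et al.'s CORDIC arcsine converted to $\arccos$ via $\pi/2-\arcsin$, then $\mathcal{O}(p)$ bit-controlled $\sigma_X$-rotations multiplying to $e^{ir\widehat{\theta}_j\sigma_X}$ (top-left entry $\cos(r\widehat{\theta}_j)\approx T_r(x_j)$), and finally uncomputation; placing the $\pi/2$ shift in one fixed rotation instead of inside $U_{\arccos}$ is immaterial. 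The missing Toffolis you noticed are an omission in the statement itself, since the paper's controlled $U_{\mathrm{neg}}$ costs $\mathcal{O}(n)$ Toffolis it never lists; of your workarounds, an increment on three or more bits cannot be built from CNOT and $\sigma_X$ gates alone (it is not affine over $\mathbb{F}_2$), and leaving the one's-complement offset $2^{-(n-1)}$ uncorrected gives an error in $T_r(x_j)$ that does not shrink with $p$, but padding the magnitude to $\max(p,n-1)$ bits before complementing leaves an offset of at most $2^{-p}$, hence an error at most $r^2 2^{-p}$ because $|T_r'|\le r^2$ on $[-1,1]$, which is a valid Toffoli-free variant within the stated counts.
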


\begin{proof}
Initialize two quantum registers of $n$ qubits, $\mathcal{O}(p)$ ancilla qubits, and one computational qubit, all initialized to $\ket{0}$.
We index the first quantum register of $n$ qubits from $0$ to $n-1$. 
Apply Hadamard gates to the first quantum register to construct the state $\frac{1}{\sqrt{2^n}} \sum_{j=0}^{2^n-1} \ket 0 \ket{j} \ket{0}^{n} \ket{0}^{p}$. Let
\begin{equation}\label{xj-comp}
x_j 
:= \frac{2j}{N} - 1 
=\frac{\sum_{k=0}^{n-1} j_k 2^{k}}{2^{n-1}} - 1
=\sum_{k=0}^{n-1} j_k 2^{k-n+1} - 1
=\sum_{k=0}^{n-2} j_k 2^{k-(n-1)} + j_{n-1} -1 \in [-1,1).
\end{equation} 
We first represent \(x_j\) in a fixed-point binary format. 
In what follows, we index the qubits in the second quantum register from \(1\) to \(n\). 
The construction proceeds as follows:
\begin{enumerate}
    \item For $k = 0$ to $n-2$, apply a CNOT gate with the control on the $k$-th qubit of the first quantum register and the target on the $m$-th qubit of the second quantum register, where $m + k = n - 1$. This step ensures that the first $n-1$ qubits in the second quantum register encode the fraction $\sum_{k=0}^{n-2} j_k 2^{k-n+1}$.

    \item If $j_{n-1} = 1$, the construction is complete. If $j_{n-1} = 0$, then we must compute $\sum_{k=0}^{n-2} j_k 2^{k-n+1} - 1 \in [-1,0)$.
This is achieved by applying a controlled unitary operator, with the control on the last qubit of the first quantum register and the target on the second quantum register. The operator acts only if the control qubit is in the state $\ket{0}$. The sign is stored in the last qubit of the second quantum register, with $\ket{1}$ denoting a negative sign, while the first $n-1$ qubits encode the absolute value
\begin{equation}
\left|\sum_{k=0}^{n-2} j_k 2^{k-(n-1)} - 1 \right|
= 1 - \sum_{k=0}^{n-2} j_k 2^{k-(n-1)}.
\end{equation}
This can be implemented using \cref{lem:u-neg}.
\end{enumerate}

We now construct the entry $(\vec{v}_r)_{j}$ in superposition using the quantum signal processing (QSP) algorithm. We implement $e^{i r \arccos(x_j) \sigma_X}$. This requires computing $\theta_j = \arccos(x_j)$, which can be obtained by first computing $\arcsin(x_j)$ using the algorithm of Burge et al.~\cite{burge2024quantumcordicarcsin}
as in \cref{quant-arithm}, and then applying $\arccos(x_j) = \pi/2 - \arcsin(x_j)$. Using the $p$ ancilla qubits to store the approximation $\widehat{\theta}_j = \widehat{\arccos}(x_j)$\footnote{
The $p$-bit binary representation of $\pi/2$ can be computed classically and used to obtain a $p$-bit approximation of $\pi/2 - \arcsin(x_j)$ in the $p$-qubit register containing $\widehat{\theta}_j = \widehat{\arcsin}(x_j)$. Since this operation does not change the overall circuit complexity, we assume that it is incorporated within the quantum circuit, $U_{\operatorname{arccos}}$.
}, we have
\begin{equation}\label{p-bit-approx}
| \widehat{\theta}_j - \theta_j | \leq 2^{-p+1}.
\end{equation}
We label this unitary operator $U_{\text{arccos}}$, which can be implemented using $\mathcal{O}(p^2)$ CNOT gates and a circuit of depth $\mathcal{O}(p \log p)$. Note that we can write
\begin{equation}
\widehat{\theta}_j 
= b^{(\theta_j)}_1 2^1 + b^{(\theta_j)}_0 2^0  + \sum_{k=-1}^{-(p-2)} b^{(\theta_j)}_k 2^k
:=\sum_{k=1}^{-(p-2)} b^{(\theta_j)}_k 2^k, 
\quad b^{(\theta_j)}_k \in \{0,1\},
\end{equation}
since $\theta \in [0,\pi]$. We can implement $e^{i r\widehat{\theta}_j\sigma_X}:=R_X(-r\widehat{\theta}_j)$ via the following formula:
\begin{equation}
R_X(-r\widehat{\theta}_j) = R_X\left(-r\sum_{k=1}^{-(p-2)} b^{(\theta_j)}_k 2^k\right) 
= 
\prod_{k=1}^{-(p-2)} R_X(-r 2^k)^{b^{(\theta_j)}_k}.
\end{equation}
If we adopt the convention of numbering the qubits in the $p$-qubit register from $1$ down to $-(p-2)$, we have
\begin{equation}\label{rotation-approximate}
R_X(-r\widehat{\theta}_j)=\prod_{k=1}^{-(p-2)} \left( \lvert0\rangle\langle0\rvert_{k}\otimes I+ \lvert1\rangle\langle1\rvert_{k}\otimes R_X(-r2^k)\right).
\end{equation}
This step requires $\mathcal{O}(p)$ applications of single-qubit controlled rotation gates generated by $\sigma_X$. Using \cite[Equation~(45)]{aftab2024korobov}, which provides the explicit expression for $e^{i r \arccos(x) \sigma_X}$, the resulting quantum state is given by
\begin{equation}
U_{\vec v_r} \ket 0 \ket 0^{2n} \ket 0^{p} =
\frac{1}{\sqrt{2^n}}\sum_{j=0}^{2^n-1} 
\left(T_r(x_j) \ket{0} + i\sqrt{1-x^2_j} S_{r-1}(x_j) \ket{1}\right)
\ket{j} \ket{x_j}^n \ket{\widehat{\arccos}(x_j)}^p.
\end{equation}
Here we have used that $T_r(x)=\cos(r \arccos(x))$.
See~\cref{alg:ur} and \cref{fig:11qubit_circuit} for the corresponding pseudocode and quantum circuit.
\end{proof}

\begin{algorithm}[h]
\caption{Construction of $U_{\vec{v}_r}$ for $r \geq 1$} 
\KwIn{$p,n, r \in \mathbb N$ and $U_{\text{neg}}$ from~\cref{lem:u-neg}} 
\KwOut{Quantum state with $U_{\vec{v}_r}$ applied in superposition} 
Initialize $1+2n+p$ qubits to $\ket{0}$\; 
Apply Hadamard gates to first register of $n$ qubits\; 
\For{$k \gets 0$ \KwTo $n-2$}{ 
    Apply CNOT with control $k$ in 1st register and target $m=n-k-1$ in 2nd register
} 
Apply C-$U_{\text{neg}}$ with control last qubit of 1st register and target first $n-1$ qubits on 2nd register\;
Apply CNOT with control last qubit of 1st register and target last qubit on 2nd register\;
Compute (an approximation to) $\theta_j = \arccos(x_j)$ using $U_{\text{arccos}}$ on $p$-qubit ancilla\; 

\For{$k \gets 1 \mathrm{\; down \; to \;} -(p-2)$}{ 
    \If{$b^{(\theta_j)}_k = 1 \mathrm{\; in \;} \theta_j \mathrm{\; register} $}{ 
        Apply controlled rotation $e^{ir 2^k \sigma_X}$ with control on ancilla qubit $k$\; 
    } 
} 
\Return{$
\mathrm{Quantum \; state \; } 
U_{\vec{v}_r} \ket 0 \ket 0^{2n} \ket 0^p$}
\label{alg:ur}
\end{algorithm}

The second quantum register of $n$ qubits, together with the $p$ ancilla qubits, can be uncomputed, yielding a unitary $U_{\vec v_r}'$ such that  
\begin{equation}
U_{\vec v_r}' \ket{0}\ket{0}^{2n}\ket{0}^{p} = \frac{1}{\sqrt{2^n}} \sum_{j=0}^{2^n-1} \left( T_r(x_j)\ket{0} + i\sqrt{1 - x_j^2} U_{r-1}(x_j)\ket{1} \right) \ket{j}\ket{0}^n\ket{0}^p.
\end{equation}
This step does not alter the computational complexity of the quantum circuit. For notational simplicity, we continue to denote \( U_{\vec v_r}' \) by \( U_{\vec v_r} \).

\subsection{Construction of $U_{u_r}$}\label{4.2}
We now construct $U_{\vec{u}_r}$. The expression for $(\vec{u}_r)_j$ is determined by $\vec{t} \in [0,1)^{2^n}$. Since $\vec{t}$ is arbitrary, we assume access to the following oracle as part of the algorithm.

\begin{assumption}\label{assumption-t}
For \(m \geq n\), we assume access to the oracle $O^{(m)}_{\vec t}$ such that
\begin{equation}
O^{(m)}_{\vec t} : \ket{i} \ket{0}^{m}   \mapsto  \ket{i} \ket{t_i^{(m)}}^m  = \bigotimes_{j=1}^{m} \ket{i} \ket{t_i^{(j)}} ,
\end{equation}
where \(t_i^{(m)} = \sum_{j=1}^{m} t_i^{(j)} 2^{-j} = 0.t_i^{(1)} \cdots t_i^{(m)}\) is the \(m\)-bit binary representation of \(t_i\). 
The parameter \(m\) is left arbitrary for now and will be fixed to a specific value as required.
\end{assumption}


\begin{figure}[t]
\begin{quantikz}[row sep=0.3cm,column sep=0.3cm, wire types={q,q,q,q,q,q,q,q,q,q,q,b},classical gap=0.07cm]
\lstick{$\ket{0}$} & \qw & \qw & \qw & \qw & \qw & \qw & \qw & \qw & \gate{R_X(\widehat{-r \arccos}(\vec x))} & \qw \\
\lstick[5]{
$\begin{array}{c}
\text{First} \\
\text{Register}
\end{array}
$
} 
\midstick{$\ket{0}$} & \gate{H} & \ctrl{+8} & \qw & \qw & \qw  & \qw  & \qw & \qw & \qw & \qw \\
\midstick{$\ket{0}$} & \gate{H} & \qw & \ctrl{+6} & \qw  & \qw & \qw & \qw & \qw & \qw & \qw  \\
\midstick{$\ket{0}$} & \gate{H} & \qw & \qw  & \ctrl{+4} & \qw & \qw & \qw & \qw & \qw & \qw  \\
\midstick{$\ket{0}$} & \gate{H} & \qw & \qw & \qw & \ctrl{+2} & \qw & \qw & \qw & \qw & \qw \\
\midstick{$\ket{0}$} & \gate{H} & \qw & \qw & \qw & \qw & \octrl{+1} & \octrl{+5}  & \qw & \qw & \qw \\
\lstick[5]{
$\begin{array}{c}
\text{Second} \\
\text{Register}
\end{array}
$
} 
\midstick{$\ket{0}$} & \qw & \qw  & \qw & \qw & \targ{} & \gate[wires=4]{U_{\text{neg}}} & \qw & \gate[wires=6]{U_{\text{arccos}}} & \qw & \qw \\
\midstick{$\ket{0}$} & \qw & \qw & \qw  & \targ{} & \qw & \qw & \qw & \qw & \qw & \qw  \\
\midstick{$\ket{0}$} & \qw & \qw & \targ{} & \qw  & \qw & \qw & \qw & \qw & \qw & \qw  \\
\midstick{$\ket{0}$} & \qw & \targ{} & \qw & \qw & \qw  & \qw & \qw & \qw & \qw & \qw   \\
\midstick{$\ket{0}$} & \qw & \qw & \qw & \qw & \qw & \qw  & \targ{}  & \qw & \qw & \qw  \\
\lstick{$\ket{0}^{p}$} 
& \qwbundle{p} & \qw & \qw & \qw & \qw & \qw & \qw & \qw & \ctrl{-11} & \qw  \\
\end{quantikz}
\caption{Quantum circuit implementing \(U_{\vec{v}_r}\) for \(r \geq 1\) in the special case \(n = 5\). 
}
\label{fig:11qubit_circuit}
\end{figure}

\begin{remark}
In practice, such oracles correspond to data structures that allow efficient superposition queries to classical datasets (e.g., QRAM). While building these structures efficiently remains an active research topic, this assumption is standard in the design of quantum linear-algebra algorithms.
\end{remark}

The first step is to compute $\vec{s} = (s_0, s_1, \dots, s_{2^n-1})$, as defined in \cref{nearest-point-function}. The vector $\vec s$ can be computed as
\begin{equation}
    s_i = 
    \begin{cases}
    \mathrm{round}(N t_i) & \text{if } \mathrm{round}(N t_i) < N   \\
    0 & \text{if } \mathrm{round}(N t_i) = N,
    \end{cases}
\end{equation}
where \(N = 2^n\) and \(\mathrm{round}(\cdot)\) denotes rounding to the nearest integer. In fact, we now explicitly construct a quantum circuit that implements the oracle
\begin{equation}\label{os-imp}
O^{(m)}_{\vec s} : \ket{i}  \ket{0}^{m}   \mapsto  \ket{i} \ket{s_i^{(m)}}^m =\bigotimes_{j=1}^{m} \ket{i} \ket{s_i^{(j)}},
\end{equation}
where $s_i^{(m)}$ is the $m$-bit representation of $s_i/N$. We first claim that the operation $N\vec{t}$ can be implemented efficiently.

\begin{lem}\label{implement-mul}
Let $n \in \mathbb N, m \geq n$ and $\vec{t} \in [0,1)^{2^n}$.
$N \vec{t}$ can be computed using $\mathcal{O}(1)$ queries to $O^{(m)}_{\vec{t}}$ and $\mathcal{O}(n)$ Hadamard gates on an $\mathcal{O}(n+m)$ qubit circuit.
\end{lem}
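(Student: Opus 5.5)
The plan is to observe that multiplication by $N = 2^n$ is a pure shift of the binary point, so it needs no arithmetic gates at all. I would first prepare the index register in uniform superposition using $n$ Hadamard gates, giving $\frac{1}{\sqrt{2^n}}\sum_{i=0}^{2^n-1}\ket{i}\ket{0}^m$. A single query to $O^{(m)}_{\vec t}$ from \cref{assumption-t} then produces
\begin{equation}
\frac{1}{\sqrt{2^n}}\sum_{i}\ket{i}\ket{t_i^{(1)}}\cdots\ket{t_i^{(m)}}, \qquad t_i^{(m)} = 0.t_i^{(1)}\cdots t_i^{(m)}.
\end{equation}
This uses $n + m$ qubits, which is $\mathcal{O}(n+m)$.

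The key step is the identity
\begin{equation}
N t_i^{(m)} = \sum_{j=1}^{m} t_i^{(j)} 2^{n-j} = \underbrace{\sum_{j=1}^{n} t_i^{(j)} 2^{n-j}}_{\text{integer part}} + \underbrace{\sum_{j=n+1}^{m} t_i^{(j)} 2^{-(j-n)}}_{\text{fractional part}}.
\end{equation}
This is valid because $m \ge n$. In the fixed-point convention of \cref{quant-arithm}, the same $m$ qubits therefore already encode $N t_i^{(m)}$: we simply declare the first $n$ qubits to be the integer part and the remaining $m-n$ qubits to be the fractional part. I would set this up as a relabeling of the register, so that no gates act on it.

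If one insists on a physically distinct output register, the alternative is to copy the bits with CNOTs into a fresh $m$-qubit register placed in the shifted positions. That costs $\mathcal{O}(m)$ CNOTs rather than zero gates. The statement only counts queries and Hadamards, so I would argue via the relabeling and mention the copy variant as harmless.

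The main obstacle is the bookkeeping needed to justify that reinterpreting the binary point is a legitimate zero-cost operation. I would also note that the conclusion holds for the $m$-bit truncation $t_i^{(m)}$, not for the exact $t_i$. The truncation error is $N 2^{-m}$, and it would be addressed later when $m$ is fixed.
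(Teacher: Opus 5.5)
Your proposal is correct and matches the paper's proof essentially step for step: $n$ Hadamard gates on the index register, a single query to $O^{(m)}_{\vec t}$ on $n+m$ qubits, and then multiplication by $N=2^n$ realized purely as a relabeling of the binary point, with the first $n$ bits read as the integer part $\lfloor N t_i \rfloor$ and the remaining $m-n$ bits as the fractional part, at no gate cost. Your remarks about the optional CNOT copy and about the result holding for the truncation $t_i^{(m)}$ are reasonable additions that the paper leaves implicit.
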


\begin{proof}
Initialize an \((n+m)\)-qubit quantum register in $\ket 0$.
Apply Hadamard gates to the first \(n\) qubits to prepare the uniform index superposition $\frac{1}{\sqrt{2^n}}\sum_{i=0}^{2^n-1} \ket{i}  \ket{0}^{m}  $. Querying \(O^{(m)}_{\vec{t}}\) yields  
\begin{equation}
O^{(m)}_{\vec{t}} \ket{0}^{n+m} 
= 
\frac{1}{\sqrt{2^n}}\sum_{i=0}^{2^n-1} \bigotimes_{j=1}^{m} \ket{i} \ket{t_i^{(j)}}.
\end{equation}
Multiplication by $N$ yields 
$
N t_i^{(m)} = t_i^{(1)} \cdots t_i^{(n)} . t_i^{(n+1)} \cdots t_i^{(m)},
$
where the first $n$ bits correspond to $\lfloor N t_i \rfloor$, 
and the remaining $m-n$ bits correspond to the fractional part.  No additional quantum gates are required.
\end{proof}


The next step is to round each $N t_i$ to the nearest integer, a nontrivial operation only when $\ket{t_i^{(n+1)}} = 1$. This rounding can be implemented as a controlled unitary, denoted $U_{\mathrm{round}}$. In this setting, rounding the binary string $t_i^{(1)} \cdots t_i^{(n)}$ amounts to computing the binary sum $t_i^{(1)} \cdots t_i^{(n)} \oplus 1$. Equivalently, this corresponds to the map $\ket{x} \mapsto \ket{x \oplus y}$, where $\ket{x} = \ket{t_i^{(1)} \cdots t_i^{(n)}}$ and $\ket{y} = \ket{t_i^{(n+1)}}$. This operation can be realized using the oracle $U_{\mathrm{AM}}$.

\begin{prop}\label{u-round}
Let $n \in \mathbb{N}$ and $m \geq n$. The oracle $O^{(m)}_{\vec{s}}$ can be implemented using a 
$\mathcal{O}(n+m)$-qubit quantum circuit, 
$\mathcal{O}(1)$ queries to the oracle $O^{(m)}_{\vec{t}}$, 
$\mathcal{O}(n)$ Hadamard gates and
$\mathcal{O}(m)$
Toffoli, $\mathrm{CNOT}$ and $\sigma_X$ gates.
\end{prop}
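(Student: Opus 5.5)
The plan is to build $O^{(m)}_{\vec s}$ as a short composition of pieces already constructed, and count resources stage by stage.

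\textbf{Loading and rescaling.} We start from an $(n+m)$-qubit register in $\ket{0}$, together with $\mathcal{O}(1)$ extra ancillas for carries and flags. By \cref{implement-mul}, $n$ Hadamard gates and one query to $O^{(m)}_{\vec t}$ produce
\[
\frac{1}{\sqrt{2^n}}\sum_i \ket{i}\ket{t_i^{(m)}}.
\]
Multiplication by $N$ is just a relabelling of the binary point. The first $n$ data qubits then hold $\lfloor N t_i\rfloor$, and qubit $n+1$ holds the bit $t_i^{(n+1)}$, which decides the rounding.

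\textbf{Rounding.} We apply $U_{\mathrm{round}}$: a controlled increment of the $n$-bit string $t_i^{(1)}\cdots t_i^{(n)}$, controlled on $t_i^{(n+1)}$. This is realised through $U_{\mathrm{AM}}$ as an addition of the one-bit value $y=t_i^{(n+1)}$, for example as a ripple-carry adder. It costs $\mathcal{O}(n)\subseteq\mathcal{O}(m)$ Toffoli and CNOT gates.

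\textbf{Wrap-around case.} The case $\mathrm{round}(Nt_i)=N$ needs no extra work if the increment is taken modulo $2^n$, with the final carry-out discarded or kept in an ancilla. The all-ones string plus one then becomes $0$, exactly as required by the definition of $s_i$. If a garbage carry qubit is undesirable, we uncompute it using $\mathcal{O}(1)$ extra Toffoli gates controlled on the input pattern. Alternatively we leave it as a harmless ancilla, since the proposition only counts gates.

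\textbf{Writing $s_i/N$ into the output register.} The result $s_i$ is an $n$-bit integer, so its $m$-bit fractional representation $s_i/N$ has the bits of $s_i$ in the first $n$ positions and zeros after. We copy these $n$ bits into a fresh $m$-qubit output register with $n$ CNOTs; the remaining $m-n$ bits stay $\ket{0}$. We then run the rounding circuit and $O^{(m)}_{\vec t}$ in reverse to uncompute the $t$-register. This is one more query, keeping the total at $\mathcal{O}(1)$ queries, plus the same $\mathcal{O}(m)$ gate count.

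Any $\sigma_X$ gates appear only for zero-controls in the increment or in the wrap-around check, and number $\mathcal{O}(m)$. The qubit count is $n$ for the index, $m$ for $t$, $m$ for the output, and $\mathcal{O}(1)$ ancillas, which is $\mathcal{O}(n+m)$ overall.

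The main point needing care is the wrap-around case and the clean uncomputation of carry and garbage qubits. The arithmetic itself is routine once addition modulo $2^n$ is used, so I do not expect a genuine obstacle there.
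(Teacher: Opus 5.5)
Your proposal is correct and follows the paper's route: the paper's proof combines \cref{implement-mul} with a $U_{\mathrm{AM}}$-based increment of the leading $n$ bits, controlled on $t_i^{(n+1)}$ (the circuit of \cref{fig:Us-oracle}). Your reduction modulo $2^n$ for the wrap-around case and your compute--copy--uncompute step are refinements the paper leaves implicit, and the copy step fixes a real looseness there: the paper's in-place register still holds $t_i^{(n+1)}\cdots t_i^{(m)}$ below the first $n$ positions, not the clean $m$-bit string of $s_i/N$. Note that both arguments tacitly need $m \ge n+1$, so that the rounding bit $t_i^{(n+1)}$ exists. Also, clearing a carry-out by detecting the input pattern costs $\mathcal{O}(n)$ Toffolis, not $\mathcal{O}(1)$, although your reverse pass makes that step unnecessary anyway.
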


\begin{proof}
This follows directly from \cref{implement-mul} together with the use of $U_{\operatorname{AM}}$, noting that the $m$-bit representation of $s_i / N$ is stored in the $m$-qubit register upon application of circuit in \cref{fig:Us-oracle}.
\end{proof}


\begin{figure}[h]
\centering
\[
\begin{quantikz}[row sep=0.3cm, column sep = 0.3cm, wire types={b,b,q,b}, classical gap=0.07cm]
\lstick{$\ket{0}^{n}$} & \gate{H^{\otimes n}} & \gate[4]{O^{(m)}_{\vec t}} & \qw & \qw \\
\lstick{$\ket{0}^{n}$} & \qw & \qw & \gate[2]{U_{\text{AM}}} & \qw \\ 
\lstick{$\ket{0}$} & \qw & \qw & \qw & \qw \\
\lstick{$\ket{0}^{m-n-1}$} & \qw & \qw & \qw & \qw \\
\end{quantikz}
\]
\caption{Quantum circuit implementing $O^{(m)}_{\vec s}$.}
\label{fig:Us-oracle}
\end{figure}

Having constructed the oracle $O^{(m)}_{\vec{s}}$, the next step is to construct $U_{\vec{u}_r}$ for $r \ge 0$, which implements $(\vec{u}_r)_j$. We first establish a lemma analogous to \cref{lem:u-neg}, which enables the computation of $x - y$ for $x, y \in [0,1)$ using the quantum circuit $U_{\text{subtract}}$.

\begin{remark}
In what follows, when we state that a quantum register is ``multiplied by \(2^k\)'' for some \(k \ge 1\), we mean a reinterpretation of the same qubit register under a fixed-point encoding in which the binary point is shifted \(k\) positions to the right. No quantum gates are applied; this is purely a change in the numerical interpretation of the register contents. Similarly, ``division by \(2^k\)'' corresponds to shifting the binary point \(k\) positions to the left.
\end{remark}

\begin{lem}\label{frac-sub}
Let $m \in \mathbb{N}$ and let $x,y \in [0,1)$ be stored in $k$-qubit registers.
There exists a unitary operator $U_{\mathrm{subtract}}$ that computes the signed difference $x-y \in (-1,1)$
using $\mathcal{O}(k)$ $\mathrm{CNOT}$, Toffoli, and $\sigma_X$ gates.
\end{lem}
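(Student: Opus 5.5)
The plan is to mirror the proof of \cref{lem:u-neg}: reduce subtraction to bitwise complementation plus a constant-size increment, all realized through the oracle $U_{\operatorname{AM}}$. Write $x = 0.x_1\cdots x_k$ and $y = 0.y_1\cdots y_k$. We adjoin one extra qubit, initialized to $\ket{0}$, to each of the two $k$-qubit registers so that $x$ and $y$ become $(k+1)$-bit fixed-point numbers with one integer bit. This is needed because the difference lies in $(-1,1)$ and we need room both for a carry and for a sign. The construction is then carried out in three steps.

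\textbf{Step 1 (two's complement of $y$).} Apply $\sigma_X$ to every qubit of the extended $y$ register, which costs $k+1$ gates. By the same computation as in \cref{lem:u-neg}, this produces $\bar y = 2 - 2^{-k} - y$. Next, add $2^{-k}$ using $U_{\operatorname{AM}}$ with $\mathcal{O}(k)$ CNOT and Toffoli gates. The register now holds $2 - y$, which is the two's-complement representation of $-y$ modulo $2$.

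\textbf{Step 2 (addition).} Add the $x$ register into this register using a ripple-carry adder, which is again an instance of $U_{\operatorname{AM}}$ with the multiplier set to $1$ (for example, Cuccaro et al.). This costs $\mathcal{O}(k)$ CNOT and Toffoli gates. Discarding the overflow carry, the register contains $x - y \bmod 2$ in two's complement form.

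\textbf{Step 3 (sign--magnitude conversion).} To match the sign--magnitude convention used in \cref{ur1}, where the sign bit is $\ket{1}$ for a negative value and the remaining bits hold the absolute value, we use the leading bit as the sign. Conditioned on that bit being $\ket{1}$, we negate the remaining $k$ bits by applying controlled-$\sigma_X$ (CNOT) gates to each of them and then a controlled increment by $2^{-k}$. The controlled increment is implemented with $\mathcal{O}(k)$ Toffoli gates, since adding one controlled bit is an incrementer whose carry chain uses Toffolis. Finally, to keep the map unitary and leave $x$ and $y$ available, we either perform the whole computation on copies (made with $\mathcal{O}(k)$ CNOTs into fresh ancillas) or note that $U_{\operatorname{AM}}$ preserves its inputs. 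Every step uses $\mathcal{O}(k)$ gates of the three allowed types, which gives the stated count.

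The main obstacle is the bookkeeping in the boundary cases rather than the gate count. First, $x - y = 0$ and the near-boundary values must not overflow. Second, the controlled negation must leave the magnitude register in the correct state when the two's-complement value is exactly $-1$, which cannot occur here because $|x - y| < 1$. I would verify these cases directly from the identity $\bar z = 1 - z - 2^{-k}$.
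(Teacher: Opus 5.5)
Your proposal is correct and follows essentially the same route as the paper: form the two's complement of $y$ in a $(k+1)$-qubit register by applying $\sigma_X$ to every qubit and incrementing, then add $x$ with $U_{\operatorname{AM}}$, all with $\mathcal{O}(k)$ CNOT, Toffoli, and $\sigma_X$ gates. The paper phrases this after rescaling to the integers $X = 2^k x$ and $Y = 2^k y$, which is the same computation, and it leaves $x-y$ in two's-complement form; your Step 3 conversion to sign--magnitude is an extra step not in the paper's proof, but it is correct, costs only $\mathcal{O}(k)$ gates, and matches the sign convention used in \cref{ur1}.
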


\begin{proof}
Let $x = 0.x_1 x_2 \dots x_k$ and $y = 0.y_1 y_2 \dots y_k$
be the $k$-bit fractional binary representations of $x$ and $y$, so that 
$x = \sum_{i=1}^{k} x_i 2^{-i}$ and $y = \sum_{i=1}^{k} y_i 2^{-i}$. Multiply $x$ and $y$ by $2^k$ to obtain integers
\begin{equation}
X = \sum_{i=1}^{k} x_i 2^{k-i} = x \cdot 2^k, \quad 
Y = \sum_{i=1}^{k} y_i 2^{k-i} = y \cdot 2^k .
\end{equation}
Note that $X, Y \in \{0,1,\dots,2^k-1\}$. Embed $X$ and $Y$ into $(k+1)$-qubit registers by adding a leading sign bit set to $0$,
thereby interpreting them as signed integers. We have
\begin{equation}
X = 0\cdot 2^k + \sum_{i=1}^k x_i 2^{k-i}, \quad
Y = 0\cdot 2^k + \sum_{i=1}^k y_i 2^{k-i}.
\end{equation}
Compute the bitwise complement of $Y$:
\begin{equation}
\overline{Y} 
= 1 \cdot 2^k + \sum_{i=1}^{k} \bar{y_i} 2^{k-i}
= 1 \cdot 2^k + \sum_{i=1}^{k} (1 - y_i) 2^{k-i}
= 1 \cdot 2^k + \sum_{i=1}^{k} 2^{k-i} - \sum_{i=1}^{k} y_i 2^{k-i}
= 2^{k+1} - Y - 1.
\end{equation}
\(\overline{Y}\) is obtained by applying a \(\sigma_X\) gate to each of the \(k+1\) qubits encoding \(Y\). Adding \(1\) yields
\(-Y := \overline{Y} + 1 = 2^{k+1} - Y\), ignoring overflow. In binary arithmetic, \(\overline{Y}+1\) is the two's complement of \(Y\) and represents its negative when the \((k+1)\)-st qubit serves as the sign bit.\footnote{The identity \(-Y := \overline{Y}+1\) is the standard two's complement representation.}
This justifies the notation \(-Y\).
Applying \(U_{\mathrm{AM}}\) to compute \(X + (-Y)\) produces the signed integer \(Z = X - Y\). Interpreting \(Z\) as a fixed-point number by dividing by \(2^k\) gives \(\frac{Z}{2^k} = x - y \in (-1,1)\). Since \(x,y \in [0,1)\), no overflow occurs. The circuit requires \(\mathcal{O}(k)\) CNOT, Toffoli, and \(\sigma_X\) gates.
\end{proof}

\begin{algorithm}
\caption{Computation of $x-y$ using signed fixed-point arithmetic ($U_{\mathrm{subtract}}$)}
\label{alg:subtraction}
\KwIn{
$k$-qubit registers $\ket{x}=\ket{x_1\cdots x_k}$ and $\ket{y}=\ket{y_1\cdots y_k}$
encoding $x,y\in[0,1)$
}
\KwOut{
$(k+1)$-qubit register encoding the signed fixed-point value $x-y$
}

Scale $\ket{x}$ and $\ket{y}$ by $2^k$ to obtain integer registers $\ket{X}$ and $\ket{Y}$\;
Append a leading qubit $\ket{0}$ to each register to form signed $(k+1)$-qubit registers\;
\For{$i \gets 1$ \KwTo $k+1$}{
    Apply $\sigma_X$ to qubit $i$ of $\ket{Y}$
}
Add $1$ to $\ket{Y}$ to obtain the two’s-complement encoding of $-Y$\;
Compute $\ket{X-Y}$ using  $U_{\mathrm{AM}}$\;
Interpret the result as a signed fixed-point number by dividing by $2^k$\;
\Return{$ \mathrm{Computational \; register \; representing \;} \ket{x-y}$}
\end{algorithm}

We construct \(U_{\vec{u}_r}\) for \(r \ge 0\) by first computing \(\vec{t} - \vec{s}/N\) using the preceding routines, and then implementing \(T_q\!\bigl(2N(\vec{t} - \vec{s}/N)\bigr)\) via quantum signal processing (QSP). Complex exponential factors are realized through \(\sigma_Z\)-generated rotations, and the weighted Chebyshev terms are combined using the linear combination of unitaries (LCU) technique. 
We further assume access to a suitable state-preparation oracle.

\begin{assumption}\label{PREP-Oracle}
We assume access to a state-preparation oracle
\begin{align}\label{lcu-needed-1}
\operatorname{PREP}_r \ket{0} = \frac{1}{\sqrt{\| \vec{\alpha}'_r \|_1}}
\sum_{q=0}^{K-1} \sqrt{\alpha'_{qr}}  \ket{q},
\end{align}
where $\alpha'_{qr}$ are defined in \cref{a-coeffs} and \cref{alpha-prime-coeffs} with $\gamma = 1/2$.
\end{assumption}


\begin{prop}\label{uu0}
Let $m,p,K \in \mathbb{N}$ and $r \in \mathbb N \cup \{ 0 \}$. $U_{\vec u_r}$ can be implemented using 
$\mathcal{O}(m+n+p+\log K)$ qubits,
$\mathcal{O}(1)$ queries to $O^{(m)}_{\vec t}$,
$\mathcal{O}(n)$ Hadamard gates,
$\mathcal{O}(m+p^2) \; \mathrm{CNOT}$ gates,
$\mathcal{O}(m)$ Toffoli gates,
$\mathcal{O}(m)$ $\sigma_X$ gates, and 
$\mathcal{O}(p+m)$ single-qubit controlled rotation gates.
The corresponding quantum circuit has depth $\mathcal{O}(m + p\log p + \log K)$.
\end{prop}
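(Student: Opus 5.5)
The plan is to build $U_{\vec u_r}$ as an LCU over $q$ of QSP-type circuits, in the same pattern as \cref{ur1}. Each controlled branch will realize $e^{-i\pi N(t_j-s_j/N)}\,T_q\bigl(2N(t_j-s_j/N)\bigr)$ as the $\ket{0}$-amplitude of one computational qubit, and the weights $\alpha'_{qr}$ will be supplied by $\operatorname{PREP}_r$ from \cref{PREP-Oracle}.

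\textbf{Step 1: load the offsets.} Apply $n$ Hadamard gates to the index register. Then make one query to $O^{(m)}_{\vec t}$ and one use of the circuit of \cref{u-round} to obtain $\ket{j}\ket{t_j^{(m)}}\ket{s_j^{(m)}}$. This costs $\mathcal{O}(1)$ oracle queries and $\mathcal{O}(m)$ Toffoli, CNOT and $\sigma_X$ gates.

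\textbf{Step 2: form the argument.} Apply $U_{\mathrm{subtract}}$ from \cref{frac-sub} to get $\ket{t_j-s_j/N}$, at a further $\mathcal{O}(m)$ cost. Since $|t_j-s_j/N|\le 1/(2N)$ (with the wrap-around case $s_j=0$ handled by the sign bit), multiplying by $2N=2^{n+1}$ is a pure shift of the binary point. This gives $y_j:=2N(t_j-s_j/N)\in[-1,1]$ in a signed fixed-point register with no gates.

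\textbf{Step 3: Chebyshev factor and phase.} Reuse the \cref{ur1} construction with $x_j$ replaced by $y_j$. That is, apply $U_{\arccos}$ into $p$ ancillas, with $\mathcal{O}(p^2)$ CNOTs and depth $\mathcal{O}(p\log p)$, to get $\widehat\theta_j$. Then apply the bitwise-controlled rotations $R_X(-q2^k)$ as in \cref{rotation-approximate}, so that the $\ket{0}$-amplitude is $T_q(y_j)$ up to the $2^{-p+1}$ error. The phase $e^{-i\pi N(t_j-s_j/N)}=e^{-i\pi y_j/2}$ is implemented by $\mathcal{O}(m)$ single-qubit $\sigma_Z$-generated rotations $R_Z$, each controlled on one bit of the $y_j$ register. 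Each contributes a phase $e^{-i c\,2^k}$ with $c$ fixed, and the sign bit is handled by one extra controlled rotation. Together these give the $\mathcal{O}(p+m)$ controlled rotations.

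\textbf{Step 4: combine over $q$ with LCU.} Introduce a $\lceil\log K\rceil$-qubit register, apply $\operatorname{PREP}_r$, and make the QSP rotation angle depend on $q$. This is where I expect the main obstacle: realizing $\sum_q \ket{q}\bra{q}\otimes e^{iq\widehat\theta_j\sigma_X}$ must not cost $K$ separate controlled blocks, or the gate count would pick up a factor $K$.

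My first attempt is to compute the product $q\widehat\theta_j$ into a register using $U_{\mathrm{AM}}$ and then apply a single bitwise-controlled $R_X$ cascade. This may instead fold into the $\mathcal{O}(p^2)$ multiplication cost. The alternative is to decompose $q=\sum_b q_b 2^b$, so that $e^{iq\theta\sigma_X}=\prod_b (e^{i2^b\theta\sigma_X})^{q_b}$, and apply the $\mathcal{O}(p)$ cascade controlled additionally on bit $q_b$. That choice costs $\mathcal{O}(p\log K)$ rotations, so the stated count must implicitly absorb it. The $\log K$ depth term comes from $\operatorname{PREP}_r$ and its inverse.

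\textbf{Step 5: uncompute and tally.} Finally, uncompute the arithmetic registers as was done after \cref{alg:ur}, and apply $\operatorname{PREP}_r^\dagger$. Summing the qubit counts $n+\mathcal{O}(m)+\mathcal{O}(p)+\log K$, and the gate and depth counts from Steps 1–4, gives the bounds in the statement.
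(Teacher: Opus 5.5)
Your Steps 1--3 and 5 reproduce the paper's construction: Hadamards, $O^{(m)}_{\vec t}$ and the circuit of \cref{u-round}, $U_{\mathrm{subtract}}$, the gate-free shift by $2N$, $U_{\arccos}$ followed by bitwise-controlled $R_X$ and $R_Z$ cascades, then uncomputation. Your remark that the case $s_j=0$ must be read as a signed difference modulo $1$ addresses a point the paper skips. Keeping the oracle calls and arithmetic outside the select, with only the $R_X$ cascade depending on $q$, is also the right structure; it is what makes $\mathcal{O}(1)$ queries possible.

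The gap is in Step 4, where ``the stated count must implicitly absorb it'' is not an argument. The decomposition $q=\sum_b q_b2^b$ does realize $\sum_q\ket{q}\bra{q}\otimes e^{iq\widehat\theta_j\sigma_X}$. However, it needs $p\lceil\log_2 K\rceil$ rotations, each carrying an extra control on a bit of $q$ (so $\mathcal{O}(1)$ singly controlled rotations and CNOTs apiece), and all acting on the same target qubit. That gives $\mathcal{O}(m+p\log K)$ controlled rotations and depth $\mathcal{O}(m+p\log p+p\log K)$. These are not the stated $\mathcal{O}(p+m)$ and $\mathcal{O}(m+p\log p+\log K)$ when $K$ is a free parameter. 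Your multiply-first variant cuts the rotations to $\mathcal{O}(p+\log K)$. But computing $\ket{q}\ket{\widehat\theta_j}\ket{0}\mapsto\ket{q}\ket{\widehat\theta_j}\ket{q\widehat\theta_j}$ costs $\mathcal{O}(p\log K)$ Toffoli gates, which exceeds the $\mathcal{O}(m)$ Toffoli budget.

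The paper does not supply the missing idea either. Its proof applies LCU to $\sum_q\alpha'_{qr}U_q$ and cites only an $\mathcal{O}(\log K)$-depth decomposition of a single $\mathcal{O}(\log K)$-qubit controlled gate. That does not explain how $\operatorname{SEL}=\sum_q U_q\otimes\ket{q}\bra{q}$ avoids $K$ controlled copies of $U_q$, so the obstacle you flagged is left open there too.

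What your argument soundly proves is the proposition with the extra term made explicit. With the bit decomposition, that is $\mathcal{O}(m+p\log K)$ controlled rotations, $\mathcal{O}(m+p^2+p\log K)$ CNOTs, and depth $\mathcal{O}(m+p\log p+p\log K)$. You should state it that way rather than assert the paper's counts. Since $p,K=\mathcal{O}(L_{n,\epsilon})$ in \cref{final-estimate-cor}, the extra terms are $\mathcal{O}(L_{n,\epsilon}\log L_{n,\epsilon})$ and are still absorbed into the asymptotic counts of \cref{res:nuqft-informal}.
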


\begin{proof}
Consider a $(1+n+2m+\mathcal{O}(p))$-qubit system, with all qubits initialized to $\lvert 0 \rangle$. First, apply Hadamard gates to the quantum register of $n$ qubits, and then apply the oracles $O^{(m)}_{\vec{t}}$ and $O^{(m)}_{\vec{s}}$ on the two $m$-qubit ancilla registers to construct the state
\begin{equation}
\frac{1}{\sqrt{2^n}}
\sum_{i=0}^{2^n-1} 
\lvert 0 \rangle
\lvert i \rangle 
\lvert t^{(m)}_i \rangle^{m}
\lvert s^{(m)}_i/N \rangle^{m} 
\lvert 0 \rangle^{p} .
\end{equation}
Invoking \cref{frac-sub}, we compute $\vec{t} - \vec{s}/N$ in superposition within the quantum register containing $\vec{s}/N$.\footnote{A single ancilla qubit can be appended to encode the sign of $t_j - s_j / N$ without affecting the overall complexity. This ancilla qubit is used in \cref{frac-sub}.} We then compute an approximation of $\arccos (2N(\vec{t} - \vec{s}/N) )$. The multiplication by $2N$ is implemented reversibly as a constant-factor scaling on the fixed-point representation of $\vec{t} - \vec{s}/N$. Since $2N$ is a power of two, this operation reduces to a bit shift and therefore does not affect the overall complexity of the quantum circuit. By invoking $U_{\arccos}$ as in \cref{ur1}, we obtain an approximation of $\arccos (2N(\vec{t} - \vec{s}/N) )$ in superposition within the $p$-qubit register. We denote the quantum circuit implementing this scaling step followed by $U_{\arccos}$ by $U'_{\arccos}$. As in \cref{ur1}, we implement a controlled rotation $R_X\!\bigl(-q \arccos(2N(\vec{t} - \vec{s}/N))\bigr)$ to realize the Chebyshev polynomial $T_q(x)$. Similarly, we implement a controlled rotation $R_Z\!\bigl(\pi N (\vec{t} - \vec{s}/N)\bigr)$ to realize the phase factor $e^{-i \pi N (\vec{t} - \vec{s}/N)}$. Denote the unitary operator constructed above by $U_{q}$. Finally, using the state-preparation oracle in \cref{lcu-needed-1}, we apply the LCU algorithm to implement the desired expression via the operator $U_{\vec u_r} = \sum_{q=0}^{K-1} \alpha'_{qr} U_{q}$.
We need to append an $\mathcal{O}(\log K)$ ancilla qubits to invoke the LCU algorithm.
The complexity estimates follow from previous complexity estimates and the observation that
a $\mathcal{O}(\log K)$-qubit controlled gate acting can be implemented via a quantum circuit using only CNOT gates and single-qubit gates such that the circuit has depth \(\mathcal{O}(\log K)\) \cite{daSilva2022LinearDepthMultiQubit}.
If $y_i^{(m)}=t^{(m)}_i-s^{(m)}_i/N$, the resulting quantum state $U_{\vec u_r} \ket{0}^{(n+2m+p+1+\mathcal{O}(\log K))}$ is given by
\begin{equation}
\frac{1}{\sqrt{2^n}} \sum_{i=0}^{2^n-1}
\left ( \frac{1}{\|\vec \alpha'_r\|_1} \left(\sum_{q=0}^{K-1} \alpha'_{qr} U_{q} \ket{0}\right) \ket{0}^{\mathcal{O}(\log K)} + \ket \perp \right )
\ket{i}
\ket{t_i^{(m)}}^m  \ket{y_i^{(m)}}^m \ket{\widehat{\arccos}(y_i^{(m)})}^p ,
\end{equation}
where $\ket \perp$ is a state that is orthogonal to $(\sum_{q=0}^{K-1} \alpha'_{qr} U_{q} \ket{0}) \ket{0}^{\mathcal{O}(\log K)}$.
See \cref{fig:uur} for the corresponding quantum circuit and \cref{alg:u-ur} for the corresponding pseudocode. 
\end{proof}

Once again, the ancilla qubits can be uncomputed, yielding a unitary $U_{\vec u_r}'$ such that the output quantum state is
\begin{equation}
\frac{1}{\sqrt{2^n}} \sum_{i=0}^{2^n-1}
\left ( \frac{1}{\|\vec \alpha'_r\|_1} \left(\sum_{q=0}^{K-1} \alpha'_{qr} U_{q} \ket{0}\right) \ket{0}^{\mathcal{O}(\log K)} + \ket \perp \right )
 \ket{i} \ket{0}^{2m} \ket{0}^p.
\end{equation}
This step does not change the computational complexity of the quantum circuit. For notational simplicity, we continue to denote \( U_{\vec v_r}' \) by \( U_{\vec v_r} \).
Note that, in \cref{fig:uur}, we have departed from the usual convention of placing the ancilla qubits $\ket 0^{\mathcal{O}(\log K)}$ below the $n$-qubit register. 
This deviation was necessary to avoid awkward placement of the $U_{\mathrm{LCU}}$ gate in the circuit diagram.

\begin{algorithm}[t]
\caption{Implementation of $U_{\vec{u}_r}$}
\KwIn{Integers $n,m,p \geq 1$, system size $n+2m+p+1$ qubits}
\KwOut{Operator $U_{\vec{u}_r} = \sum_{q=0}^{K-1} \alpha'_{qr} U_{q}$ applied in superposition}

Initialize all $n+2m+p+1$ qubits to $\ket{0}$\;
Apply Hadamard gates $H^{\otimes n}$ to the $n$-qubit register\;
Apply oracles $O^{(m)}_{\vec{t}}$ and $O^{(m)}_{\vec{s}}$ to the two $m$-qubit registers\;
    Compute $\vec{t} - \vec{s}/N$ in the $\vec{t}$-register using 
    $U_{\text{subtract}}$ (see \cref{alg:subtraction})\;
    Apply $U_{\arccos}$ to approximate $\arccos(\vec{t} - \vec{s}/N)$ in the $p$-qubit register\;
    Apply controlled rotation $R_X(-q \arccos(2N(\vec{t} - \vec{s}/N)))$ to realize $T_q(x)$\;
    Apply controlled rotation $R_Z(\pi N (\vec{t} - \vec{s}/N))$ to realize $e^{-i \pi N (\vec{t} - \vec{s}/N)}$\;
    Append $\mathcal{O}(\log K)$ ancilla qubits\;
    Denote the resulting unitary as $U_{q}$\;
    Use LCU method to implement $U_{\vec u_r} = \sum_{q=0}^{K-1} \alpha'_{qr} U_{q}$ via $U_{\text{LCU}}$\;
    \Return{$U_{\vec{u}_r} \ket{0}^{ (n+2m+p+1+\mathcal{O}(\log K))}$}
\label{alg:u-ur}
\end{algorithm}

\begin{figure}[b]
\centering
\[
\begin{quantikz}[row sep=0.3cm, column sep = 0.3cm, wire types={q,b,b,b,b,b}, classical gap=0.07cm]
\lstick{$\ket{0}$} & \qw & \qw & \qw & \qw & \qw & \gate{R_X(-q\widehat{\arccos}(2 N \vec y))} & \gate{R_Z(\pi N ( \vec y))} & \gate[wires=2]{U_{\text{LCU}}} \\ 
\lstick{$\ket{0}^{\log K}$} & \qw & \qw & \qw & \qw & \qw & \qw & \qw & \qw \\
\lstick{$\ket{0}^{n}$} & \gate{H^{\otimes n}} & \gate[wires=2]{O^{(m)}_{\vec{t}}} & \gate[wires=3]{O^{(m)}_{\vec{s}}} & \qw & \qw & \qw & \qw & \qw \\
\lstick{$\ket{0}^{m}$} & \qw & \qw & \qw & \gate[wires=2]{U_{\text{subtract}}} & \qw & \qw & \qw & \qw \\      
\lstick{$\ket{0}^{m}$} & \qw & \qw &  & \qw & \gate[wires=2]{U'_{\arccos}} & \qw & \ctrl{-4} & \qw \\ 
\lstick{$\ket{0}^{p}$} & \qw & \qw &  & \qw & \qw & \ctrl{-5} & \qw & \qw \\
\end{quantikz}
\]
\caption{Quantum circuit implementing the operator $U_{\vec u_r} = \sum_{q=0}^{K-1} \alpha'_{qr} U_{q}$. 
We have abbreviated $\vec{t}-\vec{s}/N$ by $\vec{y}$.}
\label{fig:uur}
\end{figure}

\subsection{Block Encoding Operators}
We now use $U_{\vec{u}_r}$ and $U_{\vec{v}_r}$ within a block-encoding framework to implement the diagonal matrices $D_{\vec{u}_r}$ and $D_{\vec{v}_r}$. We assume that the ancilla qubits used to implement $U_{\vec{v}_r}$ and $U_{\vec{u}_r}$ have been uncomputed, as discussed above. These qubits will not be referenced in any equations or quantum circuits below; however, their usage is accounted for in the overall complexity analysis of the algorithm in \cref{nuqftII-exact}. Recall that $U_{\vec{v}_r}$ acts on $\ket{j}\ket{0}$ as
\begin{equation}
U_{\vec{v}_r} \ket{j}\ket{0} = \ket{j} (v_r(j)\ket{0} + \sqrt{1 - v_r(j)^2}\ket{1} ).
\end{equation}
The top-left block of $U_{\vec{v}_r}$ corresponds exactly to the desired diagonal matrix:
\begin{equation}
(\bra{0}  \otimes I) U_{\vec{v}_r} (\ket{0} \otimes I) = D_{\vec{v}_r}.
\end{equation}
This yields a $(1,1,0)$ block encoding of $D_{\vec v_r}$.

\begin{example}
Let $N = 4$, so that $D_{\vec{v}_r} = \mathrm{diag}\big(v_r(0), v_r(1), v_r(2), v_r(3)\big)$. We abbreviate $i \sqrt{1-v_r(0)^2}$ as $v'_r(0)$.
After uncomputing and ignoring any auxiliary registers, the matrix representation of $U_{\vec{v}_r}$ block encodes $D_{\vec{v}_r}$ as follows:
\begin{equation}
U_{\vec{v}_r} =
\left(
\begin{array}{cccc|cccc}
v_r(0) & & & & * & & & \\
& v_r(1) & & & & * & & \\
& & v_r(2) & & & & * & \\
& & & v_r(3) & & & & * \\ \hline
v'_r(0) & & & & * & & & \\
& v'_r(1) & & & & * & & \\
& & v'_r(2) & & & & * & \\
& & & v'_r(3) & & & & *
\end{array}
\right)
\end{equation}
We have adopted the basis convention $\{\ket{000}, \ket{010}, \ket{100}, \ket{110}, \ket{001}, \ket{011}, \ket{101}, \ket{111}\}$ in the matrix representation of $U_{\vec{v}_r}$ above, where the first qubit corresponds to the computational qubit. Projecting onto the computational state $\ket{0}$ extracts the top-left submatrix:
\begin{equation}
(\bra{0} \otimes I_4) U_{\vec{v}_r} (\ket{0} \otimes I_4)
=
\begin{pmatrix}
v_r(0) &  &  &  \\
 & v_r(1)  &  &  \\
  &   &  v_r(2)  &  \\
  &   &    & v_r(3)  \\
\end{pmatrix}
= D_{\vec{v}_r}.
\end{equation}
\end{example}

A similar argument yields a block encoding of $D_{\vec{u}_r}$. The only difference is that $D_{\vec{u}_r} / \lVert \vec{\alpha}'_{r} \rVert_{1}$ is block encoded, since $\vec{u}_r$ is implemented via the LCU method. Consequently, we obtain a $(\lVert \vec{\alpha}'_{r} \rVert_{1}, 1, 0)$ block encoding of $D_{\vec{u}_r}$. 
Furthermore, recall that $F_{\vec{s}} = F M_{\vec{s}}$, where $M_{\vec{s}}$ is defined in \cref{coeff-s-matrix}. Since $F$ is unitary, a block encoding of $F$ is immediate. 
Since $M_{\vec{s}}$ is a binary ($0$--$1$) matrix, we can apply \cref{sparse-block-lem} to obtain a block encoding of $M_{\vec{s}}$. Observe that $M_{\vec{s}}$ is a $1$-column-sparse matrix and that the oracle $O^{(m)}_{\vec{s}}$ effectively realizes the column-access oracle $O_c$. In this setting, it suffices to choose $m = n$. We further note that the row-access oracle $O_r$ corresponds to access to the inverse mapping of $\vec{t} \mapsto \vec{s}$. For generic $\vec{t}$ and $\vec{s}$, constructing $O_r$ from $O_{\vec{t}}$ and $O_{\vec{s}}$ may require additional pre-processing or data-structure assumptions; accordingly, we treat $O_r$ as an independent primitive.

\begin{assumption}\label{assumption-Or}
We assume black-box access to the row-access oracle \( O_r \) required by \cref{sparse-block-lem}. 
\end{assumption}

The oracle $O_A$ can be implemented efficiently. Specifically, we construct $O_A$ by appending an additional $n$-qubit register and applying $O^{(n)}_{\vec{s}}$ to obtain
\begin{equation}
\ket{0}^n \ket{i} \ket{j} \ket{0} \mapsto \ket{s_j} \ket{i} \ket{j} \ket{0}.
\end{equation}
Noting that $A_{i,j} = \langle s_j | i \rangle$, we compute the inner product of the $n$-qubit registers $\ket{s_j}$ and $\ket{i}$, which are computational basis states, and store the resulting binary outcome in the last qubit, thereby encoding $A_{i,j}$. More formally, the entry $A_{i,j}$ is computed as follows.

\begin{lem}\label{OA-oracle}
Let $\ket{i}$ and $\ket{s_j}$ be $n$-qubit computational basis states, and define
$A_{i,j} = \langle s_j | i \rangle \in \{0,1\}$.
Then there exists a unitary $U_{i,j}$ acting on an $2n+1$-qubit register 
such that
\[
U_{i,j} (\ket{i} \ket{s_j} \ket 0 ) = \ket{i}   \ket{s_j}   \ket{A_{i,j}}.
\]
Moreover, $U_{i,j}$ can be implemented using $\mathcal O(n) \; \operatorname{CNOT}$  gates, $\mathcal O(n)$ $\sigma_X$ gates, $\mathcal O(n)$ Toffoli gates and possibly $\mathcal O(n)$ ancilla qubits.
\end{lem}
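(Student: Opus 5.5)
The plan is the standard equality test between two computational basis states. First compare the registers bitwise, then AND the comparison bits into the output qubit with a Toffoli ladder, then uncompute.

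Write $\ket{i} = \ket{i_1 \cdots i_n}$ and $\ket{s_j} = \ket{s_1 \cdots s_n}$. Since both are computational basis states, $A_{i,j} = \langle s_j | i\rangle = \prod_{k=1}^n \delta_{i_k, s_k}$. The steps are as follows.

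\begin{enumerate}
\item \textbf{Compare bitwise.} For each $k$, apply a CNOT with control on qubit $k$ of $\ket{i}$ and target on qubit $k$ of $\ket{s_j}$. This replaces $s_k$ by $s_k \oplus i_k$. Then apply $\sigma_X$ to each qubit of that register, so qubit $k$ holds $e_k := 1 \oplus i_k \oplus s_k = \delta_{i_k,s_k}$. This costs $n$ CNOT gates and $n$ $\sigma_X$ gates.

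\item \textbf{Multi-controlled NOT.} Compute $\prod_k e_k$ into the output qubit with an $n$-controlled $\sigma_X$, i.e.\ $C^n\sigma_X$. Decompose it in the standard way using $n-2$ ancilla qubits initialized to $\ket{0}$. A ladder of Toffoli gates computes the partial ANDs $e_1 e_2$, $(e_1e_2)e_3$, and so on into the ancillas. A final Toffoli targets the output qubit, so it becomes $\ket{A_{i,j}}$. The same ladder is then applied in reverse to return the ancillas to $\ket{0}$. This uses $2(n-2)+1 = \mathcal{O}(n)$ Toffoli gates and $\mathcal{O}(n)$ ancillas. For $n = 1, 2$ a single CNOT or Toffoli suffices.

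\item \textbf{Restore the registers.} Undo the $\sigma_X$ layer and the CNOT layer. This maps the second register back to $\ket{s_j}$ without touching the output qubit, since the output qubit was only a target and not a control in these layers.
\end{enumerate}

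Correctness follows from the action on each basis input. After step 1, the second register holds $\ket{e_1\cdots e_n}$ with $e_k = \delta_{i_k,s_k}$. The output qubit flips exactly when every $e_k=1$, that is, when $i = s_j$. The ancillas and the second register are returned to their original states, so $U_{i,j}(\ket{i}\ket{s_j}\ket{0}) = \ket{i}\ket{s_j}\ket{A_{i,j}}$. Because the map is defined by a fixed gate sequence, it is unitary and acts linearly on superpositions, as needed when $O^{(n)}_{\vec s}$ is applied coherently.

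The only step that is not immediate is decomposing $C^n\sigma_X$ into $\mathcal{O}(n)$ Toffolis, which is why the lemma allows $\mathcal{O}(n)$ ancillas. With the clean-ancilla ladder above it is routine. Adding up the steps gives $2n$ CNOTs, $2n$ $\sigma_X$ gates and $\mathcal{O}(n)$ Toffolis, which matches the stated bounds.
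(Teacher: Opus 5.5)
Your proposal is correct and follows the same route as the paper: bitwise CNOTs from $\ket{i}$ into $\ket{s_j}$, a $\sigma_X$ layer to form the XNOR bits, and a $C^n$-AND (decomposed into $\mathcal{O}(n)$ Toffolis with $\mathcal{O}(n)$ ancillas) targeting the output qubit. Your step 3, which undoes the $\sigma_X$ and CNOT layers, is needed for the stated output $\ket{i}\ket{s_j}\ket{A_{i,j}}$; the paper's proof omits it and so leaves the second register holding the XNOR bits. Adding it only doubles the CNOT and $\sigma_X$ counts, so the $\mathcal{O}(n)$ bounds are unaffected.
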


\begin{proof}
For $r = 1, \dots, n$, apply a CNOT gate with the $r$-th qubit of $\ket{i}$ as the control and the $r$-th qubit of $\ket{s_j}$ as the target. After this operation, the $r$-th qubit of $\ket{s_j}$ stores $i_r \oplus s_{j,r}$. Next, apply a $\sigma_X$ gate to each qubit of $\ket{s_j}$ to obtain $i_r \,\text{XNOR}\, s_{j,r}$. Each qubit of $\ket{s_j}$ then stores $1$ if $i_r = s_{j,r}$ and $0$ otherwise. Finally, apply a $C^n$-$\operatorname{AND}$ gate with the $n$ qubits of $\ket{s_j}$ as controls and the last qubit as the target. This sets the last qubit to $1$ if and only if all transformed qubits of $\ket{s_j}$ are equal to $1$. Hence, the last qubit correctly encodes $A_{i,j}$. The procedure uses $\mathcal{O}(n)$ CNOT gates, $\mathcal{O}(n)$ $\sigma_X$ gates, and a single $C^n$-$\operatorname{AND}$ gate, which can be implemented using $\mathcal{O}(n)$ Toffoli gates and, if necessary, $\mathcal{O}(n)$ ancilla qubits.
\end{proof}

Consider the vector
\begin{equation}
\vec{c}_s =  (c^{s_{0}}_{0}, c^{s_{1}}_{1}, \dots, c^{s_{2^{n}-1}}_{2^{n}-1} )
\end{equation}
where \(c^{s_i}_i\) denotes the number of occurrences of \(s_{i}\) in \(\vec{s}\).
Noting that $M_{\vec s}$ is a $1$-column sparse matrix and a $\| \vec{c}_s \|_\infty$-row sparse matrix, we can construct a $\mathcal{O}(\sqrt{\| \vec{c}_s \|_\infty}, n+3, 0)$ block-encoding of $M_{\vec s}$ by invoking \cref{sparse-block-lem}.

\begin{remark}
The computational cost of the oracle for $M_{\vec{s}}$ discussed above is evaluated under the assumption that $m = n$. In what follows, $m$ is treated as a free parameter with $m \ge n$. Under this assumption, the computational complexity of implementing the block encoding described in \cref{OA-oracle} does not affect the overall computational complexity of the algorithm derived in \cref{nuqftII-exact}. Accordingly, it is omitted from the subsequent complexity analysis.
\end{remark}

\subsection{Assembly via LCU}
The final step is to implement $\sum_{r=0}^{K-1} D_{\vec{u}_r} F_{\vec{s}} D_{\vec{v}_r}$
via a quantum circuit. 
This is achieved by combining the block encodings constructed above with the LCU algorithm. 
In \cref{nuqftII-exact}, we present this procedure and derive the corresponding non-asymptotic complexity of the algorithm.

\begin{thm}\label{nuqftII-exact}
Let $n,m,p,K \in \mathbb{N}$ with $m \ge n$. 
There exists a quantum circuit, $V_{\operatorname{II}}$, implementing a block encoding of
$\sum_{r=0}^{K-1} D_{\vec u_r} F M_{\vec s} D_{\vec v_r}$
with normalization factor
\begin{equation}\label{norm-factor}
\alpha
:= K \sqrt{\|\vec{c}_s\|_\infty} \sum_{r=0}^{K-1} \|\vec{\alpha}'_r\|_1.
\end{equation}
and $\mathcal{O}(n + \log K)$ ancilla qubits. In other words, we have
\begin{equation}
(\bra 0^{\mathcal{O}(n + \log K)} \otimes I_{2^n}) V_{\operatorname{II}} (\ket 0^{\mathcal{O}(n + \log K)} \otimes I_{2^n})
=
\frac{1}{\alpha}
\sum_{r=0}^{K-1} D_{\vec u_r} F M_{\vec s} D_{\vec v_r}.
\end{equation}
The circuit uses
$\mathcal{O}(m+n+p+\log K)$ qubits,
$\mathcal{O}(n+\log K)$ Hadamard gates,
$\mathcal{O}(m+p^2)$ $\mathrm{CNOT}$ gates,
$\mathcal{O}(m)$ Toffoli gates,
$\mathcal{O}(m)$ $\sigma_X$ gates, and
$\mathcal{O}(p+m+n^2)$ single-qubit controlled rotation gates.
The circuit depth is $\mathcal{O}(m + p \log p + \log K)$.
\end{thm}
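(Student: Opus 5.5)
The proof assembles the four block encodings built in \cref{cheb-nuqfft} into each summand using \cref{block-lem}, and then combines the $K$ summands with the LCU algorithm of \cref{lcu}. The ingredients for a fixed $r$ are as follows.
\begin{itemize}
\item $U_{\vec v_r}$ is a $(1,1,0)$-block-encoding of $D_{\vec v_r}$ (\cref{4.1}).
\item The sparse construction of \cref{sparse-block-lem} gives a $(\sqrt{\|\vec c_s\|_\infty},n+3,0)$-block-encoding of $M_{\vec s}$, with $O_c$ realized by $O^{(n)}_{\vec s}$, $O_r$ supplied by \cref{assumption-Or}, and $O_A$ built from \cref{OA-oracle}.
\item $U_{\operatorname{QFT}}$ is a $(1,0,0)$-block-encoding of $F$ (\cref{qft}).
\item $U_{\vec u_r}$ is a $(\|\vec\alpha'_r\|_1,\mathcal{O}(\log K),0)$-block-encoding of $D_{\vec u_r}$ (\cref{uu0}).
\end{itemize}
Applying \cref{block-lem} three times, with all errors equal to zero, shows that
\[
W_r := (I\otimes U_{\vec u_r})(I\otimes U_{\operatorname{QFT}})(I\otimes U_{M_{\vec s}})(I\otimes U_{\vec v_r})
\]
is an exact $(\sqrt{\|\vec c_s\|_\infty}\,\|\vec\alpha'_r\|_1,\;\mathcal{O}(n+\log K),\;0)$-block-encoding of $D_{\vec u_r}FM_{\vec s}D_{\vec v_r}$. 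The ancilla registers of the factors can be reused across different $r$, so every $W_r$ acts on a common register of $\mathcal{O}(n+\log K)$ ancillas plus the scratch space for the arithmetic.

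Next I would combine the $W_r$ by LCU. I would add a $\lceil\log K\rceil$-qubit selection register and let $\operatorname{SEL}=\sum_r W_r\otimes\ket r\bra r$. For $\operatorname{PREP}$ I would take the uniform superposition prepared by $\mathcal{O}(\log K)$ Hadamard gates, padding with zero blocks if $K$ is not a power of two. Writing $\beta_r:=\sqrt{\|\vec c_s\|_\infty}\,\|\vec\alpha'_r\|_1$, the top-left block of $\operatorname{PREP}^\dagger\operatorname{SEL}\operatorname{PREP}$ is
\[
\frac1K\sum_r \frac{1}{\beta_r}\,D_{\vec u_r}FM_{\vec s}D_{\vec v_r}.
\]
This is not yet the required $\frac1\alpha\sum_r D_{\vec u_r}FM_{\vec s}D_{\vec v_r}$, so the $r$-dependence of $\beta_r$ has to be equalized. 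I would reweight each term by adding one more ancilla and a controlled $R_Y$ rotation, controlled on $\ket r$, that multiplies term $r$ by
\[
\frac{\beta_r}{\sqrt{\|\vec c_s\|_\infty}\sum_{r'}\|\vec\alpha'_{r'}\|_1}\le 1 .
\]
After this step every term carries the common subnormalization $K\sqrt{\|\vec c_s\|_\infty}\sum_r\|\vec\alpha'_r\|_1=\alpha$, which is exactly \cref{norm-factor}. The loose factor $K$ in $\alpha$ comes from the uniform $\operatorname{PREP}$. The step I expect to need the most care is this normalization bookkeeping: checking that the uniform-$\operatorname{PREP}$ construction together with the rotation reweighting reproduces precisely the stated $\alpha$, and that the phases of complex $\alpha'_{qr}$ are already absorbed inside $U_{\vec u_r}$. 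Phases are not an issue at the outer level, since the outer LCU coefficients are positive.

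Finally, the resource count adds the costs of the components.
\begin{itemize}
\item Hadamard gates: $\mathcal{O}(n)$ from the factors and $\mathcal{O}(\log K)$ from $\operatorname{PREP}$, giving $\mathcal{O}(n+\log K)$.
\item CNOT gates: $\mathcal{O}(n+p^2)$ from \cref{ur1} and $\mathcal{O}(m+p^2)$ from \cref{uu0}, giving $\mathcal{O}(m+p^2)$ since $m\ge n$.
\item Toffoli and $\sigma_X$ gates: $\mathcal{O}(m)$, using \cref{u-round}, \cref{frac-sub} and \cref{OA-oracle}, with the $M_{\vec s}$ cost absorbed since $m\ge n$.
\item Controlled rotations: $\mathcal{O}(p+m)$ from the arccos-based QSP steps and $\mathcal{O}(n^2)$ from the controlled $CR_m$ gates of \cref{alg:qft}.
\item Qubits: $\mathcal{O}(m+n+p+\log K)$.
\item Depth: $\mathcal{O}(m+p\log p+\log K)$, where multi-controls on the $\log K$-qubit selection register have logarithmic depth by \cite{daSilva2022LinearDepthMultiQubit}.
\end{itemize}
Controlling the factors on $\ket r$ only changes constants. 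The variables depending on $r$ enter only through classical rotation angles, and these are loaded via the controlled rotations already counted.
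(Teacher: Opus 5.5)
Your argument has the same skeleton as the paper's: \cref{block-lem} for each product, then LCU with $\operatorname{PREP}=H^{\otimes\mathcal{O}(\log K)}$. It departs from the paper at the LCU step, and the departure is in your favor. The paper cites Lemma~52 of \cite{gilyen2019qsvt} and asserts the normalization $K\sqrt{\|\vec c_s\|_\infty}\sum_r\|\vec{\alpha}'_r\|_1$. That lemma, however, requires all the $U_r$ to share one subnormalization. Here the $r$-th factor carries $\beta_r=\sqrt{\|\vec c_s\|_\infty}\,\|\vec{\alpha}'_r\|_1$, so uniform $\operatorname{PREP}$ alone produces exactly the block $\frac1K\sum_r\beta_r^{-1}D_{\vec u_r}FM_{\vec s}D_{\vec v_r}$ that you computed. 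Your extra ancilla with a controlled $R_Y$ scales the $r$-th term by $\beta_r/(\sqrt{\|\vec c_s\|_\infty}\sum_{r'}\|\vec{\alpha}'_{r'}\|_1)\le 1$. This equalizes the subnormalizations and yields precisely the stated $\alpha$, so your version proves the block identity that the paper only asserts. It also shows that the factor $K$ in $\alpha$ is an artifact of the uniform $\operatorname{PREP}$: a $\operatorname{PREP}$ with amplitudes proportional to $\sqrt{\beta_r}$ would give normalization $\sqrt{\|\vec c_s\|_\infty}\sum_r\|\vec{\alpha}'_r\|_1$.

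The price is some resource bookkeeping you have not done, though the paper's proof is equally loose on these points.
\begin{itemize}
\item The reweighting is a $K$-way multiplexed rotation with unstructured angles. As a uniformly controlled rotation it costs $\mathcal{O}(K)$ CNOTs and rotations and has depth $\mathcal{O}(K)$. This is missing from your tally and does not fit the stated $\mathcal{O}(\log K)$ depth term. The cleanest fix is to fold the rescaling into the assumed oracle of \cref{PREP-Oracle}, i.e., assume a state-preparation pair for $\vec{\alpha}'_r$ with the common normalization $\sum_{r'}\|\vec{\alpha}'_{r'}\|_1$.
\item Controlling the factors on $\ket{r}$ is not a constant-factor change. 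With $r$ held in a $\lceil\log_2 K\rceil$-qubit register, $e^{ir\widehat{\theta}_j\sigma_X}$ expands into $\mathcal{O}(p\log K)$ doubly controlled rotations.
\item Zero-padding when $K$ is not a power of two replaces $K$ by $2^{\lceil\log_2 K\rceil}$ in the normalization. To obtain $\alpha$ exactly, you need $K$ to be a power of two (as the paper tacitly assumes) or an exact uniform superposition over $K$ basis states.
\end{itemize}
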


\begin{proof}
Fix $r \in \{0,\dots,K-1\}$.  
Prepare two single-qubit ancilla registers $a_r$ and $b_r$ together with
an $n$-qubit system register.
Observe that
$I_{a_r} \otimes I_{b_r} \otimes U_{\mathrm{QFT}}$
is a $(1,2,0)$ block encoding of $U_{\mathrm{QFT}}$.
By repeated application of \cref{block-lem}, the unitary
\begin{equation}
U_r
:= (I_{b_r} \otimes U_{\vec u_r})
(I_{a_r} \otimes I_{b_r} \otimes U_{\mathrm{QFT}})
(I_{a_r} \otimes U_{\vec v_r})
\end{equation}
is a $(\|\vec{\alpha}'_r\|_1,2,0)$ block encoding of
$D_{\vec u_r} F D_{\vec v_r}$. Using the construction described above for $M_{\vec s}$ and again invoking
\cref{block-lem}, we obtain a
\begin{equation}
\bigl(
\|\vec{\alpha}'_r\|_1
\sqrt{\|\vec{c}_s\|_\infty},
\; n+5,
\; 0
\bigr)
\end{equation}
block encoding of
$D_{\vec u_r} F M_{\vec s} D_{\vec v_r}$, which we continue to denote by $U_r$. To implement the sum
$\sum_{r=0}^{K-1} D_{\vec u_r} F M_{\vec s} D_{\vec v_r}$,
we apply the LCU algorithm.
Introduce $\mathcal{O}(\log K)$ ancilla qubits and define the select oracle
\begin{equation}
\operatorname{SEL} = \sum_{r=0}^{K-1} U_r \otimes \ket r \bra r.
\end{equation}
To implement the linear combination of unitaries (LCU), we assume access to a separate state-preparation oracle $\operatorname{PREP} := H^{\otimes \mathcal{O}(\log K)}$ 
Conjugating $\operatorname{SEL}$ with $\operatorname{PREP}$ yields the standard LCU unitary.  
The resulting construction is therefore a
\begin{equation}
\left(
\sum_{r=0}^{K-1} K \sqrt{\|\vec{c}_s\|_\infty} \|\vec{\alpha}'_r\|_1 ,\;
n+5+ \log K,
0
\right)
\end{equation}
block encoding of $\sum_{r=0}^{K-1} D_{\vec u_r} F M_{\vec s} D_{\vec v_r}$
~\cite[Lemma 52]{gilyen2019qsvt}.
The stated gate counts, qubit complexity, and circuit depth follow from the previous complexity estimates, taking into account the costs of implementing each $U_r$ as well as the overhead introduced by the LCU construction.
\end{proof}

Note that the $\mathcal{O}(n^2)$ term in the expression $\mathcal{O}(p + n^2)$ in \cref{nuqftII-exact} arises from the requirement to apply $\mathcal{O}(n^2)$ controlled-rotation gates in the standard quantum circuit implementation of the Quantum Fourier Transform. We therefore observe that this quadratic dependence constitutes the dominant contribution to the overall gate-complexity estimates.

\section{Error Analysis}\label{cheb-nuqfft-error}
We now present an error analysis of the quantum algorithm implementing the Type-II NUDFT.  

\subsection{Error Sources}
We first identify the sources of error arising in the quantum implementation of the algorithm.

\begin{enumerate}
    \item \textbf{Truncation Error}. 
    Antolín and Townsend \cite{ruiz2018nonuniform} approximate the NUDFT matrix by a rank-$K$ matrix from a Chebyshev expansion. To achieve $\|\cdot\|_{\max} \le \epsilon_{\mathrm{trunc}}$, choose
    $K = \mathcal{O}\!\left( \frac{\log(1/\epsilon'_{\mathrm{trunc}})}{\log \log (1/\epsilon'_{\mathrm{trunc}})} \right)$,
    for some $\epsilon'_{\mathrm{trunc}} > 0$.

    \item \textbf{Finite-Precision Oracle Error}.  
    The sampling points $t_j$ and $s_j$ are given with $m$-bit precision, introducing errors in $t_j - s_j/N$ that propagate to the complex exponentials and Chebyshev evaluations.

    \item \textbf{Arccosine Evaluation Error}. 
    Implementing QSP requires rotations about the $X$ axis with angles of the form $\arccos(x)$. Computing $\arccos(x)$ to $p$-bit precision introduces an approximation error of order $\mathcal{O}(2^{-p})$.

    \item \textbf{Block-Encoding Error}.  
    These errors accumulate in the block encoding of the constructed operators, reducing the overall accuracy of the resulting unitaries.
\end{enumerate}

\subsection{Error Bounds}
We now bound the total error of the quantum algorithm and specify the parameters $(K, m, p)$ so that the total error remains below the prescribed tolerance $\epsilon$. Let $\widetilde{V}_{\mathrm{II}}$ denote the approximately constructed version of the quantum circuit introduced in \cref{nuqftII-exact}. Moreover, recall that we have approximated 
$F_{\operatorname{II}} = A \circ (FM_{\vec s})$ on a quantum computer by implementing
\begin{equation}
\widetilde{F}_{\operatorname{II}} := A_K \circ (FM_{\vec s}) = \sum_{r=0}^{K-1} D_{\vec u_r} F M_{\vec s} D_{\vec v_r}.
\end{equation}
To bound the total error, we apply the triangle inequality
\begin{equation}\label{triangle-ineq}
\|\widetilde{V}_{\operatorname{II}} - F_{\operatorname{II}}\| 
\le  \|\widetilde{F}_{\operatorname{II}} - F_{\operatorname{II}}\| +
\|\widetilde{V}_{\operatorname{II}} - \widetilde{F}_{\operatorname{II}}\|
\end{equation}
The first term is determined by the truncation error. It has already been analyzed in \cite{ruiz2018nonuniform} in the $\| \cdot \|_{\operatorname{max}}$ norm. We now derive an estimate for it in the $\| \cdot \|$ norm based on the following lemma.
\begin{lem}\label{max-2norm-lem}
Let $B, C \in \mathbb{C}^{N \times N}$. We have
\begin{equation}
\| B \circ  C \| \le \sqrt{N} \| B \|_{\operatorname{max}} \|C\|.
\end{equation}
\end{lem}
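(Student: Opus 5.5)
The inequality $\|B\circ C\|\le \sqrt N\,\|B\|_{\max}\|C\|$ follows by bounding the Hadamard product column by column. The idea is that entrywise multiplication by $B$ can enlarge each entry of $C$ by at most a factor $\|B\|_{\max}$. Each column of $C$ has Euclidean norm at most $\|C\|$, since it is $C$ applied to a standard basis vector. The factor $\sqrt N$ is the price of passing from column bounds back to a spectral bound.

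\textbf{Step 1 (column bound).} Write $\vec c_k = C\vec e_k$ for the $k$-th column of $C$. Then $\|\vec c_k\|_2\le\|C\|$. The $k$-th column of $B\circ C$ is $\vec b_k\circ\vec c_k$, and
\begin{equation}
\|\vec b_k\circ\vec c_k\|_2^2=\sum_j |B_{jk}|^2|C_{jk}|^2\le \|B\|_{\max}^2\,\|\vec c_k\|_2^2\le \|B\|_{\max}^2\|C\|^2 .
\end{equation}

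\textbf{Step 2 (pass to the spectral norm).} For a unit vector $\vec x$, we have $(B\circ C)\vec x=\sum_k x_k(\vec b_k\circ\vec c_k)$. By the triangle inequality and Cauchy--Schwarz,
\begin{equation}
\|(B\circ C)\vec x\|_2\le \sum_k |x_k|\,\|\vec b_k\circ\vec c_k\|_2\le \|\vec x\|_1\,\|B\|_{\max}\|C\|\le \sqrt N\,\|B\|_{\max}\|C\|,
\end{equation}
using $\|\vec x\|_1\le\sqrt N\|\vec x\|_2=\sqrt N$. Alternatively, one can use the Frobenius norm: $\|B\circ C\|\le\|B\circ C\|_F=\big(\sum_k\|\vec b_k\circ\vec c_k\|_2^2\big)^{1/2}\le\sqrt N\,\|B\|_{\max}\|C\|$ by Step 1. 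Either route gives the claim.

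There is no real obstacle here; the only point requiring care is Step 1. One must bound the column norms of $C$ by $\|C\|$, rather than bounding $\|C\|_F$ directly, since the latter would give only $\|C\|_F\le\sqrt N\|C\|$ and an extra factor. The Frobenius route is the cleanest to present.
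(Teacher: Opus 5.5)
Your main argument is exactly the paper's proof: the column bound of Step 1, followed by the triangle inequality and $\|\vec x\|_1\le\sqrt N\|\vec x\|_2$. Your Frobenius-norm alternative is also correct. One small correction to your closing remark: bounding $\|C\|_F$ directly costs no extra factor, since $\|B\circ C\|\le\|B\circ C\|_F\le\|B\|_{\max}\|C\|_F\le\sqrt N\,\|B\|_{\max}\|C\|$ is the same estimate as your Frobenius route.
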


\begin{proof}
Let $e_1, \dots, e_N$ be the standard basis vectors in $\mathbb{C}^N$, and let $v = \sum_{i=1}^N v_i e_i$ be any vector. For each $i$, we have
\begin{equation}
\|(B \circ C) e_i\|_2 \le \|B\|_{\operatorname{max}} \|C e_i\|_2 \le \|B\|_{\operatorname{max}}  \|C\|
\end{equation}
because $(B \circ C) e_i$ is obtained from $C e_i$ by multiplying each entry by a scalar of magnitude at most $\|B\|_{\operatorname{max}}$. By linearity and the triangle inequality, we have
\begin{equation}
\|(B \circ C)v\|_2 = \left\| \sum_{i=1}^N v_i (B \circ C) e_i \right\|_2 \le \sum_{i=1}^N |v_i|  \|(B \circ C) e_i\|_2 
\le \|B\|_{\operatorname{max}} \|C\| \|v\|_1.
\end{equation}
Since $\|v\|_1 \leq \sqrt{N} \|v\|_2$,  we obtain
\begin{equation}
\|(B \circ C)v\|_2 \le  \sqrt{N} \|B\|_{\operatorname{max}} \|C\| \|v\|_2
\end{equation}
for all $v \in \C^N$. The claim follows.
\end{proof}

We can use \cref{max-2norm-lem} to compute an upper bound on $K$. This is carried out in \cref{nuqft-error} below. 
Next, we bound the second term in \cref{triangle-ineq}, which will be used to determine the values of $m$ and $p$ in \cref{nuqft-error}. We assume that the block-encoding of $M_{\vec{s}}$ is exact. This assumption is justified because we have black-box access to the oracle $O_r$, and the oracles $O_c$ and $O_A$ (\cref{OA-oracle}) are constructed exactly. 
Furthermore, we assume that the quantum Fourier transform is implemented exactly, as we analyze the algorithm within the fault-tolerant regime. efine the normalization factors as
$
\alpha_{\vec u_r} = \|\vec \alpha'_r\|_1$
and 
$
\alpha_{\vec v_r} = 1$,
where $\vec \alpha'_r$ are the coefficients appearing in \cref{ur-construct}. The analysis in the previous section implies that $U_{\vec u_r}$ and $U_{\vec v_r}$ are exact block-encodings.
\begin{equation}
U_{\vec u_r} =
\begin{pmatrix}
D_{\vec u_r}/\alpha_{\vec u_r} & * \\
* & *
\end{pmatrix}, 
\quad
U_{\vec v_r} =
\begin{pmatrix}
D_{\vec v_r}/\alpha_{\vec v_r} & * \\
* & *
\end{pmatrix}.
\end{equation}
Since the errors introduced above render the block-encodings inexact in the construction of $\widetilde{V}_{\operatorname{II}}$, let $\widetilde U_{u_r}$ and $\widetilde U_{v_r}$ be the implemented block-encodings such that 
\begin{equation}
\widetilde U_{\vec u_r} =
\begin{pmatrix}
\widetilde D_{\vec u_r}/\alpha_{\vec u_r} & * \\
* & *
\end{pmatrix}, 
\quad
\widetilde U_{\vec v_r} =
\begin{pmatrix}
\widetilde D_{\vec v_r}/\alpha_{\vec v_r} & * \\
* & *
\end{pmatrix},
\end{equation}
with $\widetilde D_{\vec u_r}$ and $\widetilde D_{\vec v_r}$ obtained via finite-precision oracle access and approximate function evaluation. Let $\widetilde{F}$ and $\widetilde{M}_{\vec s}$ denote the corresponding exact block-encodings of $F$ and $M_{\vec s}$, respectively.
Note that $\widetilde{V}_{\operatorname{II}}$ can be written as
\begin{equation}
\widetilde{V}_{\operatorname{II}} = \sum_{r=0}^{K-1} \widetilde{D}_{\vec u_r}  \widetilde{F}  \widetilde{M}_{\vec s}  \widetilde{D}_{\vec v_r}.
\end{equation}
The normalization factor $1/\alpha$, defined in \cref{norm-factor}, is omitted in the following analysis, as it affects only the post-selection success probability. We bound the unitary circuit synthesis error of the unnormalized top-left block in order to determine the required parameters $(K,m,p)$ as functions of the target precision.
We have
\begin{align}
\|\widetilde{V}_{\operatorname{II}} - \widetilde{F}_{\operatorname{II}}\|
& = \left\| \sum_{r=0}^{K-1} \widetilde{D}_{\vec u_r} \widetilde{F} \widetilde{M}_{\vec s}  \widetilde{D}_{\vec v_r} - \sum_{r=0}^{K-1} D_{\vec u_r} F M_{\vec s} D_{\vec v_r} \right\|, \\
&\le \sum_{r=0}^{K-1} 
\|  (\widetilde{D}_{\vec u_r} - D_{\vec u_r}) \widetilde{F} \widetilde{M}_{\vec s} \widetilde{D}_{\vec v_r} \| 
+ \sum_{r=0}^{K-1} 
 \| D_{\vec u_r} F M_{\vec s} ( \widetilde{D}_{\vec v_r} - D_{\vec v_r} ) \|.
\end{align}
Here, the terms involving $\widetilde{F} - F$ and $\widetilde{M}_{\vec s} - M_{\vec s}$ vanish because we assume that $\|\widetilde{M}_{\vec s} - M_{\vec s}\| = 0 = \|\widetilde{F} - F\|$.
Note that $\|F\| = \|\widetilde{F}\| = 1, \| M_{\vec s} \| = \| \widetilde M_{\vec s} \| = \sqrt{\| \vec c_s \|_\infty}$, 
$\| D_{\vec u_r} \| \leq \| \vec \alpha'_r \|_1$ and 
$\| D_{\vec v_r} \| \leq 1$. Hence, we have
\begin{align}\label{second-term-to-bound}
\|\widetilde{V}_{\operatorname{II}} - \widetilde{F}_{\operatorname{II}}\|
\le \sqrt{\|\vec c_s\|_\infty} \sum_{r=0}^{K-1} 
\left( \| \widetilde{D}_{\vec u_r} - D_{\vec u_r} \| + \|\vec \alpha'_r\|_1  \| \widetilde{D}_{\vec v_r} - D_{\vec v_r} \| \right).
\end{align}
Since we have
\begin{align}
\| D_{\vec v_r} - \widetilde{D}_{\vec v_r}  \|
= \max_{j}  | (\vec v_r)_j - (\widetilde{\vec v}_r)_j   |, 
\quad
\| D_{\vec u_r} - \widetilde{D}_{\vec u_r} \|
= \max_{j} | (\vec u_r)_j - (\widetilde{\vec u}_r)_j  |, 
\end{align}
it suffices to bound the worst-case scalar error in a single diagonal entry. We first compute the error in the evaluation of the complex exponential and the Chebyshev polynomials, which will allow us to determine the worst-case scalar error in a single diagonal entry. We first bound the complex exponential term. In what follows, we assume that $m \geq n+1$. In this case, the oracle constructing $s_j$ is exact; hence, the only source of error is in $t_j$.

\begin{lem}\label{exp-error}
Let $m,n \in \mathbb N$ such that $m \geq n+1$ and $N=2^n$.
Let $y_j = t_j - s_j/N$ for $j = 0, \dots, N-1$, where $t_j$ and $s_j$ are provided by $m$-bit precision oracles. Let $\widetilde{y}_j$ denote the $m$-bit approximation of $y_j$.
The errors in the complex exponential satisfy
\begin{equation}
| e^{-i \pi N y_j} - e^{-i \pi N \widetilde{y}_j} | \le \pi N  2^{-m+1}
\end{equation}
uniformly in $j$.
\end{lem}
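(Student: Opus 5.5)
The plan is to reduce the claim to a Lipschitz estimate for the map $y \mapsto e^{-i\pi N y}$ on the real line, and then to bound the difference $|y_j - \widetilde y_j|$ coming from the finite-precision oracles.

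\textbf{Step 1: Lipschitz bound.} For real $a,b$ we have
\[
|e^{-ia} - e^{-ib}| = 2\,|\sin((a-b)/2)| \le |a-b|.
\]
This follows from the mean value theorem applied to $\theta \mapsto e^{-i\theta}$, whose derivative has modulus $1$. Taking $a = \pi N y_j$ and $b = \pi N \widetilde y_j$ gives
\[
|e^{-i\pi N y_j} - e^{-i\pi N \widetilde y_j}| \le \pi N\,|y_j - \widetilde y_j|.
\]
It therefore suffices to show that $|y_j - \widetilde y_j| \le 2^{-m+1}$ uniformly in $j$.

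\textbf{Step 2: the approximation error in $y_j$.} Write $\widetilde y_j = t_j^{(m)} - s_j/N$. Under the standing assumption $m \ge n+1$, the oracle $O^{(m)}_{\vec s}$ is exact. The reason is that $s_j/N$ has only $n$ fractional bits, and $s_j$ is determined by rounding, which uses bits of $t_j$ only up to position $n+1 \le m$. Hence $s_j$ computed from $t_j^{(m)}$ coincides with the true $s_j$, and the subtraction in \cref{frac-sub} is exact in fixed point. The only remaining error is the truncation of $t_j$, for which $|t_j - t_j^{(m)}| \le 2^{-m}$. This already gives a bound of $2^{-m} \le 2^{-m+1}$.

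The factor of $2$ in the statement can absorb one further unit of rounding error in the last place. Two possible sources of such an error are the scaling by $N$ (or $2N$) together with the sign bit, and the case where the oracle rounds instead of truncating. To cover both, I would write
\[
|y_j - \widetilde y_j| \le |t_j - t_j^{(m)}| + |s_j/N - \widetilde{s_j/N}| \le 2^{-m} + 2^{-m}.
\]

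\textbf{Expected obstacle.} The one delicate point is justifying that the rounding step producing $s_j$ is unaffected by truncating $t_j$. If the true $t_j$ lies exactly at a rounding boundary, truncation could flip $s_j$. I would handle this by noting that in that case both choices of $s_j$ satisfy $|t_j - s_j/N| \le 1/2N$. Consistently using the computed $s_j$ in both $y_j$ and $\widetilde y_j$, that is, defining $y_j$ with the $s_j$ produced by the circuit, removes the issue. The bound then follows uniformly in $j$.
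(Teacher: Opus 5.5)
Your proposal is correct and follows the paper's proof: you combine the Lipschitz bound for $y \mapsto e^{-i\pi N y}$ with exactness of $s_j$ for $m \ge n+1$, so that $|y_j - \widetilde y_j| = |t_j - t_j^{(m)}| \le 2^{-m}$, which already gives the stronger bound $\pi N 2^{-m}$, as in the paper. The factor-of-two hedging and the boundary concern are unnecessary, because truncating $t_j$ to $m \ge n+1$ bits leaves bit $n+1$ unchanged, and that bit alone decides the rounding, so the computed $s_j$ can never flip.
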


\begin{proof}
We have $|t_j - \widetilde{t}_j| \leq 2^{-m}$. Hence, we have
\begin{align}
 | e^{-i \pi N y_j} - e^{-i \pi N \widetilde{y}_j}  |
&= \left| e^{-i \pi N (t_j - s_j/N)} - e^{-i \pi N (\widetilde{t}_j - s_j/N)} \right| \\
&\le \pi N \bigl| (t_j - s_j/N) - (\widetilde{t}_j - s_j/N) \bigr| \le \pi N\, 2^{-m}
\end{align}
uniformly in $j$.  
\end{proof}

We now bound the error in the Chebyshev polynomials. Note that the Chebyshev polynomials are evaluated at two inputs,  $x_j := 2j/N - 1$ and $y_j := (t_j - s_j/N)$ Since $j = 0, \dots, N-1$, the values $2j/N - 1$ can be computed exactly (see \cref{ur1}), as we use at least $n$ qubits. On the other hand, there is an approximation error in computing $t_j - s_j/N$ via the oracles. Hence, we treat these cases separately below.  In \cref{cheb-error}, we continue to denote $y_j = 2N\left(t_j - \frac{s_j}{N}\right)$ for simplicity.

\begin{lem}\label{cheb-error}
Let $m,n \in \mathbb N$ such that $m \geq n+1$.
Let $q \le K$ and $j = 0, \dots, N-1$. Let $T_q(y_j) = \cos(q\arccos(y_j))$ denote the degree-$q$ Chebyshev polynomial of the first kind.
The errors in computing the Chebyshev polynomials can be bounded as follows:
\begin{enumerate}
    \item For $x_j = 2 j / N - 1$, the error in the QSP implementation due to $p$-bit finite-precision evaluation of $\arccos(x)$ satisfies
    \begin{equation}
    |T_q(x_j) - \widehat{T}_q(x_j)| \le q 2^{-p+1} \le K 2^{-p+1},
    \end{equation}
    uniformly in $j$.

    \item For $y_j = 2N(t_j - s_j/N)$, where $t_j$ and $s_j$ are provided by $m$-bit precision oracles, let $\widehat{y}_j$ denote the $m$-bit approximation. Let $y_j^* \in (\min(y_j, \widehat{y}_j), \max(y_j, \widehat{y}_j))$. The total Chebyshev polynomial error satisfies
    \begin{equation}
    |T_q(y_j) - \widehat{T}_q(y_j)| \le q \left( 2^{-p+1} + \frac{N 2^{-m+1}}{\sqrt{1 - (y_j^*)^2}} \right) \le K \left( 2^{-p+1} + \frac{N2^{-m+1}}{\sqrt{1 - (y_j^*)^2}} \right).
    \end{equation}
\end{enumerate}
\end{lem}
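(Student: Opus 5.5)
The plan is to use the angle representation $T_q(x)=\cos(q\arccos x)$. The QSP circuit of \cref{ur1} and \cref{uu0} produces the top-left entry $\cos(q\widehat{\theta})$, where $\widehat{\theta}$ is the $p$-bit approximation of the relevant arccosine, because $R_X(-q\widehat\theta)=e^{iq\widehat\theta\sigma_X}$ has diagonal entries $\cos(q\widehat\theta)$. Every error estimate therefore reduces to bounding a difference of cosines of angles. The key tool is that $\theta\mapsto\cos(q\theta)$ is $q$-Lipschitz, so
\[
|\cos(q\theta)-\cos(q\theta')|\le q|\theta-\theta'|.
\]
The statement then splits into an angle-error budget for each case.

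For part (1), $x_j=2j/N-1$ is stored exactly in the $n$-qubit register, as shown in the construction of \cref{ur1}. The only error is therefore the arccosine evaluation. By \cref{p-bit-approx} we have $|\widehat\theta_j-\arccos(x_j)|\le 2^{-p+1}$. The Lipschitz bound then gives $|T_q(x_j)-\widehat T_q(x_j)|\le q2^{-p+1}$, and $q\le K$ gives the second inequality.

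For part (2) there are two error sources: the input $y_j$ is replaced by $\widehat y_j$, and $\arccos(\widehat y_j)$ is then approximated to $p$ bits. I would split the angle error with the triangle inequality:
\[
|\arccos(y_j)-\widehat{\arccos}(\widehat y_j)|\le |\arccos(y_j)-\arccos(\widehat y_j)|+|\arccos(\widehat y_j)-\widehat{\arccos}(\widehat y_j)|.
\]
The second term is at most $2^{-p+1}$ by \cref{p-bit-approx}. For the first term I apply the mean value theorem to $\arccos$, whose derivative is $-1/\sqrt{1-y^2}$. This gives a point $y_j^*$ strictly between $y_j$ and $\widehat y_j$ with
\[
|\arccos(y_j)-\arccos(\widehat y_j)|=\frac{|y_j-\widehat y_j|}{\sqrt{1-(y_j^*)^2}}.
\]

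It remains to bound $|y_j-\widehat y_j|$. Since $m\ge n+1$, $s_j/N$ is exact, as noted before \cref{exp-error}. The error in $t_j-s_j/N$ is then at most $2^{-m}$, as in the proof of \cref{exp-error}. The bit-shift multiplication by $2N$ scales this to at most $2N2^{-m}=N2^{-m+1}$. Combining the pieces with the $q$-Lipschitz bound yields the stated estimate, and $q\le K$ gives the final inequality.

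The main obstacle is making the mean value step legitimate. We need $y_j$ and $\widehat y_j$ to lie in $[-1,1]$, since $|t_j-s_j/N|\le 1/(2N)$ gives $|y_j|\le 1$ but rounding could push $\widehat y_j$ outside. We also need $y_j^*$ to stay in the open interval $(-1,1)$ so that the derivative is finite. I would handle this by assuming, or enforcing by clamping, that $\widehat y_j\in[-1,1]$. Then, by convexity of the interval, the open segment between $y_j$ and $\widehat y_j$ lies in $(-1,1)$ whenever $y_j\ne\widehat y_j$. The case $y_j=\widehat y_j$ is trivial, since the first term vanishes. The resulting bound is implicit in $y_j^*$, which is exactly what feeds the quantity $\kappa$ later.
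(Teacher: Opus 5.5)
Your proposal is correct and follows the paper's proof essentially step for step. Both arguments use the $q$-Lipschitz (mean-value) bound for $\theta\mapsto\cos(q\theta)$. For part (1), both use that $x_j$ is stored exactly, so only the $2^{-p+1}$ angle error contributes. For part (2), both split the angle error by the triangle inequality into a mean-value-theorem term for $\arccos$, using $|y_j-\widehat y_j|\le N2^{-m+1}$, plus the $2^{-p+1}$ arccosine-evaluation term.

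Your extra check that $y_j,\widehat y_j\in[-1,1]$ is a point the paper leaves implicit. It is what makes the mean value step legitimate and keeps $y_j^*\in(-1,1)$, so it only strengthens the argument.
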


\begin{proof}
Let $\theta_j = \arccos(\cdot)$, where the argument is either $x_j$ or $y_j$, and define $\delta_{\theta_j} \coloneqq \lvert \theta_j - \widehat{\theta}_j \rvert$. 
\begin{enumerate}
\item Let $x_j = 2j/N - 1$.
The only source of error is the finite-precision evaluation of $\arccos(x_j)$ to $p$ bits, giving a rotation-angle error $\delta_{\theta_j} \le 2^{-p+1}$. By the mean-value theorem for the cosine function, the Chebyshev polynomial error is bounded by
\begin{equation}
|T_q(x_j) - \widehat{T}_q(x_j)| = |\cos(q \theta_j) - \cos(q (\theta_j + \delta_{\theta_j}))| \le q  \delta_{\theta_j} \le q  2^{-p+1} \le K  2^{-p+1}.
\end{equation}

\item 
Let $y_j = 2N(t_j - s_j/N)$, where $t_j$ and $s_j/N$ are provided by an $m$-bit precision oracle. We have $|y_j - \widehat{y}_j| \le 2N 2^{-m} = N 2^{-m+1}$. The rotation angle is $\theta_j = \arccos(y_j)$, while the computed rotation is $\widehat{\theta}_j$ with finite-precision $p$-bit evaluation. Applying the mean-value theorem to $\arccos$, there exists $y_j^* \in (\min(y_j, \widehat{y}_j), \max(y_j, \widehat{y}_j))$ such that
\begin{equation}
|\theta_j - \arccos(\widehat{y}_j)| \le \frac{|y_j - \widehat{y}_j|}{\sqrt{1 - (y_j^*)^2}} \le \frac{N 2^{-m+1}}{\sqrt{1 - (y_j^*)^2}}.
\end{equation}
Let $\widehat{\theta}_j = \arccos(\widehat{y}_j)$.
By the triangle inequality, we have
\begin{equation}
\delta_{\theta_j} = |\theta_j - \widehat{\theta}_j| 
\le |\theta_j - \arccos(\widehat{y}_j)| + |\arccos(\widehat{y}_j) - \widehat{\theta}_j|
\le \frac{N2^{-m+1}}{\sqrt{1 - (y_j^*)^2}} + 2^{-p+1},
\end{equation}
Applying the mean-value theorem to the cosine function,
\begin{equation}
|T_q(y_j) - \widehat{T}_q(y_j)| = |\cos(q\theta_j) - \cos(q \widehat{\theta}_j)| \le q \delta_{\theta_j}
= q \left( 2^{-p+1} + \frac{N2^{-m+1}}{\sqrt{1 - (y^*_j)^2}} \right).
\end{equation}
Since the maximum Chebyshev degree in the truncated expansion is $q \le K$, the error is bounded by
\begin{equation}
|T_q(y_j) - \widehat{T}_q(y_j)| \le K \left( 2^{-p+1} + \frac{N 2^{-m+1}}{\sqrt{1 - (y^*_j)^2}} \right).
\end{equation}
\end{enumerate}
This completes the proof.
\end{proof}


We can now compute $\| D_{\vec v_r} - \widetilde{D}_{\vec v_r} \|$ and $\| D_{\vec u_r} - \widetilde{D}_{\vec u_r} \|$.
Using \cref{cheb-error}, we have
\begin{equation}
\| D_{\vec v_r} - \widetilde{D}_{\vec v_r} \| \le 
\begin{cases}
0, & r = 0, \\[2mm]
r 2^{-p+1}, & r \ge 1,
\end{cases}.
\end{equation}
Similarly, using \cref{exp-error}, and \cref{cheb-error}, we have
\begin{equation}\label{non-uniform-bound}
\| D_{\vec u_r} - \widetilde{D}_{\vec u_r} \| \leq 
\|\vec \alpha'_r\|_1 \left( \pi N 2^{-m} + K \left( 2^{-p+1} + \max_j \frac{N 2^{-m+1}}{\sqrt{1 - (y^*_j)^2}} \right)
\right),
\end{equation}
Let $\alpha' = \sum_{r=0}^{K-1} \|\vec \alpha'_r\|_1$.
Substituting the bounds in \cref{second-term-to-bound}, we have
\begin{align}\label{block-encoding-error-final}
\| \widetilde{V}_{\rm II} - \widetilde{F}_{\rm II} \| 
&\le \sqrt{\|\vec c_s\|_\infty} \sum_{r=0}^{K-1} \|\vec \alpha'_r\|_1
\left(
\pi N 2^{-m} 
+ K 2^{-p+1} + \max_j \frac{N K 2^{-m+1}}{\sqrt{1 - (y^*_j)^2}}
+ K 2^{-p+1} 
\right), \\
&=
\alpha' \sqrt{\|\vec c_s\|_\infty} \left(
\pi N 2^{-m} + K 2^{-p+2} + \max_j \frac{NK 2^{-m+1}}{\sqrt{1 - (y^*_j)^2}}
\right).
\end{align}
We can now complete our error analysis by specifying the parameters $(K, m, p)$ so that the total error remains below the prescribed tolerance $\epsilon$.

\begin{thm}\label{nuqft-error}
Let $\epsilon > 0$ and let $n,m,p,K \in \mathbb{N}$ with $m \ge n+1$ and $N = 2^n$. Let $\widetilde{V}_{\mathrm{II}} = \sum_{r=0}^{K-1} \widetilde{D}_{\vec u_r}\,\widetilde{F}\,\widetilde{M}_{\vec s}\,\widetilde{D}_{\vec v_r}$ denote the quantum implementation of the Type-II NUQFT with Chebyshev truncation degree $K$, $m$-bit oracle precision, and $p$-bit finite-precision evaluation of $\arccos(x)$ in the QSP implementation. 
Let $\|\vec{\alpha}'_r\|_1 = \sum_{q=0}^{K-1} \lvert \alpha'_{qr} \rvert$, where $\alpha'_{qr}$ is defined in \cref{alpha-prime-coeffs}. Define
\begin{equation}\label{xj-def}
\alpha' = \sum_{r=0}^{K-1} \|\vec \alpha'_r\|_1, \quad 
y_j = 2N(t_j - s_j/N), \quad
y_j^* \in (\min(y_j, \widehat{y}_j), \max(y_j, \widehat{y}_j)),
\end{equation}
where $\widehat{y}_j$ is a $m$-bit approximation of $y_j$.
If $(K,m,p)$ are chosen such that
\begin{align}
K &= \mathcal{O}\left( \frac{\log(\sqrt{N \| \vec c_s \|_{\infty}}/\epsilon)}{\log \log (\sqrt{N \| \vec c_s \|_{\infty}}/\epsilon)} \right),
\label{big-o-K} \\
m & =\left \lceil \log_2 \frac{8  \alpha' \sqrt{\|\vec c_s\|_\infty}  (\pi N + K \max_j (1-(y_j^*))^{-1/2}}{\epsilon} \right \rceil, 
\\
p & = \left \lceil \log_2 \frac{16  \alpha' \sqrt{\|\vec c_s\|_\infty}  K}{\epsilon} \right \rceil,
\end{align}
then the total error of the quantum algorithm satisfies
\begin{equation}
\|\widetilde{V}_{\mathrm{II}} - F_{\mathrm{II}}\| \le \epsilon.
\end{equation}
In particular, it suffices to choose $m$ and $p$ such that
\begin{align}
m &= \mathcal{O} \left( \log \left( \frac{ K \sqrt{\|\vec c_s\|_\infty}  ( N + K \max_j (1-(y_j^*))^{-1/2}}{\epsilon} \right) \right) \label{big-O-m}, \\
p &= \mathcal{O} \left(\log \left( \frac{\sqrt{\|\vec c_s\|_\infty} K^2}{\epsilon} \right) \right) \label{big-O-p}.
\end{align}
\end{thm}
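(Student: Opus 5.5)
The plan is to split the error with the triangle inequality \cref{triangle-ineq} and give each term half the budget:
\begin{equation}
\|\widetilde{F}_{\mathrm{II}} - F_{\mathrm{II}}\| \le \epsilon/2, \qquad \|\widetilde{V}_{\mathrm{II}} - \widetilde{F}_{\mathrm{II}}\| \le \epsilon/2.
\end{equation}

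\textbf{Truncation term.} Write $F_{\mathrm{II}} - \widetilde{F}_{\mathrm{II}} = (B - B_K)\circ (F M_{\vec s})$. Apply \cref{max-2norm-lem} with $C = F M_{\vec s}$. Since $\|F\|=1$ and $\|M_{\vec s}\| = \sqrt{\|\vec c_s\|_\infty}$, this gives
\begin{equation}
\|F_{\mathrm{II}} - \widetilde{F}_{\mathrm{II}}\| \le \sqrt{N\|\vec c_s\|_\infty}\,\|B-B_K\|_{\max}.
\end{equation}
It therefore suffices to take $\epsilon'_{\mathrm{trunc}} = \epsilon/(2\sqrt{N\|\vec c_s\|_\infty})$ in the Antol\'in--Townsend bound $K = \mathcal{O}(\log(1/\epsilon'_{\mathrm{trunc}})/\log\log(1/\epsilon'_{\mathrm{trunc}}))$. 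The constant $2$ is absorbed into the $\mathcal{O}$, which yields \cref{big-o-K}.

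\textbf{Implementation term.} Start from \cref{block-encoding-error-final}, which bounds this term by
\begin{equation}
\alpha'\sqrt{\|\vec c_s\|_\infty}\Bigl(\pi N 2^{-m} + K2^{-p+2} + \max_j \tfrac{NK2^{-m+1}}{\sqrt{1-(y_j^*)^2}}\Bigr).
\end{equation}
Split this into the $m$-dependent part and the $p$-dependent part, and make each at most $\epsilon/4$.

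For the $p$-part, $p \ge \log_2(16\alpha'\sqrt{\|\vec c_s\|_\infty}K/\epsilon)$ gives
\begin{equation}
\alpha'\sqrt{\|\vec c_s\|_\infty}\,K\,2^{-p+2} \le \epsilon/4.
\end{equation}

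For the $m$-part, I bound $\pi N 2^{-m} + 2NK\kappa 2^{-m}$, where $\kappa = \max_j(1-(y_j^*)^2)^{-1/2}$, by a constant times $2^{-m}(\pi N + K\kappa)$ after absorbing $N$ appropriately. The stated $m$ with the factor $8$ is meant to give $\epsilon/4$. This is the delicate spot: the factor $N$ multiplying $K\kappa$ must be tracked honestly. If the displayed choice of $m$ falls short by a factor $N$, the fix is to enlarge the argument of the logarithm, replacing $K\kappa$ by $NK\kappa$. That only changes $m$ by an additive $n$ and leaves \cref{big-O-m} unaffected up to the form of the constant. I would also note that $\kappa$ depends on $\widehat y_j$ and hence on $m$. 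To avoid circularity, either treat $\kappa$ as a given geometry parameter, as the statement does, or bound $y_j^*$ using $|y_j|\le 1$ away from the endpoints.

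\textbf{Remaining bookkeeping.} First, $m\ge n+1$ holds automatically for small $\epsilon$, since the argument of the logarithm already contains $N$. Second, since $\alpha'\le \sum_r\|\vec\alpha'_r\|_1$ is bounded by $K$ times a constant (the Bessel-coefficient sums are uniformly bounded for $\gamma=1/2$), we have $\alpha'=\mathcal{O}(K)$. This gives the simplified forms \cref{big-O-m} and \cref{big-O-p}, where $\alpha' K$ becomes $K^2$ in $p$ and $\alpha'$ becomes $K$ in $m$.

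The main obstacle I expect is this last point: justifying $\alpha'=\mathcal{O}(K)$ from \cref{a-coeffs}, together with the constant-tracking in the $m$-bound. For $\alpha'$, I would use $|J_\nu(x)|\le |x/2|^\nu/\nu!$, which makes each row sum $\|\vec\alpha'_r\|_1$ bounded by an absolute constant.
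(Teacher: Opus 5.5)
Your proposal is correct and follows the paper's proof essentially step for step: the same triangle split, the Hadamard-product lemma with $\epsilon'_{\mathrm{trunc}}=\epsilon/(2\sqrt{N\|\vec c_s\|_\infty})$, the $\epsilon/4$ budgets for the $p$- and $m$-parts, and $\alpha'=\mathcal{O}(K)$ from Bessel decay. Your row-sum bound via $|J_\nu(x)|\le|x/2|^\nu/\nu!$ is a sounder justification than the paper's, which counts only $K$ summands $|\alpha'_{qr}|$ when there are $K^2$. Your suspicion about $m$ is well founded and you can state it outright rather than conditionally: the paper's own derivation gives $m\ge\log_2\bigl(4\alpha'\sqrt{\|\vec c_s\|_\infty}(\pi N+2NK\kappa)/\epsilon\bigr)$, which is essentially your $NK\kappa$ correction and not the displayed formula, and the $\mathcal{O}$-form of $m$ is unaffected.
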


\begin{proof}
We have
\begin{equation}\label{eq:triangle}\|\widetilde{V}_{\mathrm{II}} - F_{\mathrm{II}}\| \le \underbrace{\|\widetilde{F}_{\mathrm{II}} - F_{\mathrm{II}}\|}_{\text{truncation error}} + \underbrace{\|\widetilde{V}_{\mathrm{II}} - \widetilde{F}_{\mathrm{II}}\|}_{\text{block-encoding error}},
\end{equation}
where the first term is the Chebyshev truncation error and the second term is the block-encoding error. Using \cref{max-2norm-lem}, the first term in \cref{triangle-ineq} can be expressed as
\begin{align}
\bigl\lVert \widetilde{F}_{\mathrm{II}} - F_{\mathrm{II}} \bigr\rVert
&= \bigl\lVert (A - A_K) \circ (F M_{\vec{s}}) \bigr\rVert \\
&\le \sqrt{N} \lVert A - A_K \rVert_{\max}  \lVert F M_{\vec{s}} \rVert \\
&\le \lVert \vec{c}_s \rVert_{\infty}  \sqrt{N} \lVert A - A_K \rVert_{\max}.
\end{align}
The last equality follows since $F$ is unitary, and we have 
\begin{equation}
\| F M_{\vec s} \| = \| M_{\vec s}\| = \sqrt{\| \vec c_s \|_{\infty}}.
\end{equation}
Hence, in order to ensure that $\|\widetilde{F}_{\operatorname{II}} - F_{\operatorname{II}}\| \leq \epsilon/2$ for some $\epsilon > 0$ , it suffices to choose
$\epsilon'_{\operatorname{trun}} = \epsilon/2 \sqrt{N \| \vec c_s \|_{\infty}}$. Hence, it suffices to choose $K$ such that
\begin{equation}\label{K-bigO}
K = \mathcal{O} \left( \frac{\log(\sqrt{N \| \vec c_s \|_{\infty}}/\epsilon}{\log \log (\sqrt{N \| \vec c_s \|_{\infty}}/\epsilon)} \right),
\end{equation}
From \eqref{block-encoding-error-final}, the block-encoding error satisfies
\begin{equation}\label{second-error}
\|\widetilde{V}_{\rm II} - \widetilde{F}_{\rm II}\| 
\le \alpha' \sqrt{\|\vec c_s\|_\infty} \left(
\pi N 2^{-m} + K 2^{-p+2} + \max_j \frac{NK 2^{-m+1} }{\sqrt{1-(y_j^*)^2}}
\right).
\end{equation}
Let $C := \alpha' \sqrt{\|\vec c_s\|_\infty}$ and $\kappa := \max_j \frac{1}{\sqrt{1-(y_j^*)^2}}$.
Using this notation, \cref{second-error} can be succinctly expressed as
\begin{equation}
\|\widetilde{V}_{\rm II} - \widetilde{F}_{\rm II}\| 
\le C  ( 2^{-m} (\pi N + 2N K \kappa) + K 2^{-p+2}  ).
\end{equation}
To ensure that the right-hand side is at most $\epsilon/2$, we can allocate the error budget equally between the oracle and QSP contributions, i.e.,
\begin{align}
C 2^{-m} (\pi N + 2N K \kappa) &\le \frac{\epsilon}{4}, \\
C K 2^{-p+2} &\le \frac{\epsilon}{4}.
\end{align}
Solving for $m$, we obtain
\begin{equation}
C2^{-m} (\pi N + 2N K \kappa) \le \frac{\epsilon}{4} 
\iff 
m \ge \log_2  \left( \frac{4 C (\pi N + 2N K \kappa)}{\epsilon}\right).
\end{equation}
Solving for $p$, we obtain
\begin{equation}
C K 2^{-p+2} \le \frac{\epsilon}{4}
\iff 
2^{-p} \le \frac{\epsilon}{16 C K}
\iff 
p \ge \log_2 \left( \frac{16 C K}{\epsilon} \right).
\end{equation}
Substituting the definitions of $C$ and $\kappa$ yields
\begin{align}
m &= \left \lceil \log_2 \left(\frac{4 \alpha' \sqrt{\|\vec c_s\|_\infty} (\pi N + 2N K \max_j (1-(y_j^*))^{-1/2})}{\epsilon} \right) \right \rceil \label{first-m}, \\
p &= \left \lceil \log_2 \left( \frac{16 \alpha' \sqrt{\|\vec c_s\|_\infty} K}{\epsilon} \right) \right \rceil \label{first-p}.
\end{align}
With these choices of $m$ and $p$, the error satisfies $\|\widetilde{V}_{\mathrm{II}} - \widetilde{F}_{\mathrm{II}}\| \le \epsilon$.  
Combined with the choice of $K$, we therefore have
\begin{equation}
\|\widetilde{V}_{\mathrm{II}} - F_{\mathrm{II}}\| \le \epsilon.
\end{equation}
We now further simplify the choices of $m$ and $p$. It is well known \cite{watson1922treatise,abramowitz1964handbook} that for $\nu \ge 0$ and $0 \le x \ll \nu$, we have
\begin{equation}\label{bessel-asymp}
J_\nu(x) \sim \frac{(x/2)^{\nu}}{\Gamma(\nu+1)},
\end{equation}
where $\Gamma$ is the Gamma function. Furthermore, recall that $\alpha'_{qr} = \mathcal{O}(\alpha_{qr})$ (see \cref{alpha-prime-coeffs}), where $\alpha_{qr}$ is defined in \cref{a-coeffs} with $\gamma = \tfrac{1}{2}$ by
\begin{equation}\label{cases}
\alpha_{qr}=
\begin{cases}
4 i^r  J_{\frac{q + r}{2}}\left(-\pi/4 \right)  J_{\frac{r - q}{2}}\left(-\pi/4\right), & \text{if } \operatorname{mod}(|q - r|, 2) = 0, \\
0, & \text{otherwise}
\end{cases}
\end{equation}
The condition $\operatorname{mod}(|q-r|,2)=0$ implies that $q$ and $r$ have the same parity. Hence, the order of the Bessel function $J$ is an integer in \cref{cases}. Since we are interested in $|\alpha_{qr}|$, it suffices to evaluate the Bessel functions at $\pi/4$ and to assume $(r-q)/2 \ge 0$, as Bessel functions of integer order are either even or odd.
\cref{bessel-asymp} now implies that $|\alpha_{qr}| = \mathcal{O}(1)$, since for the fixed value $x_0 = \pi/4$, the magnitude of $J_\nu(x_0)$ decreases rapidly as $\nu$ increases. Hence, $\alpha' = \mathcal{O}(K)$, since each of the $K$ summands $|\alpha'_{qr}|$ is $\mathcal{O}(1)$.
Based on \cref{first-m} and \cref{first-p}, it suffices to choose $m$ and $p$ such that
\begin{align}
m = \mathcal{O} \left( \log \left( \frac{ N K \sqrt{\|\vec c_s\|_\infty}  ( 1 + K \max_j (1-(y_j^*))^{-1/2}}{\epsilon} \right) \right), \quad
p = \mathcal{O} \left(\log \left( \frac{\sqrt{\|\vec c_s\|_\infty} K^2}{\epsilon} \right) \right).
\end{align}
This completes the proof.
\end{proof}

The expressions in \cref{nuqft-error} are rather involved.
Noting that $N = 2^n$ and $\|\vec c_s\|_\infty = \mathcal{O}(2^n)$, we can further simplify the $\mathcal{O}(\cdot)$ estimates in \cref{nuqft-error} in terms of $n$, $\epsilon$, and $\kappa := \max_j \frac{1}{\sqrt{1-(y_j^*)^2}}$. Noting that 
$\sqrt{N \| \vec c_s \|_{\infty}} = \mathcal{O}(2^n)$, we have
\begin{align}
\log\left( \frac{\sqrt{N \| \vec c_s \|_{\infty}}}{\epsilon} \right)
= \Theta(n) + \log\left( \frac{1}{\epsilon} \right) = \Theta \left( n + \log \left( \frac{1}{\epsilon} \right) \right).
\end{align}
Hence, we can simplify the complexity estimate for $K$ in \cref{big-o-K} to
\begin{equation}
K = \mathcal{O}\left( 
\frac{n + \log (1/\epsilon)}{\log(n + \log (1/\epsilon))}\right) 
=
 \mathcal{O}\left( n + \log\left(\frac{1}{\epsilon}\right) \right).
\end{equation}
Similarly, we obtain
\begin{align}
p &= \mathcal{O}\left( n + \log(1/\epsilon) + \log K \right), \\
&= \mathcal{O}\left( n + \log\left(\frac{1}{\epsilon}\right) + \log\left(n + \log\left(\frac{1}{\epsilon}\right) \right) \right)
= \mathcal{O}\left( n + \log\left(\frac{1}{\epsilon}\right) \right).
\end{align}
The second last equality follows since $\log K = \mathcal{O}(n + \log(1/\epsilon))$. In a similar manner, we obtain
\begin{align}
m &= \mathcal{O}\left(\log\left( \frac{NK \sqrt{\|\vec c_s\|_\infty} (1 + K \kappa)}{\epsilon} \right)\right),\\
&= \mathcal{O}\left(\log\left( \frac{ K 2^{3n/2} (1 + K \kappa)}{\epsilon} \right)\right), \\
&= \mathcal{O}\left(\log K + n + \log (1 + K \kappa ) + \log\left(\frac{1}{\epsilon}\right) \right) 
= \mathcal{O}\left(n + \log\left(\frac{1}{\epsilon}\right) + \log (1 + K \kappa ) \right).
\end{align}

\begin{cor}\label{final-estimate-cor}
Let $\epsilon > 0$ and let $n, m, p, K \in \mathbb{N}$ with $m \ge n+1$. Let $\widetilde{V}_{\mathrm{II}}$, $y_j$ and $y_j^*$ be defined as in \cref{nuqft-error}. If the parameters $(K,m,p)$ are chosen such that
\begin{align}
K, p =  \mathcal{O}\left( n + \log\left(\frac{1}{\epsilon}\right) \right), \quad
m = \mathcal{O}\left(n + \log\left(\frac{1}{\epsilon}\right) + \log (1 + K \kappa ) \right),
\end{align}
where $\kappa := \max_j \frac{1}{\sqrt{1-(y_j^*)^2}}$, then 
$\|\widetilde{V}_{\mathrm{II}} - F_{\mathrm{II}}\| \le \epsilon$.
\end{cor}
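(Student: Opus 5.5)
The plan is to derive \cref{final-estimate-cor} directly from \cref{nuqft-error}: check that the simplified parameter choices are at least as large as those required there. The explicit sufficient conditions in the proof of \cref{nuqft-error} are the lower bounds \cref{first-m} and \cref{first-p} on $m$ and $p$, together with a truncation requirement $K \ge K_0(\epsilon'_{\operatorname{trunc}})$ with $\epsilon'_{\operatorname{trunc}} = \epsilon/(2\sqrt{N\|\vec c_s\|_\infty})$. All three are monotone, since increasing $K$, $m$ or $p$ only decreases the relevant error terms. So it suffices to show that each requirement is bounded above by a constant multiple of the stated simplified quantity, and then to read the $\mathcal{O}(\cdot)$ choices in the corollary as ``at least a suitable constant times''.

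First I will bound the auxiliary quantities in terms of $n$ and $\epsilon$. Since $M_{\vec s}$ has $N$ columns, each with a single one, $\|\vec c_s\|_\infty \le N = 2^n$. Hence $\sqrt{N\|\vec c_s\|_\infty} \le 2^n$, and $\log(\sqrt{N\|\vec c_s\|_\infty}/\epsilon) \le n + \log(1/\epsilon) =: L_{n,\epsilon}$. Because $x/\log x \le x$ for large $x$, the truncation requirement \cref{big-o-K} is met by $K = \mathcal{O}(L_{n,\epsilon})$.

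Next I will bound $\alpha' = \sum_r \|\vec\alpha'_r\|_1$. Here I use the Bessel decay in \cref{bessel-asymp}: the coefficients satisfy $|\alpha'_{qr}| \lesssim |J_{(q+r)/2}(\pi/4)|\,|J_{|r-q|/2}(\pi/4)|$, and these decay superexponentially in the order. So $\alpha'$ is in fact $\mathcal{O}(1)$, and the crude bound $\alpha' = \mathcal{O}(K^2)$ already suffices. Substituting into \cref{first-p} gives
\[ p \ge \log_2\bigl(16\alpha'\sqrt{\|\vec c_s\|_\infty}K/\epsilon\bigr) = \mathcal{O}\bigl(n/2 + \log(1/\epsilon) + 3\log K\bigr) = \mathcal{O}(L_{n,\epsilon}), \]
using $\log K \le \log L_{n,\epsilon} \le L_{n,\epsilon}$. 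Similarly, \cref{first-m} becomes
\[ m \ge \log_2\bigl(4\alpha'\sqrt{\|\vec c_s\|_\infty}\,N(\pi + 2K\kappa)/\epsilon\bigr) = \mathcal{O}\bigl(\tfrac{3n}{2} + \log(1/\epsilon) + \log K + \log(1+K\kappa)\bigr), \]
which is $\mathcal{O}(L_{n,\epsilon} + \log(1+K\kappa))$. The constraint $m \ge n+1$ is compatible with this, since $m$ may always be enlarged.

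The step I expect to be the main obstacle is the logical status of $\kappa$. The point $y_j^*$ comes from a mean-value argument and depends on $\widehat y_j$, hence on $m$ itself, so the choice of $m$ is mildly implicit. I would handle this by interpreting $\kappa$, as the corollary does, as the quantity defined in \cref{nuqft-error} for the chosen parameters. If needed, I would replace it with the $m$-independent bound $\sup$ over the interval between $y_j$ and $\widehat y_j$, noting that this interval shrinks as $m$ grows, so a larger $m$ never increases $\kappa$ beyond its value at the threshold. With monotonicity in hand, the conclusion $\|\widetilde V_{\mathrm{II}} - F_{\mathrm{II}}\| \le \epsilon$ follows directly from \cref{nuqft-error}.
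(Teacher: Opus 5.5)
Your proposal is correct and takes the paper's route. Like the paper, you substitute $N=2^n$, $\|\vec c_s\|_\infty \le 2^n$ and a polynomial-in-$K$ bound on $\alpha'$ into the sufficient conditions \cref{first-m}, \cref{first-p} and the truncation requirement from the proof of \cref{nuqft-error}, then absorb $\log K$ into $n+\log(1/\epsilon)$. Your extra care tightens points the paper leaves informal: you read the $\mathcal{O}(\cdot)$ choices as lower bounds with suitable constants, you bound $\alpha'$ over all $K^2$ coefficients (the paper's $\alpha'=\mathcal{O}(K)$ counts only $K$ summands, which is harmless inside the logarithm), and you flag that $\kappa$ depends on $m$ through $y_j^*$.
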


\begin{proof}
This follows from \cref{nuqft-error} and the discussion above.
\end{proof}

\section{Numerical Results}\label{numerics-res}
We present numerical results obtained from both classical simulations and hybrid classical–quantum implementations for small-scale instances.
The objective is to validate the new components introduced in the quantum algorithm rather than to re-establish the accuracy of the underlying classical method. The experiments therefore isolate the elements specific to our construction: the low-rank truncation used in the LCU representation, finite-precision discretization of node data, the reversible implementation of the $\arccos$ subroutine, the state-preparation routines $U_{v_{\vec r}}$ and $U_{u_{\vec r}}$, the small-$K$ LCU combination, and the block encoding of the matrix $M_{\vec s}$.
Accordingly, we run numerical tests for problem sizes accessible to classical simulation. The code used to generate the reported results is available in \cite{Junaid2026NUQFT}.

\subsection{Validation of Low-Rank Truncation Error Scaling}
The low-rank Chebyshev approximation underlying our construction was extensively analyzed and numerically validated in Antol\'{i}n and Townsend~\cite{ruiz2018nonuniform}. As our work focuses on translating this decomposition into a quantum block-encoding framework, we do not repeat those experiments and instead only validate the truncation behavior of the low-rank approximation for representative node configurations. 
The low-rank truncation error is characterized by
\begin{equation}
\bigl\lVert \widetilde{F}^{(K)}_{\mathrm{II}} - F_{\mathrm{II}} \bigr\rVert
\le \lVert \vec{c}_s \rVert_{\infty} \sqrt{N}
\lVert A - A_K \rVert_{\max}.
\end{equation}
Here, the superscript indicates that $\widetilde{F}^{(K)}_{\mathrm{II}}$ is defined with respect to the truncation parameter $K$.
While $\lVert A - A_K \rVert_{\max} = \mathcal{O}(\log(1/\epsilon))$,
a more precise estimate is
$5\gamma e^{W\!\left(\frac{\log(140/\epsilon)}{5\gamma}\right)}$,
as given in \cite{ruiz2017quantum}, where $W$ is Lambert's $W$ function.
Hence, we obtain the estimate
\begin{equation}
K =  
5\gamma e^{ \left(
\frac{\log \left(
\lVert \vec{c}_s \rVert_{\infty} \sqrt{N} \, 140/\epsilon
\right)}{5\gamma}
\right)}.
\end{equation}
For a range of transform sizes $N$, we numerically construct $\widetilde{F}_{\mathrm{II}}^{(K)}$ for increasing $K$ and measure the spectral-norm error $\|F_{\mathrm{II}} - \widetilde{F}_{\mathrm{II}}^{(K)}\|$. 
The experiments cover perturbed equispaced grids, clustered nodes, and randomly distributed nodes in $[0,1)$. \cref{fig:rank-error} confirms the predicted error baselines and shows that moderate $K$ suffices to achieve $\epsilon = 10^{-10}$.
This test also demonstrates that, for quantum implementations, the LCU qubit cost is modest. For instance, when $\epsilon = 10^{-5}$, approximately $4$ qubits are required.
\begin{figure}[htbp]
    \centering
    
    \begin{subfigure}{0.30\textwidth}
        \centering
        \includegraphics[width=\textwidth]{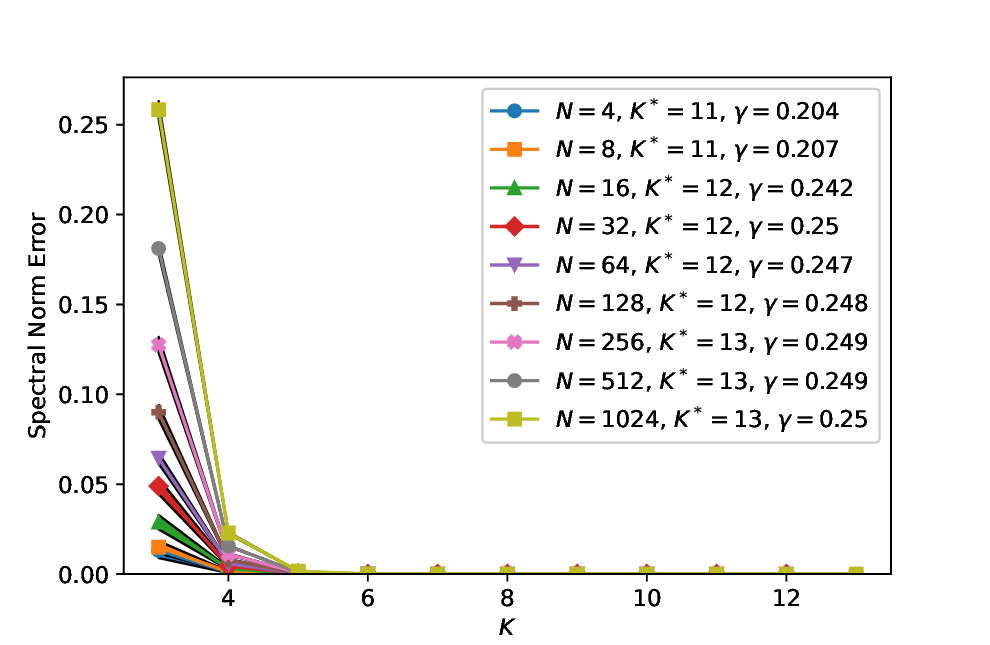}
        \caption{Perturbed equispaced nodes}
        \label{fig:rank-error-a}
    \end{subfigure}
    \hfill
    \begin{subfigure}{0.30\textwidth}
        \centering
        \includegraphics[width=\textwidth]{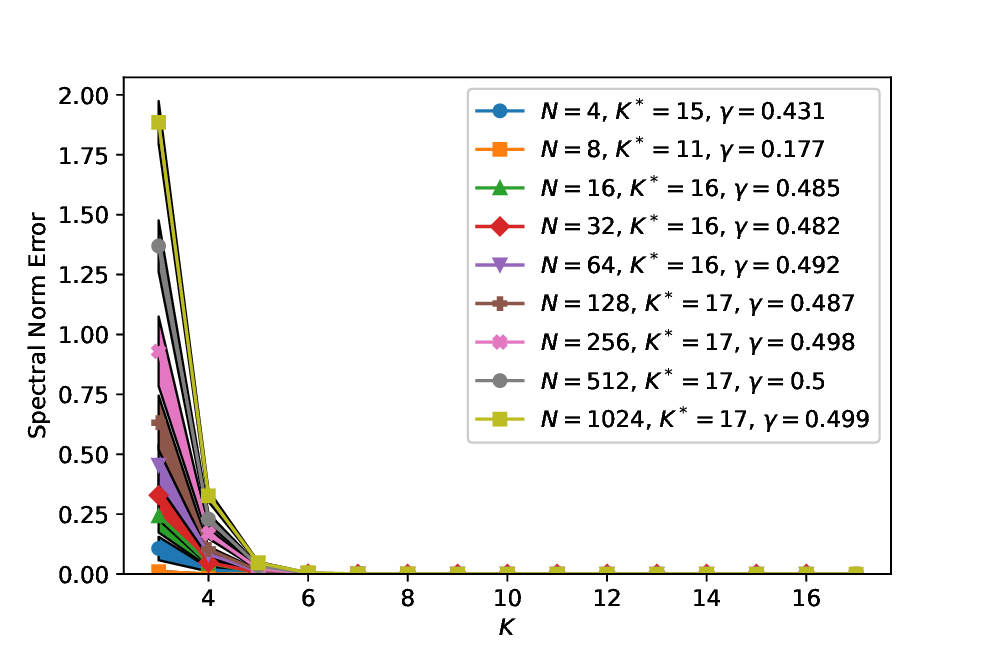}
        \caption{Clustered nodes}
        \label{fig:rank-error-b}
    \end{subfigure}
    \hfill
    \begin{subfigure}{0.30\textwidth}
        \centering
        \includegraphics[width=\textwidth]{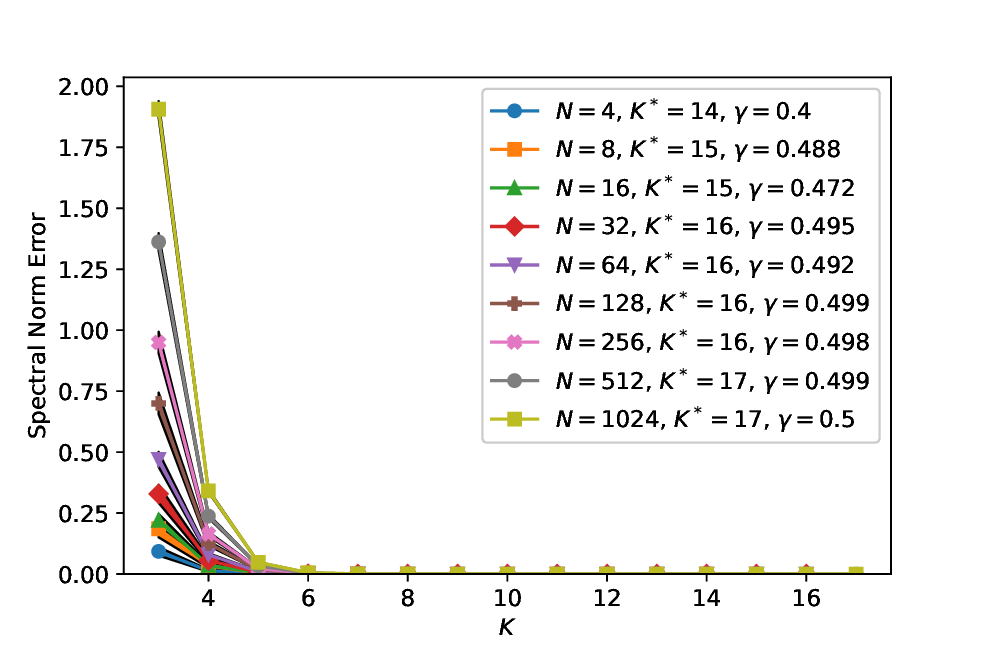}
        \caption{Uniformly random nodes}
        \label{fig:rank-error-c}
    \end{subfigure}
    \caption{
    $\| (F_{\mathrm{II}} - \widetilde{F}_{\mathrm{II}}^{(K)}) \|_2$
    as a function of $K$ for 
    $N = 2^n$, $n=2,\dots,10$, with accuracy $\epsilon = 10^{-10}$. 
    For each $N$ and node geometry, the error is averaged over $20$ independent 
    complex Gaussian test vectors. 
    $K^\ast$ denotes the minimal $K$ required to achieve the prescribed tolerance.
    }
    \label{fig:rank-error}
\end{figure}
 
The low-rank Chebyshev approximation underlying our construction was extensively analyzed and numerically validated in Antol\'{i}n and Townsend~\cite{ruiz2018nonuniform}. As our work focuses on translating this decomposition into a quantum block-encoding framework, we do not repeat those experiments and instead validate the newly introduced quantum components.

\subsection{Finite-Precision Effects}
We next investigate the effect of finite-precision node representation on the
accuracy of the transform. The parameter $m$ governs the precision with which
the non-uniform nodes $t_j$ and derived quantities such as
$s_j$ are represented. In the theoretical analysis presented in \cref{u-round}, the parameter $m$
scales logarithmically with the truncation parameter $K$ and the condition
parameter $\kappa$. The purpose of this experiment is to demonstrate that the
dependence of $m$ on both $\kappa$ and $K$ is sufficiently mild for practical
purposes. This is verified through classical emulation.
For a fixed transform size \(N\), we generate node sets using a function parameterized by \(\gamma\) such that
\begin{equation}
\left|t_j - \frac{s_j}{N}\right| = \gamma N, \qquad \gamma \in \left(0,\tfrac12\right),
\end{equation}
thereby producing a range of geometric configurations with varying \(\kappa\). Given a node set $\{x_j\} \subseteq [0,1)$, we can form the truncated representations
\begin{equation}
t_j^{(m)} = \frac{\mathrm{round}(2^m t_j)}{2^m},
\end{equation}
and construct the corresponding approximate transform
$\widetilde{F}_{\mathrm{II}}^{(m)}$ using these discretized nodes.
For each node set, we compute the parameter $\kappa$ and the minimal precision $m^*$ required to achieve spectral-norm error $\epsilon$.
The minimal precision $m^*$ is determined via a bracketing and binary
search procedure over $m$.   
As shown in \cref{tab:gamma_comparison}, $\kappa$ increases monotonically with $\gamma$, with particularly sharp growth once $\gamma \geq 0.49$. Despite this amplification, the required precision parameter $m^*$ remains essentially stable across the entire range.
This stability arises from compensating effects. Although $\kappa$ grows rapidly near the boundary, its overall scaling remains moderate (effectively sublinear), and the admissible $K$ also stays moderate. Most importantly, the precision depends on $\log(1+K\kappa)$, whose logarithmic structure mitigates the impact of large $\kappa$. Consequently, even when $\kappa$ becomes large, $\log(1+K\kappa)$ grows slowly and $m^*$ changes very little. Thus, the $\log(1+K\kappa)$ factor in \cref{u-round} is mild, and the parameters $m,n,p$ all scale as $\mathcal{O}\!\left(n + \log(1/\epsilon)\right)$ morally.
\begin{table}[h]
    \renewcommand{\arraystretch}{1.25}
    \centering
    \scalebox{0.80}{
    \begin{tabular}{c@{\hspace{1.5cm}}c}
    \begin{tabular}{c|ccccc}\hline\hline
        \# & $\gamma$ & $\kappa$ & $K$ & $\log(1+K\kappa)$ & $m^*$ \\\hline
        1  & 0        & 1        & 1  & 0.693147 & 7  \\
        2  & 0.0544444 & 1.00598 & 10 & 2.40332  & 15 \\
        3  & 0.108889  & 1.02459 & 11 & 2.5072   & 15 \\
        4  & 0.163333  & 1.05804 & 13 & 2.69155  & 15 \\
        5  & 0.217778  & 1.11091 & 14 & 2.80655  & 15 \\
        6  & 0.272222  & 1.19218 & 14 & 2.87303  & 15 \\
        7  & 0.326667  & 1.32088 & 15 & 3.03559  & 15 \\
        8  & 0.381111  & 1.54484 & 16 & 3.24717  & 15 \\
        9  & 0.435556  & 2.0363  & 17 & 3.57283  & 15 \\
        10 & 0.49      & 5.02519 & 17 & 4.45931  & 15 \\\hline\hline
    \end{tabular}
    &
    \begin{tabular}{c|ccccc}\hline\hline
        \# & $\gamma$ & $\kappa$ & $K$ & $\log(1+K\kappa)$ & $m^*$ \\\hline
        1  & 0.49   & 5.02519 & 17 & 4.45931 & 15 \\
        2  & 0.4911 & 5.32373 & 17 & 4.51638 & 13 \\
        3  & 0.4922 & 5.68359 & 17 & 4.58109 & 15 \\
        4  & 0.4933 & 6.12904 & 17 & 4.6558  & 13 \\
        5  & 0.4944 & 6.70032 & 17 & 4.74411 & 7  \\
        6  & 0.4955 & 7.47039 & 17 & 4.852   & 15 \\
        7  & 0.4966 & 8.58954 & 17 & 4.99058 & 12 \\
        8  & 0.4977 & 10.4377 & 17 & 5.18426 & 15 \\
        9  & 0.4988 & 14.4424 & 17 & 5.50745 & 15 \\
        10 & 0.4999 & 50.0025 & 17 & 6.74646 & 15 \\\hline\hline
    \end{tabular}
    \end{tabular}}
    \caption{Comparison of $\kappa$ and $\log(1+K\kappa)$ for two $\gamma$ regimes for $N = 2^6$, $\epsilon = 10^{-12}$.}
    \label{tab:gamma_comparison}
\end{table}

\subsection{Quantum Arccos Subroutine}\label{arccos-subroutine}
We examine the quantum implementation of the $\arccos$ subroutine using a reversible CORDIC-style routine with precision parameter $p$. The $\arcsin(x)$ algorithm of~\cite{burge2024quantumcordicarcsin} uses $5p-1$ qubits, while our $\arccos(x)$ implementation requires $7p+10$, preserving $\mathcal{O}(p)$ scaling but with a larger constant factor. Numerical tests with a \texttt{matrix-product-state simulator} for $p=1,\dots,6$ (with $m=6$) show that $p=7$ exceeds simulator capacity. Results for $p=1,3,5$ are shown in \cref{fig:odd-p-figs}. 
\begin{figure}[h]
    \centering
    
    \begin{subfigure}{0.30\textwidth}
        \centering
        \includegraphics[width=\textwidth]{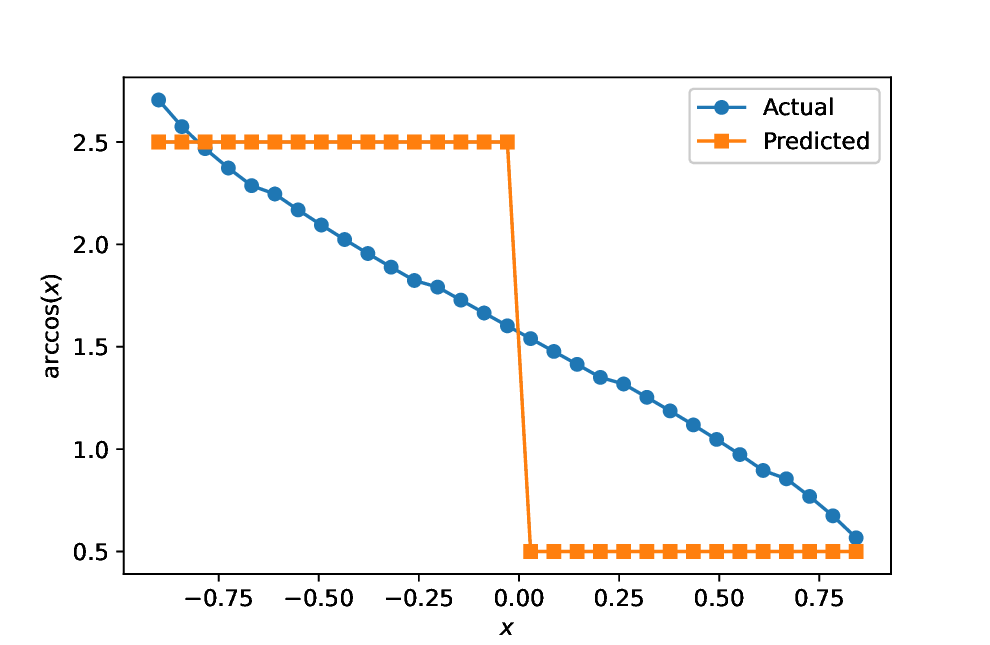}
        \caption{$p=1$}
    \end{subfigure}
    \hfill
    \begin{subfigure}{0.30\textwidth}
        \centering
        \includegraphics[width=\textwidth]{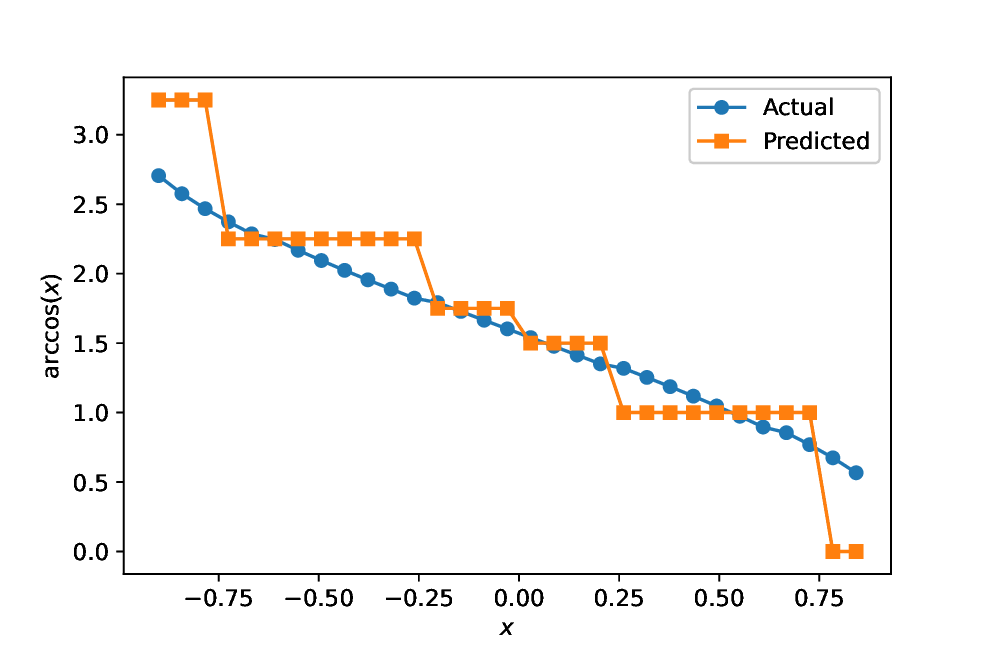}
        \caption{$p=3$}
    \end{subfigure}
    \hfill
    \begin{subfigure}{0.30\textwidth}
        \centering
        \includegraphics[width=\textwidth]{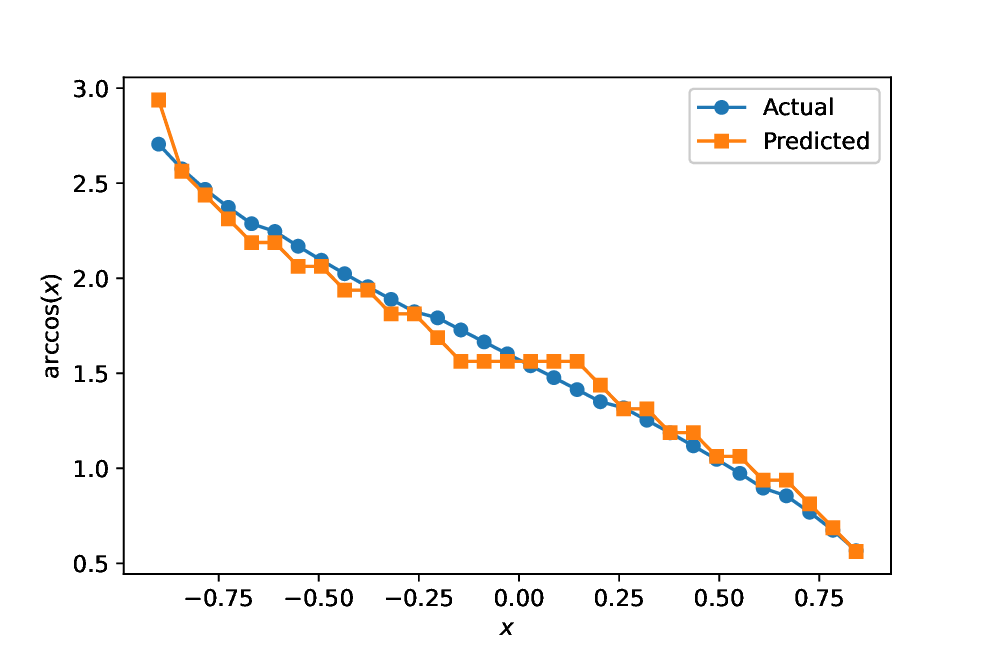}
        \caption{$p=5$}
    \end{subfigure}
    \caption{
    Comparison of actual and predicted values of $\arccos(x)$
    for odd $p$ values ($p = 1, 3, 5$).
    The corresponding mean absolute errors are $0.5195115053$, $0.2251069658$, and $0.0645826060$, respectively.
    }
    \label{fig:odd-p-figs}
\end{figure}
These results indicate a hardware bottleneck that restricts the algorithm’s simulation to small parameters. Consequently, a fully quantum implementation is not feasible with current hardware. 
Therefore, we adopt a hybrid classical–quantum approach, performing arithmetic operations classically while enabling quantum simulation for small parameters.

\begin{figure}[t]
    \centering

    \begin{subfigure}{0.30\textwidth}
        \centering
        \includegraphics[width=\textwidth]{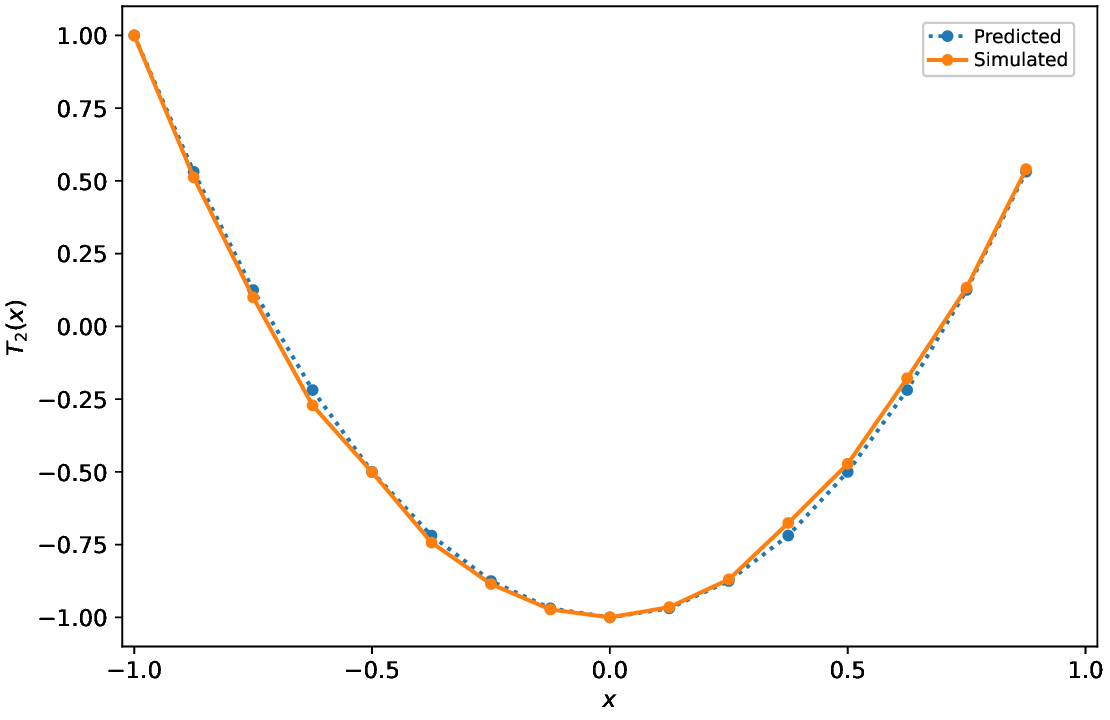}
        \caption{$n=4, p=5, r=2$}
    \end{subfigure}
    \hfill
    \begin{subfigure}{0.30\textwidth}
        \centering
        \includegraphics[width=\textwidth]{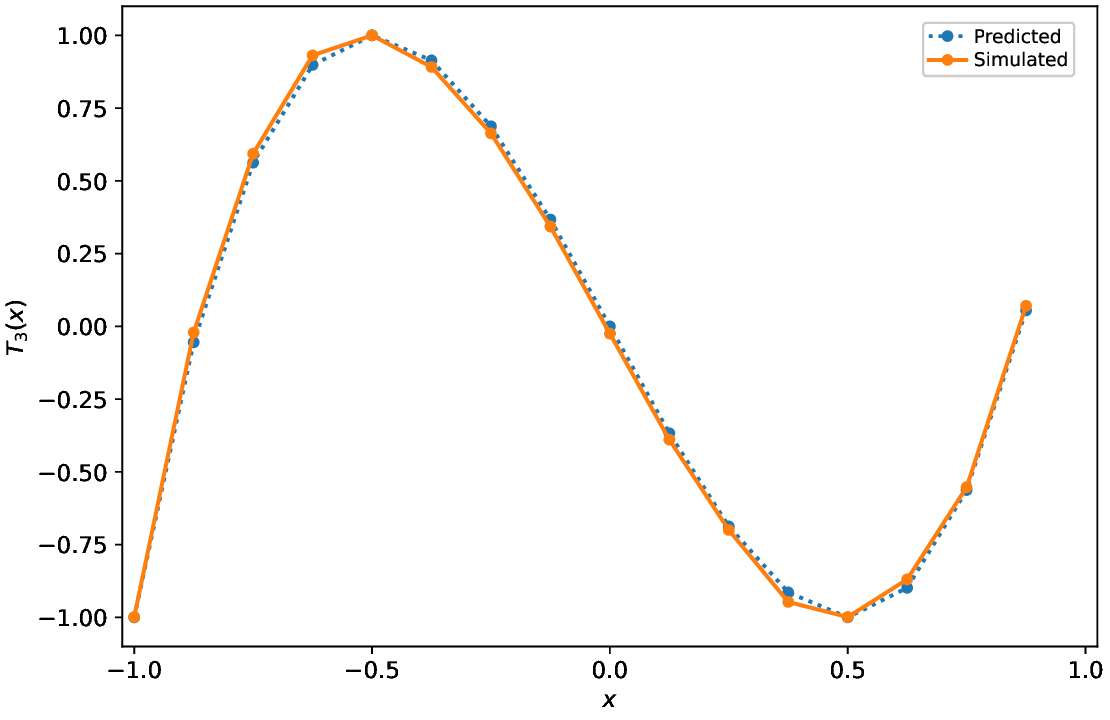}
        \caption{$n=4, p=5, r=3$}
    \end{subfigure}
    \hfill
    \begin{subfigure}{0.30\textwidth}
        \centering
        \includegraphics[width=\textwidth]{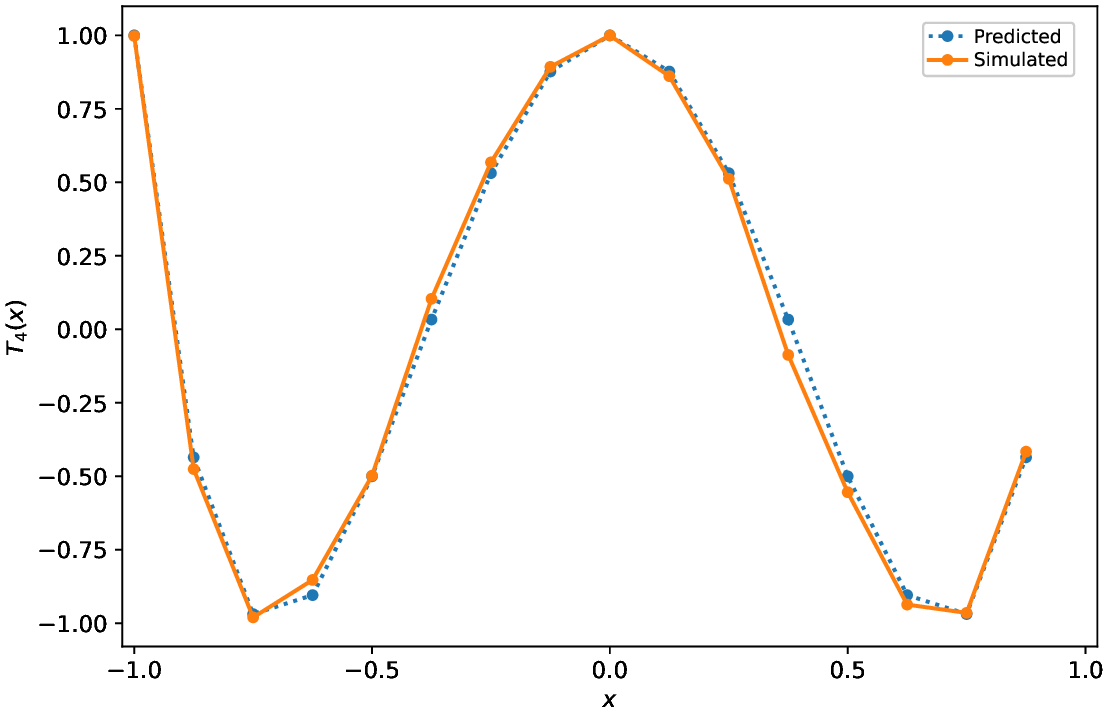}
        \caption{$n=4, p=5, r=4$}
    \end{subfigure}

    \vspace{0.4cm}

    \begin{subfigure}{0.30\textwidth}
        \centering
        \includegraphics[width=\textwidth]{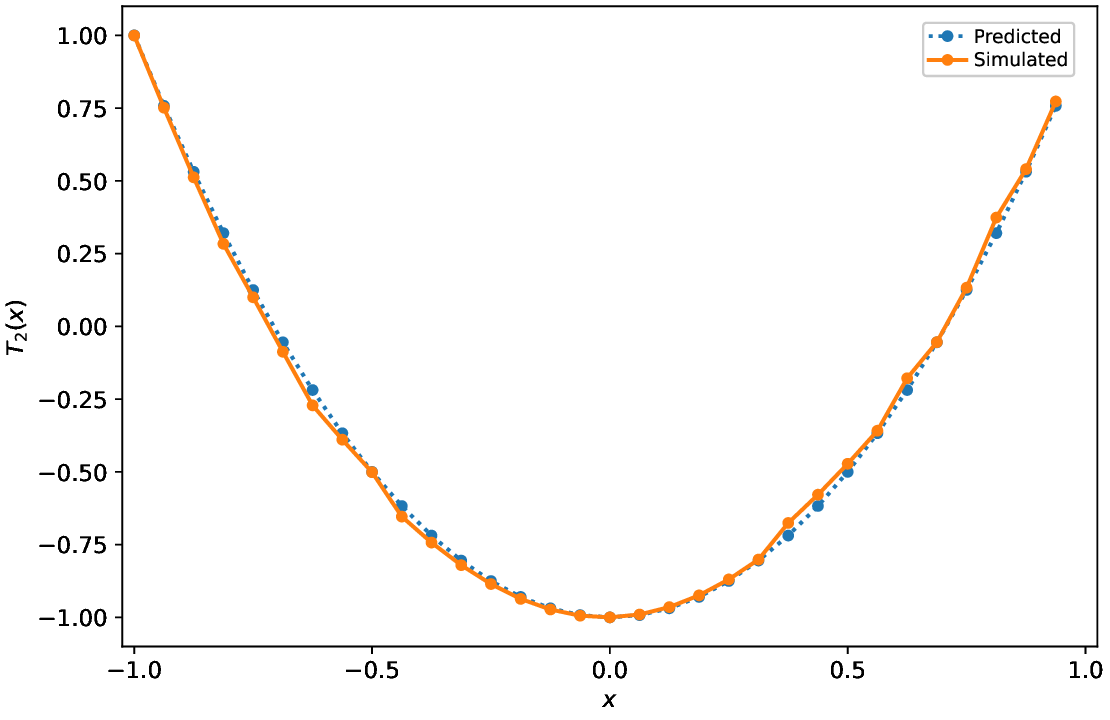}
        \caption{$n=5, p=5, r=2$}
    \end{subfigure}
    \hfill
    \begin{subfigure}{0.30\textwidth}
        \centering
        \includegraphics[width=\textwidth]{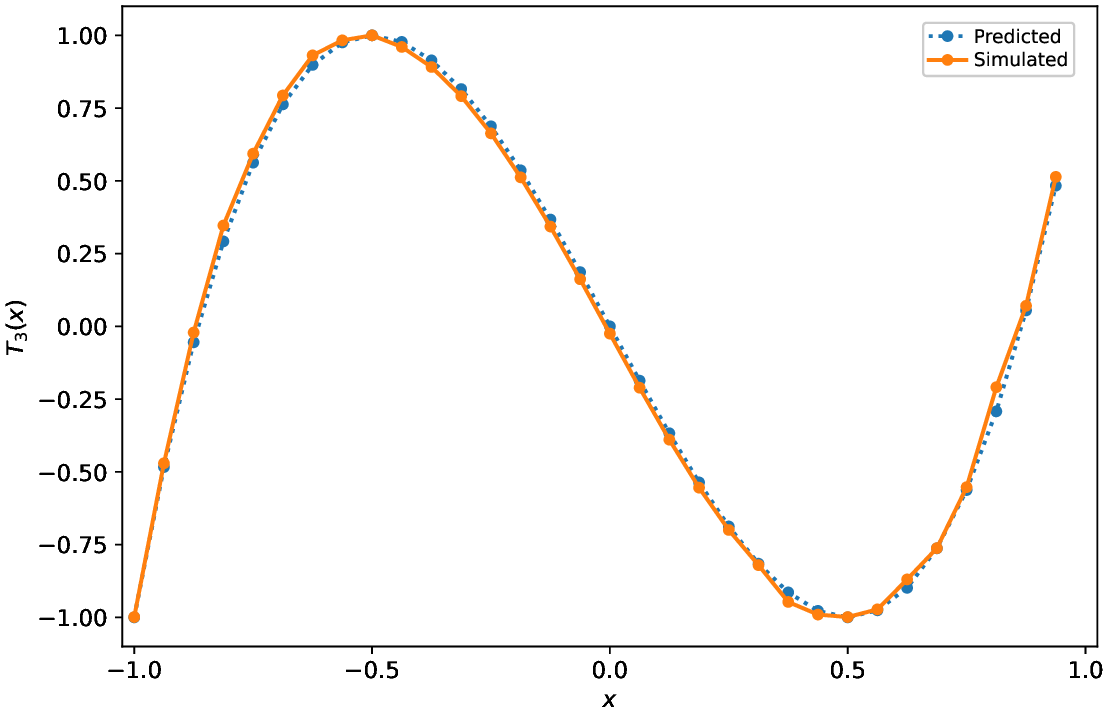}
        \caption{$n=5, p=5, r=3$}
    \end{subfigure}
    \hfill
    \begin{subfigure}{0.30\textwidth}
        \centering
        \includegraphics[width=\textwidth]{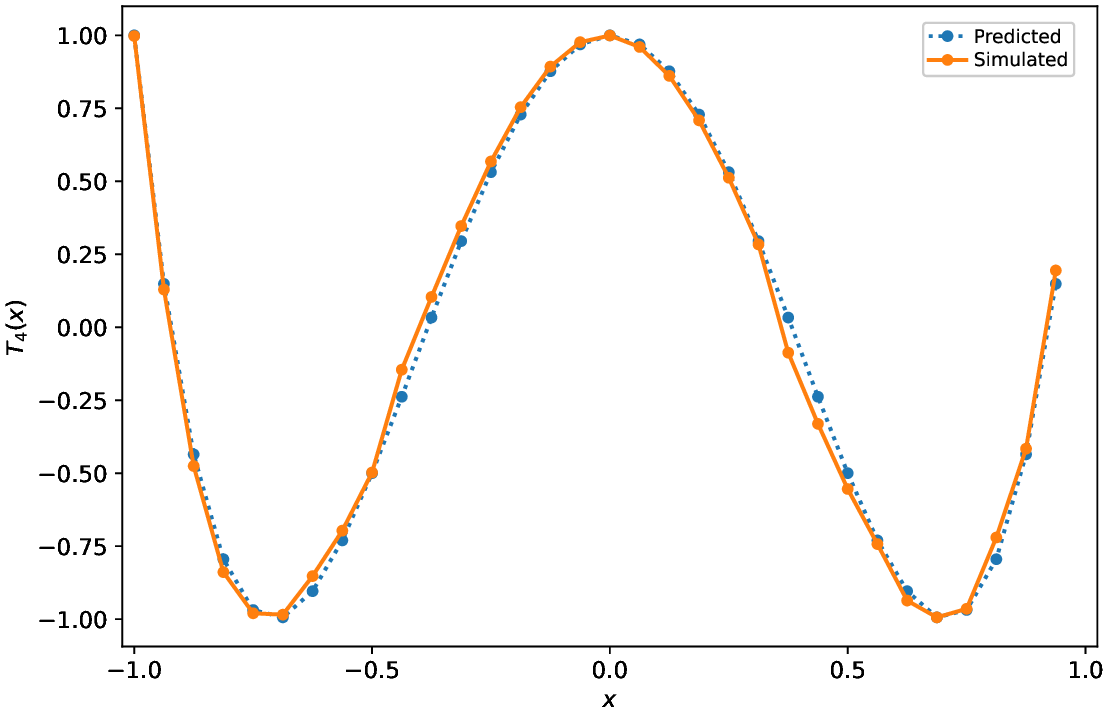}
        \caption{$n=5, p=5, r=4$}
    \end{subfigure}

    \vspace{0.4cm}

    \begin{subfigure}{0.30\textwidth}
        \centering
        \includegraphics[width=\textwidth]{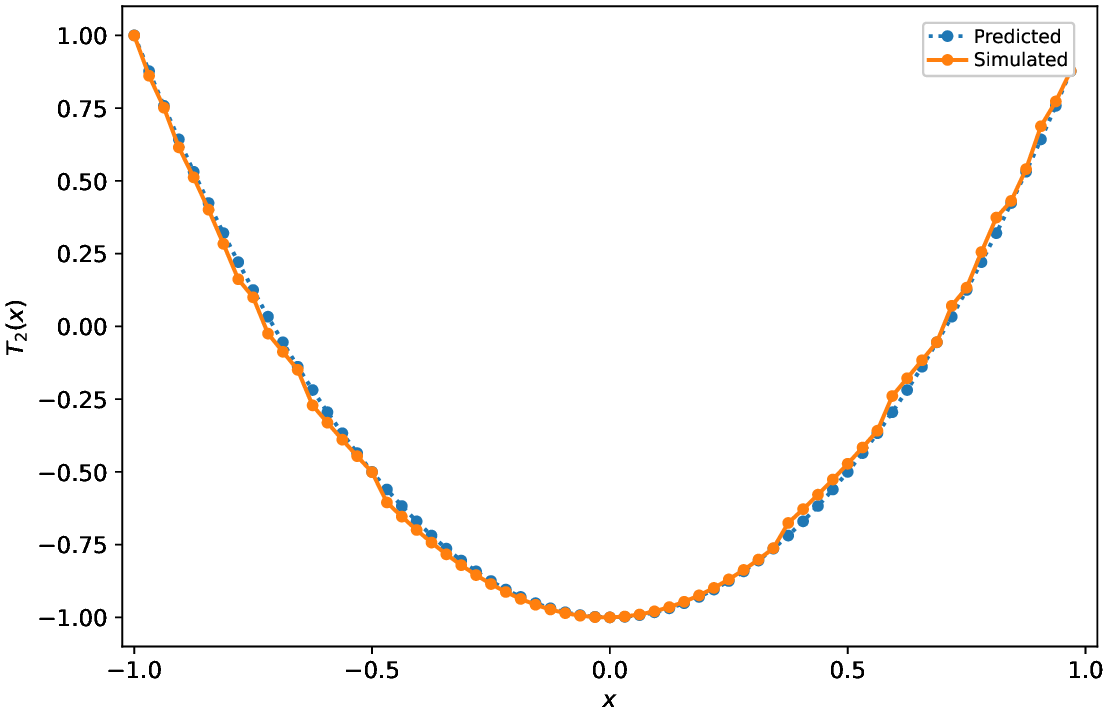}
        \caption{$n=6, p=5, r=2$}
    \end{subfigure}
    \hfill
    \begin{subfigure}{0.30\textwidth}
        \centering
        \includegraphics[width=\textwidth]{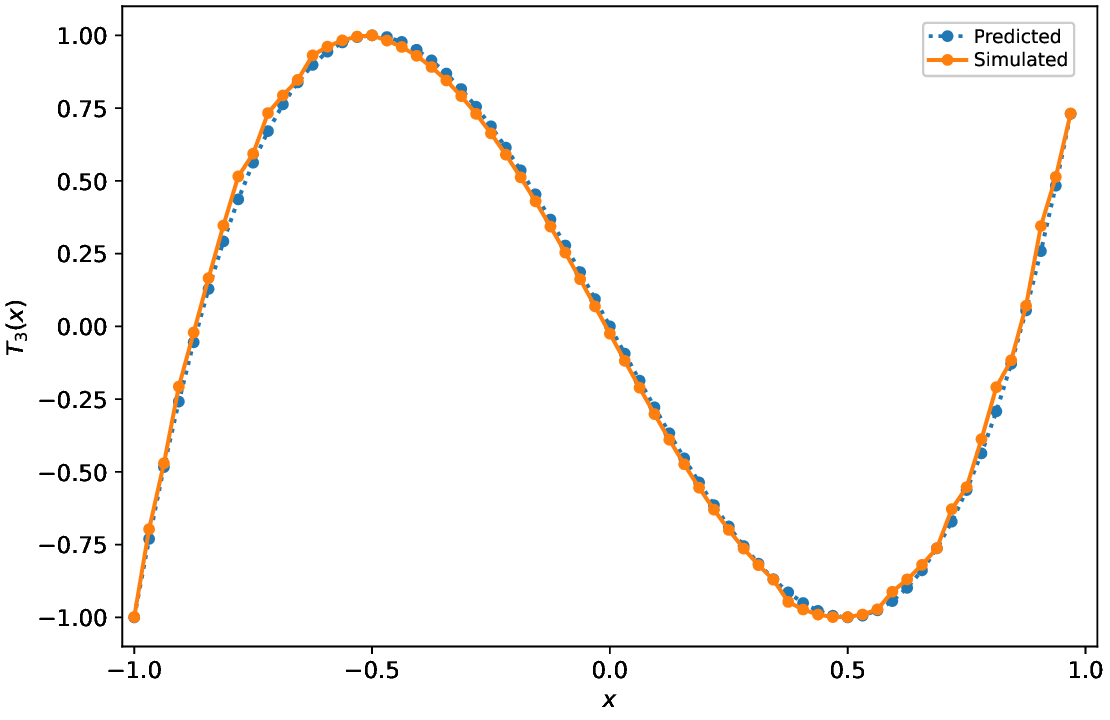}
        \caption{$n=6, p=5, r=3$}
    \end{subfigure}
    \hfill
    \begin{subfigure}{0.30\textwidth}
        \centering
        \includegraphics[width=\textwidth]{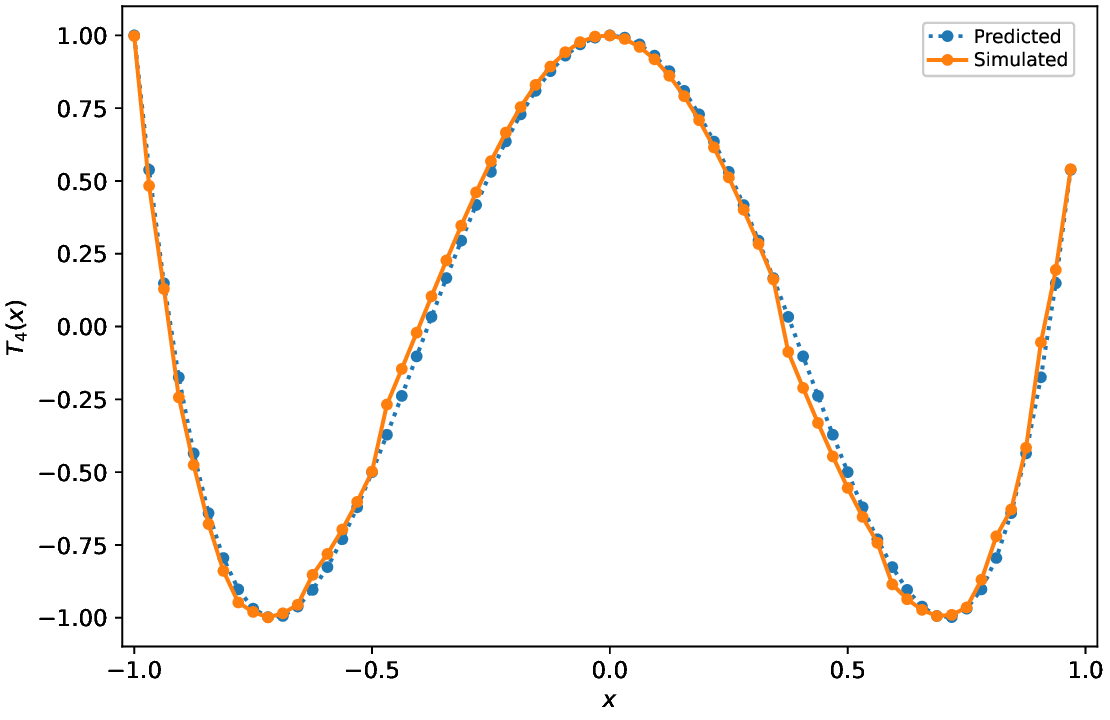}
        \caption{$n=6, p=5, r=4$}
    \end{subfigure}
    \caption{Comparison of computed and predicted outputs for different parameter of the quantum circuit $U_{v_{\vec r}}$ for $n \in \{4,5,6\}$ and $r \in \{2,3,4\}$ with fixed $p=5$.}
    \label{fig:vr-plots}
\end{figure}
\subsection{Hybrid Implementation of $U_{v_{\vec r}}$}
We validate the algorithm in \cref{4.1} implementing the unitary operator $U_{v_{\vec r}}$. The $\arccos$ computation is performed classically due to current hardware and simulation constraints and is included in the pre-processing stage. The resulting parameters are then loaded into a $p$-qubit register.
To verify correctness, the circuit is simulated for small problem sizes where full statevector simulation is feasible. The simulated amplitudes are renormalized by multiplying by $\sqrt{N}$ and compared with the classical values $T_r(x_j)$ defining $v_{\vec r}$, computed via direct Chebyshev evaluation using \texttt{NumPy}.
For various degrees $r$ and qubit counts $n$, the simulated amplitudes are compared with the corresponding Chebyshev polynomial values. The results in \cref{fig:vr-plots} show that $U_{v_{\vec r}}$ accurately reproduces the vector $v_{\vec r}$.
\cref{tab:vr-p-error-ratio} reports the statevector error for fixed $n=4$ and $r=4$ as $p$ increases. The ratios of consecutive errors confirm that the error decreases by approximately a factor of two when $p$ increases by one.
\begin{table}[h]
    \renewcommand{\arraystretch}{1.25}
    \centering
    \scalebox{0.80}{
    \begin{tabular}{c|cccc|c}\hline\hline
        \# & $n$ & $p$ & $r$ & Error & Ratio \\\hline
        1 & 4 & 1 & 4 & 0.9880609500 & 1.6475 \\
        2 & 4 & 2 & 4 & 0.5997170775 & 2.2256 \\
        3 & 4 & 3 & 4 & 0.2694581809 & 2.1944 \\
        4 & 4 & 4 & 4 & 0.1227710437 & 2.0449 \\
        5 & 4 & 5 & 4 & 0.0600312609 & 1.8409 \\
        6 & 4 & 6 & 4 & 0.0326167771 & 2.2405 \\
        7 & 4 & 7 & 4 & 0.0145581686 & 1.9220 \\
        8 & 4 & 8 & 4 & 0.0075722616 & -- \\\hline\hline
    \end{tabular}}
    \caption{Statevector error for fixed $n=4$ and $r=4$ as $p$ increases.}
    \label{tab:vr-p-error-ratio}
\end{table}

\subsection{Hybrid Implementation of $U_{u_{\vec r}}$}
We next validate the algorithm in \cref{4.2} implementing the unitary operator $U_{u_{\vec r}}$.  
As in the previous analysis, all arithmetic computations are carried out classically, and the resulting data are subsequently loaded into the quantum registers.
Let $\vec{x} = \vec{t} - \vec{s}/N$ and 
\begin{equation}
u_q(\vec{x}) = \sqrt{N} e^{-i \pi N \vec{x}}\, T_q(2N \vec{x})
\end{equation}
denote the normalized amplitude of the $\ket{0}$ computational basis state of the circuit $U_q$.
For a range of values of $q$ and transform sizes $N$, 
we plot the real and imaginary parts of the simulated amplitudes and the predicted components in~\cref{fig:ur-plots}.
The results in~\cref{fig:ur-plots} provide numerical evidence that
$U_{q}$ accurately reproduces 
$u_{qr}(\vec{x})/\sqrt{N}$. 
\cref{tab:ur-p-m-error-ratio} reports the statevector error for fixed $n=4$ and $q=2$ as $p,m$ increases. 
\begin{figure}[h]
    \centering
    \begin{subfigure}{0.30\textwidth}
        \centering
        \includegraphics[width=\textwidth]{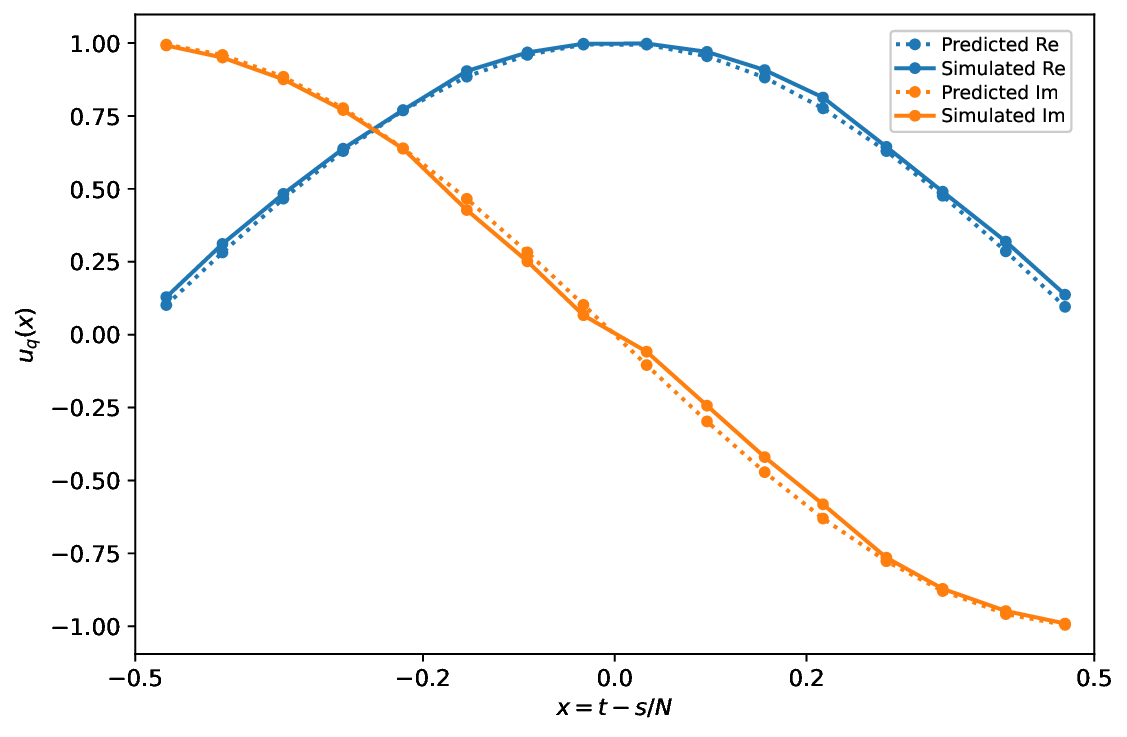}
        \caption{$n=4, p=5, m=4, q=0$}
    \end{subfigure}
    \hfill
    \begin{subfigure}{0.30\textwidth}
        \centering
        \includegraphics[width=\textwidth]{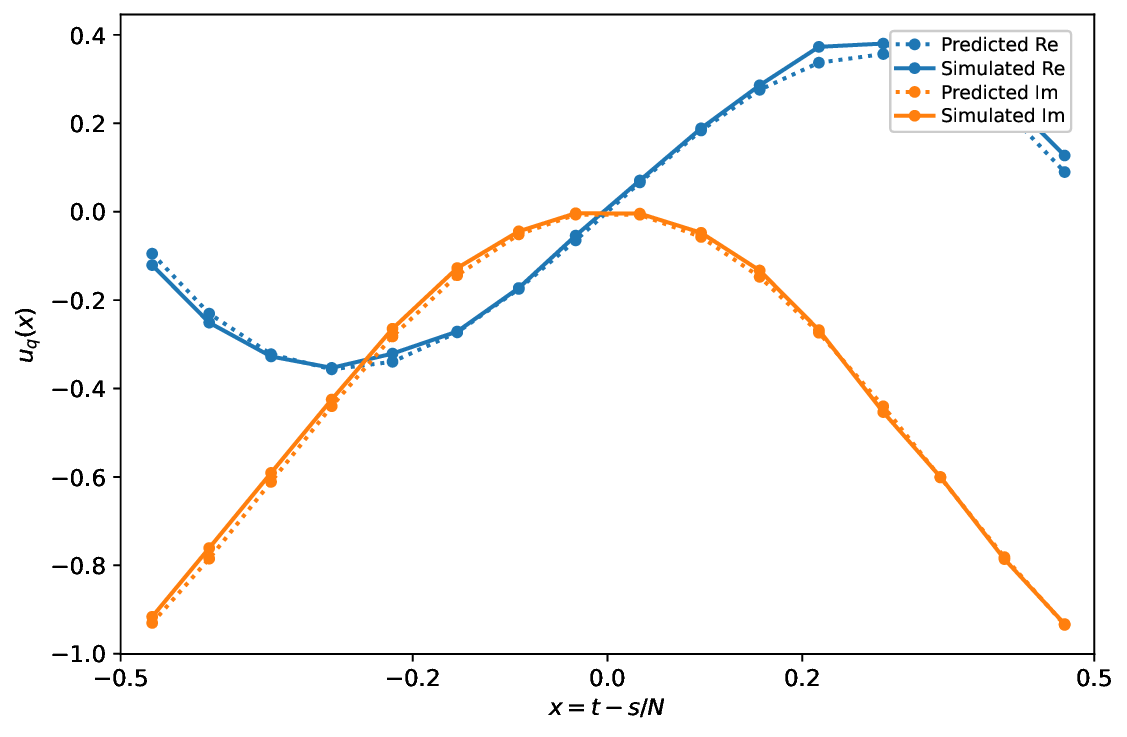}
        \caption{$n=4, p=5, m=4, q=1$}
    \end{subfigure}
    \hfill
    \begin{subfigure}{0.30\textwidth}
        \centering
        \includegraphics[width=\textwidth]{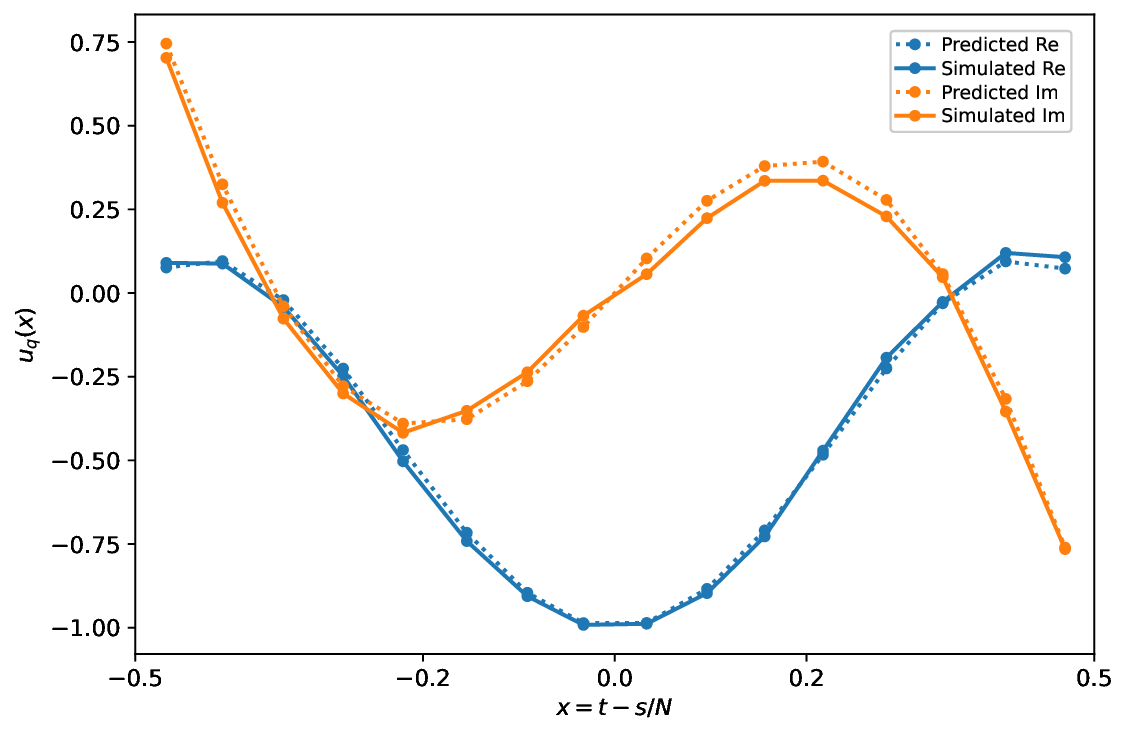}
        \caption{$n=4, p=5, m=4, q=2$}
    \end{subfigure}

    \vspace{0.4cm}
    \begin{subfigure}{0.30\textwidth}
        \centering
        \includegraphics[width=\textwidth]{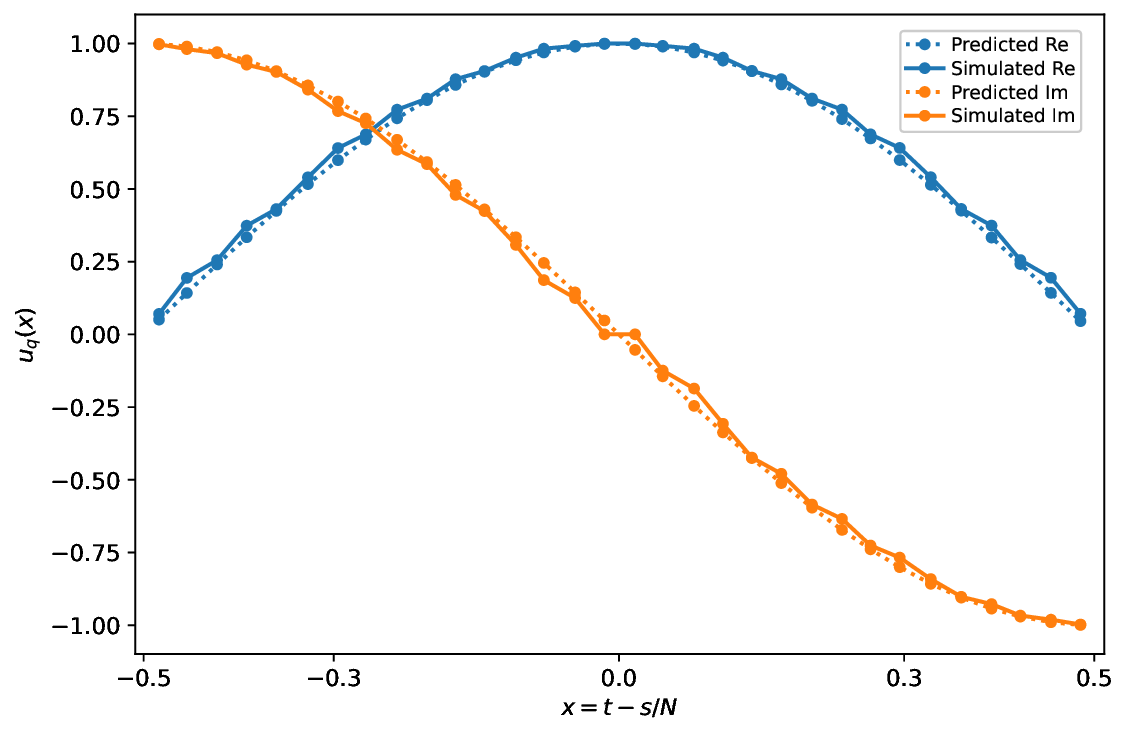}
        \caption{$n=5, p=5, m=4, q=0$}
    \end{subfigure}
    \hfill
    \begin{subfigure}{0.30\textwidth}
        \centering
        \includegraphics[width=\textwidth]{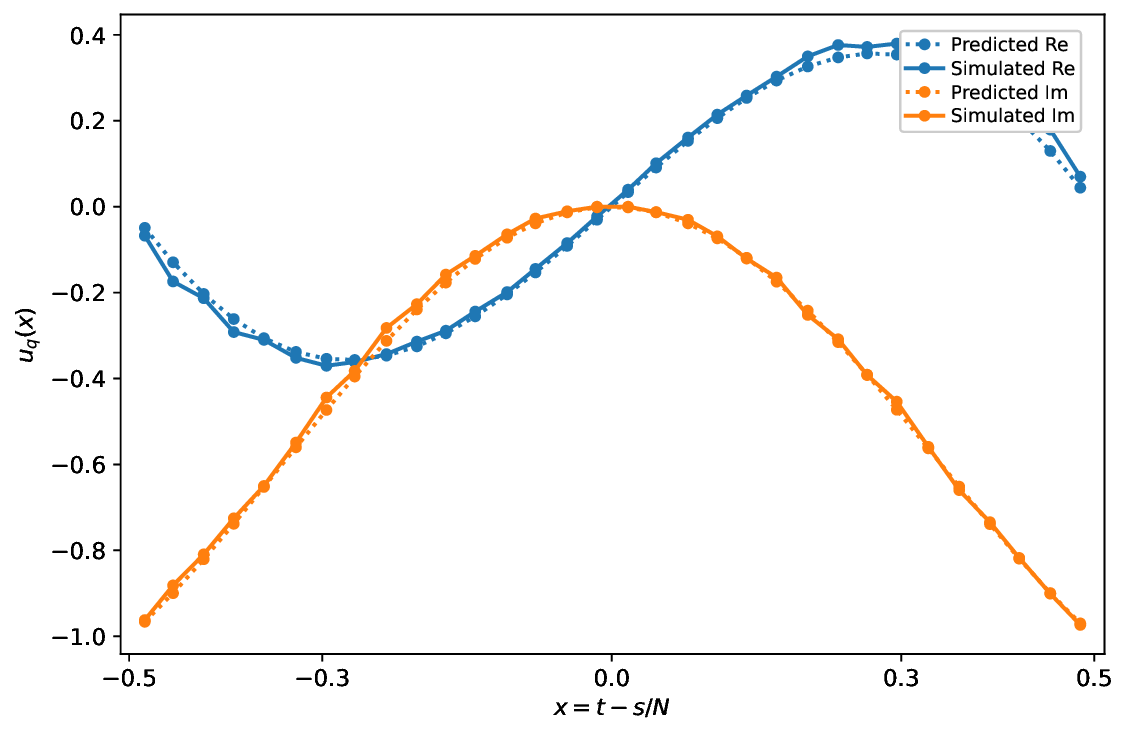}
        \caption{$n=5, p=5, m=4, q=1$}
    \end{subfigure}
    \hfill
    \begin{subfigure}{0.30\textwidth}
        \centering
        \includegraphics[width=\textwidth]{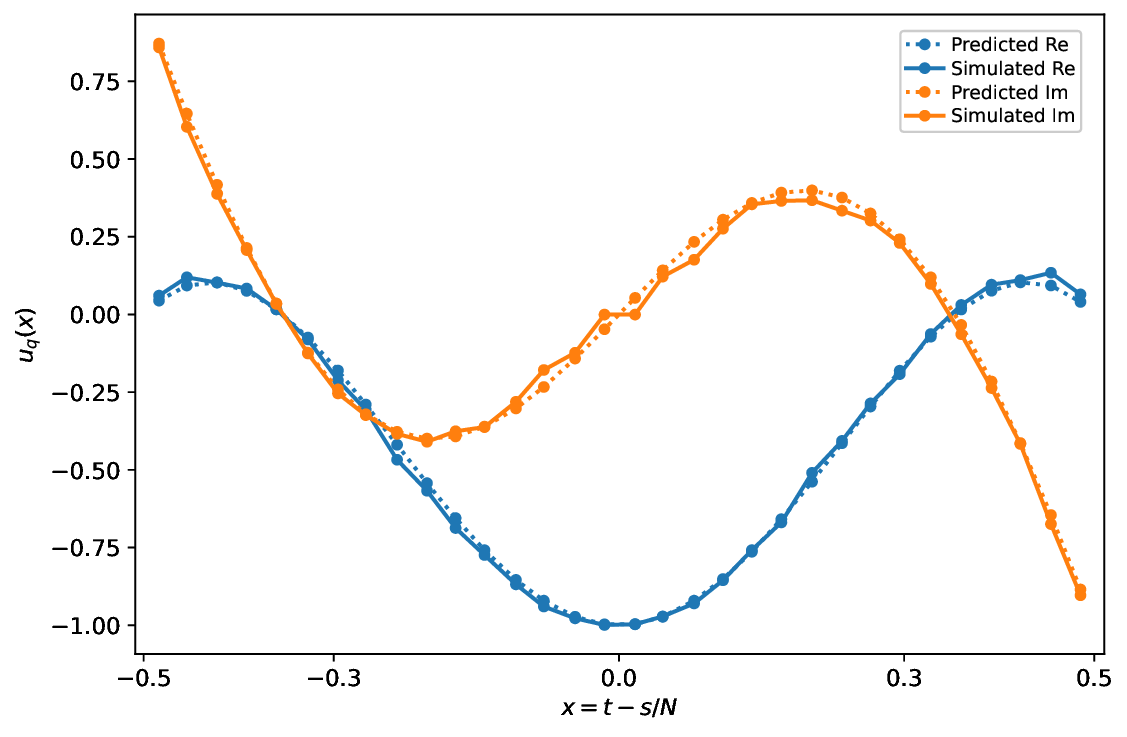}
        \caption{$n=5, p=5, m=4, q=2$}
    \end{subfigure}

    \vspace{0.4cm}
    \begin{subfigure}{0.30\textwidth}
        \centering
        \includegraphics[width=\textwidth]{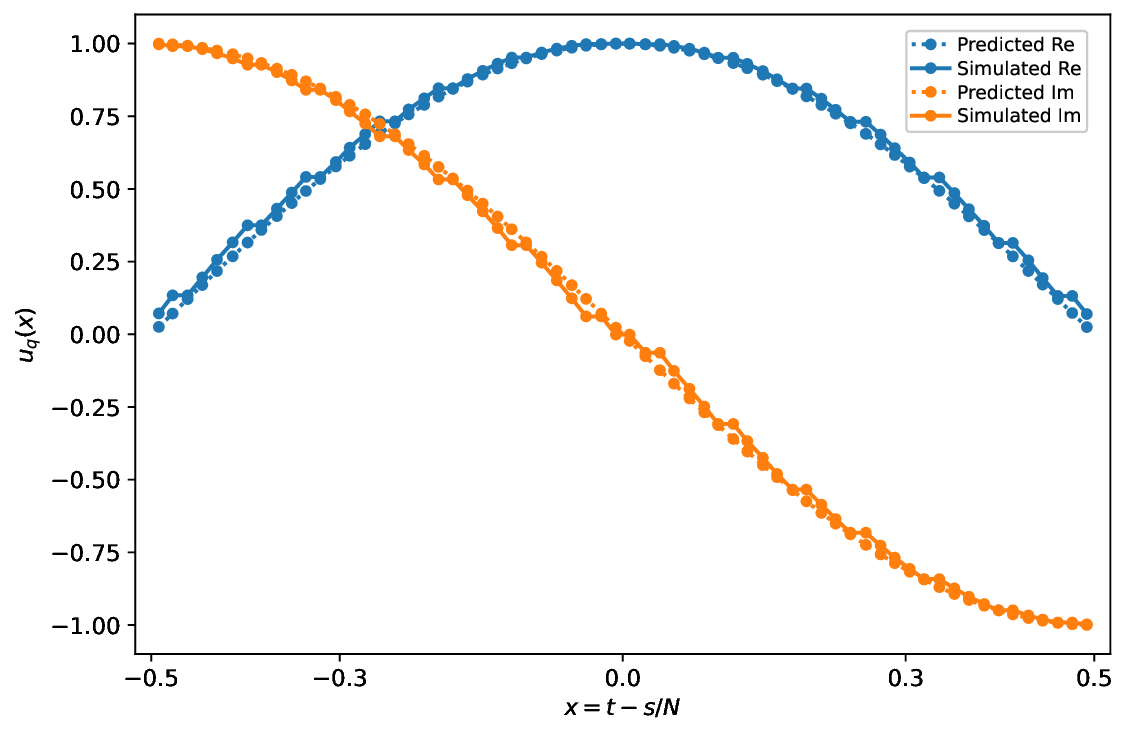}
        \caption{$n=6, p=5, m=4, q=0$}
    \end{subfigure}
    \hfill
    \begin{subfigure}{0.30\textwidth}
        \centering
        \includegraphics[width=\textwidth]{Figures/C4a/uqr_amp_domain_case008_n5_m4_p5_q1.eps}
        \caption{$n=6, p=5, m=4, q=1$}
    \end{subfigure}
    \hfill
    \begin{subfigure}{0.30\textwidth}
        \centering
        \includegraphics[width=\textwidth]{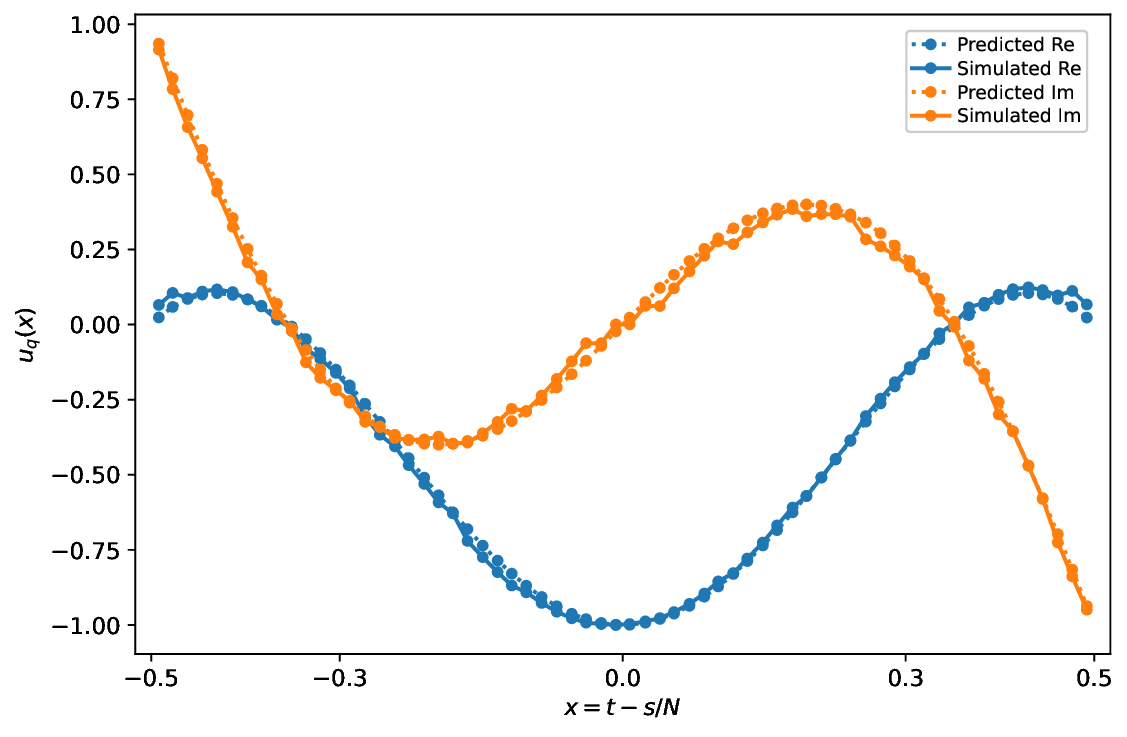}
        \caption{$n=6, p=5, m=4, q=2$}
    \end{subfigure}
    \caption{
    Comparison of computed and predicted outputs for different parameters of the quantum circuit $U_{q}$ for $n \in \{4,5,6\}$ and $q \in \{0,1,2\}$ with $m=4$ and $p=5$.
    }
    \label{fig:ur-plots}
\end{figure}

\begin{table}[h]
    \renewcommand{\arraystretch}{1.25}
    \centering
    \scalebox{0.80}
    {
    \begin{tabular}{c@{\hspace{1.5cm}}c}
    \begin{tabular}{c|ccccc|c}\hline\hline
    \# & $n$ & $m$ & $p$ & $q$ & $\text{Error}$ & $\text{Ratio}$ \\\hline
    1 & 4 & 2 & 1 & 2 & 1.6138906535 & 2.00799 \\
    2 & 4 & 2 & 2 & 2 & 0.8036549819 & 1.30723 \\
    3 & 4 & 2 & 3 & 2 & 0.6147763375 & 1.28833 \\
    4 & 4 & 2 & 4 & 2 & 0.4771891201 & 1.06065 \\
    5 & 4 & 2 & 5 & 2 & 0.4499019429 & 1.03347 \\
    6 & 4 & 2 & 6 & 2 & 0.4353344292 & 1.00562 \\
    7 & 4 & 2 & 7 & 2 & 0.4329028110 & -- \\
    \hline\hline
    \end{tabular}
    &
    \begin{tabular}{c|ccccc|c}\hline\hline
    \# & $n$ & $m$ & $p$ & $q$ & $\text{Error}$ & $\text{Ratio}$ \\\hline
    1 & 4 & 1 & 2 & 2 & 1.1343486555 & 1.41149 \\
    2 & 4 & 2 & 2 & 2 & 0.8036549819 & 1.12381 \\
    3 & 4 & 3 & 2 & 2 & 0.7151168695 & 1.04547 \\
    4 & 4 & 4 & 2 & 2 & 0.6840150547 & 1.01223 \\
    5 & 4 & 5 & 2 & 2 & 0.6757494378 & 1.00142 \\
    6 & 4 & 6 & 2 & 2 & 0.6747895628 & 1.00031 \\
    7 & 4 & 7 & 2 & 2 & 0.6745814549 & -- \\
    \hline\hline
    \end{tabular}
    \end{tabular}
    }
    \caption{State-vector error for fixed $n=4$ and $q=2$. 
    The left table varies $p$ with $m=2$ fixed, while the right table varies $m$ with $p=2$ fixed. 
    The last column shows the ratio of consecutive rows.}
    \label{tab:ur-p-m-error-ratio}
\end{table}

We next validate $U_{u_{\vec r}} = \sum_{q=0}^{K-1} \alpha'_{qr} U_q$ using the LCU algorithm. Let 
\begin{equation}
u_r(\vec{x}) = \sqrt{N} \| \vec \alpha_{qr} \|_1 \sum_{q=0}^{K-1} \alpha'_{qr} e^{-i \pi N \vec{x}}\, T_q(2N \vec{x})
\end{equation}
denote the normalized amplitude of the $\ket{0}$ basis state of the circuit $U_{u_{\vec r}}$. For $K=2$, we report the results in \cref{fig:ur-lcu}, which provide evidence that
$U_{u_{\vec r}}$ accurately reproduces
$u_{r}(\vec x)/\| \alpha_{qr} \|_1 \sqrt{N}$.
\begin{figure}[h]
    \centering

    \begin{subfigure}{0.30\textwidth}
        \centering
        \includegraphics[width=\textwidth]{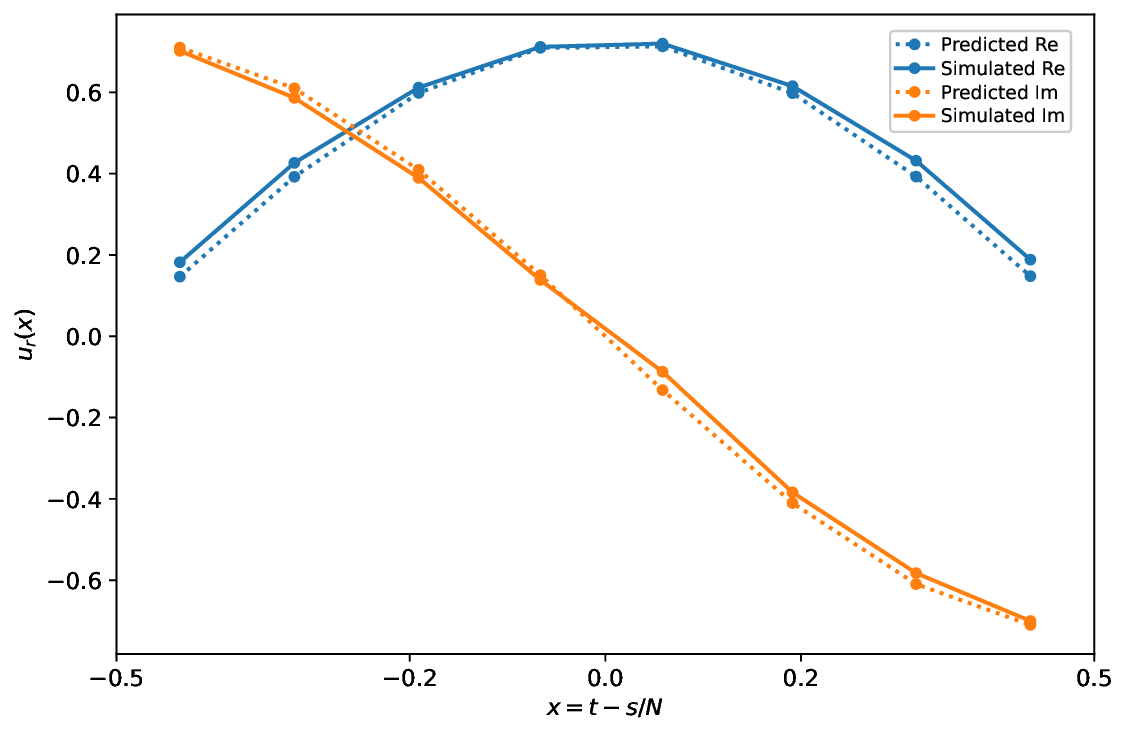}
        \caption{$n=3, p=5, m=4, r=0$}
    \end{subfigure}
    \hfill
    \begin{subfigure}{0.30\textwidth}
        \centering
        \includegraphics[width=\textwidth]{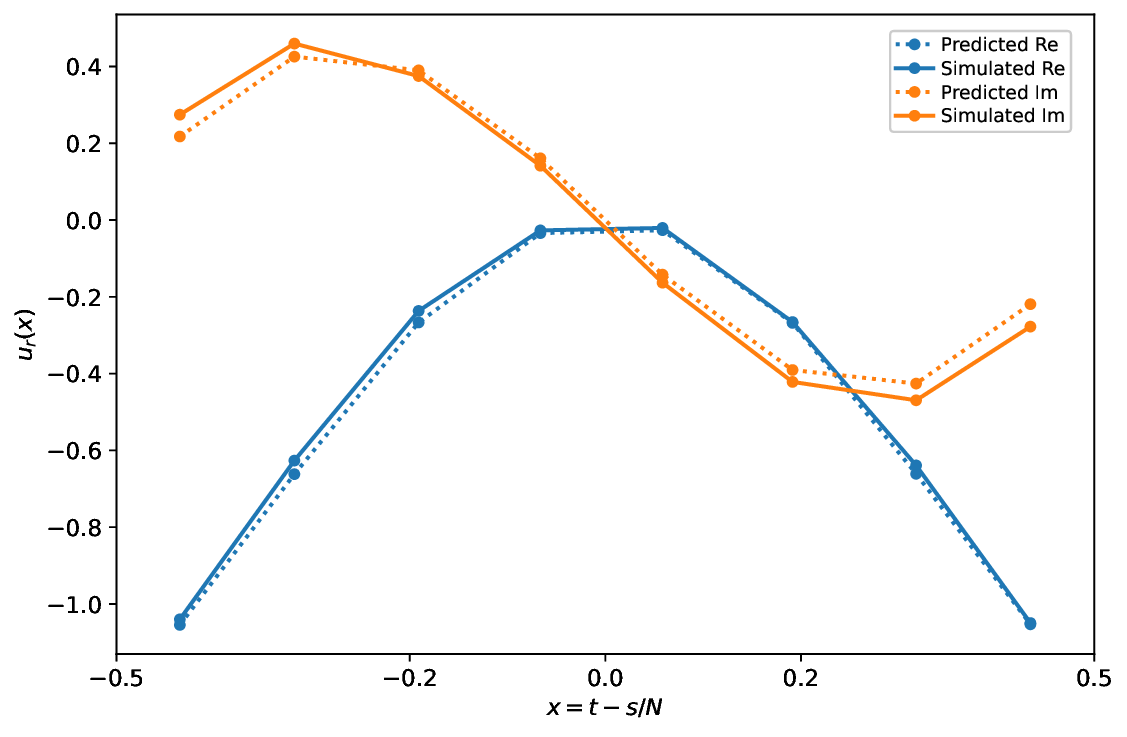}
        \caption{$n=3, p=5, m=4, r=1$}
    \end{subfigure}
    \hfill
    \begin{subfigure}{0.30\textwidth}
        \centering
        \includegraphics[width=\textwidth]{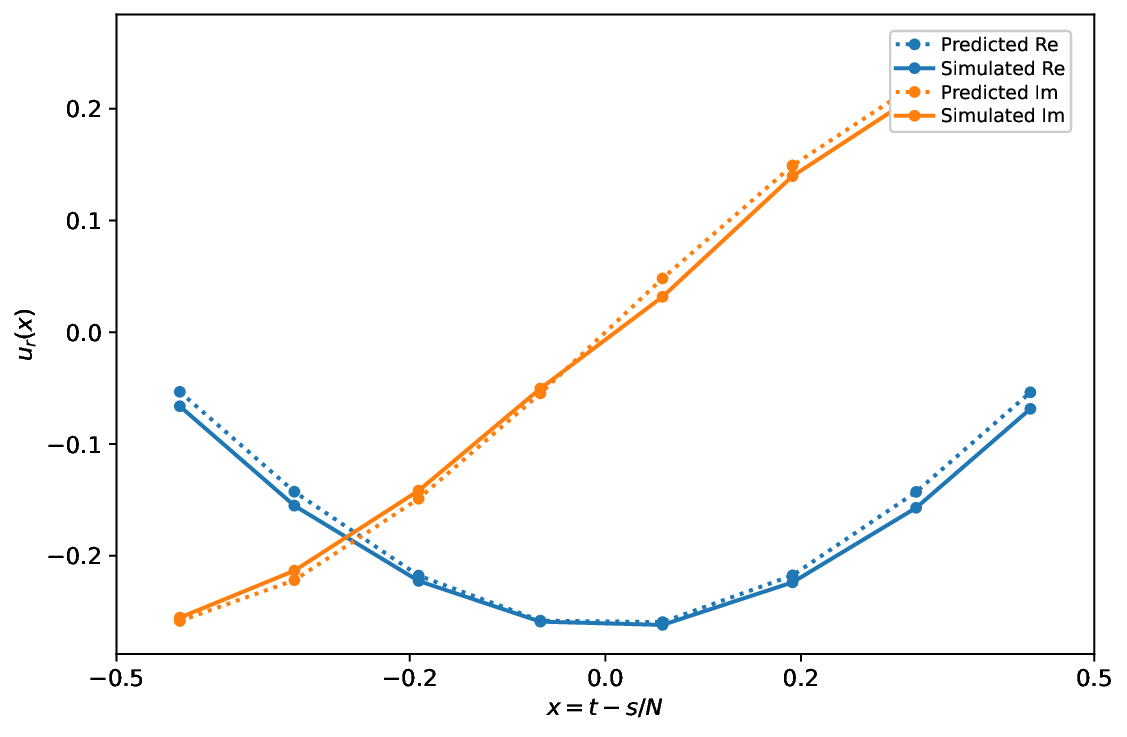}
        \caption{$n=3, p=5, m=4, r=2$}
    \end{subfigure}

    \vspace{0.4cm}

    \begin{subfigure}{0.30\textwidth}
        \centering
        \includegraphics[width=\textwidth]{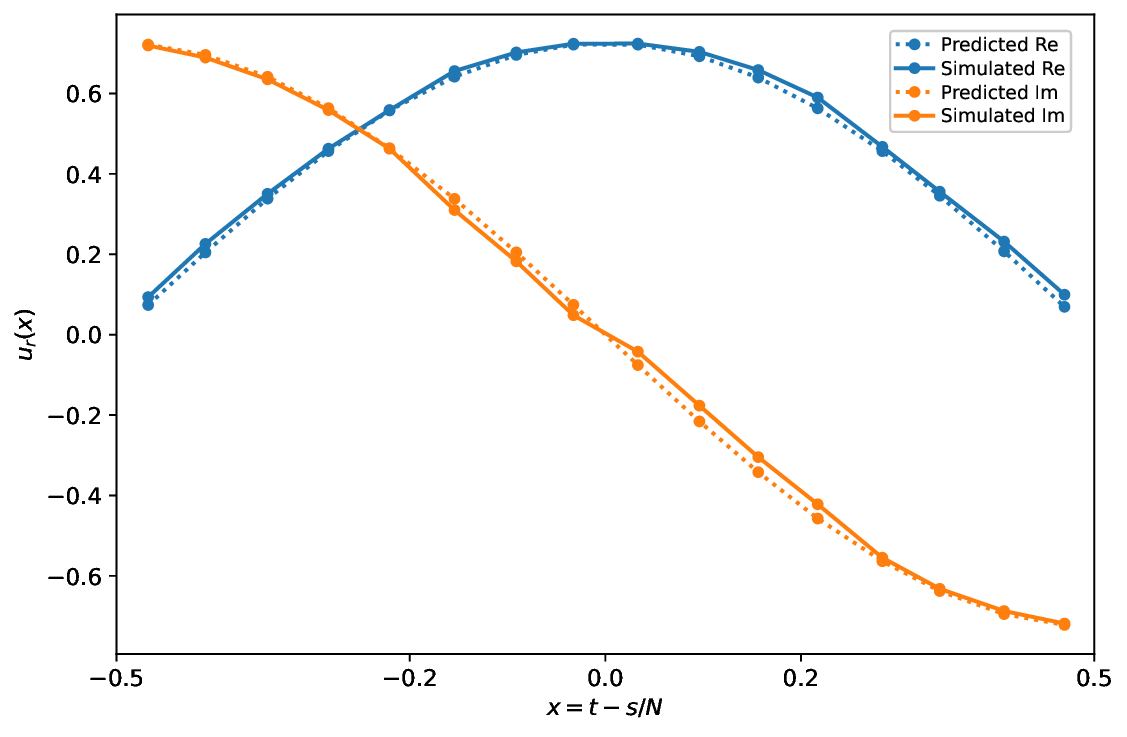}
        \caption{$n=4, p=5, m=4, r=0$}
    \end{subfigure}
    \hfill
    \begin{subfigure}{0.30\textwidth}
        \centering
        \includegraphics[width=\textwidth]{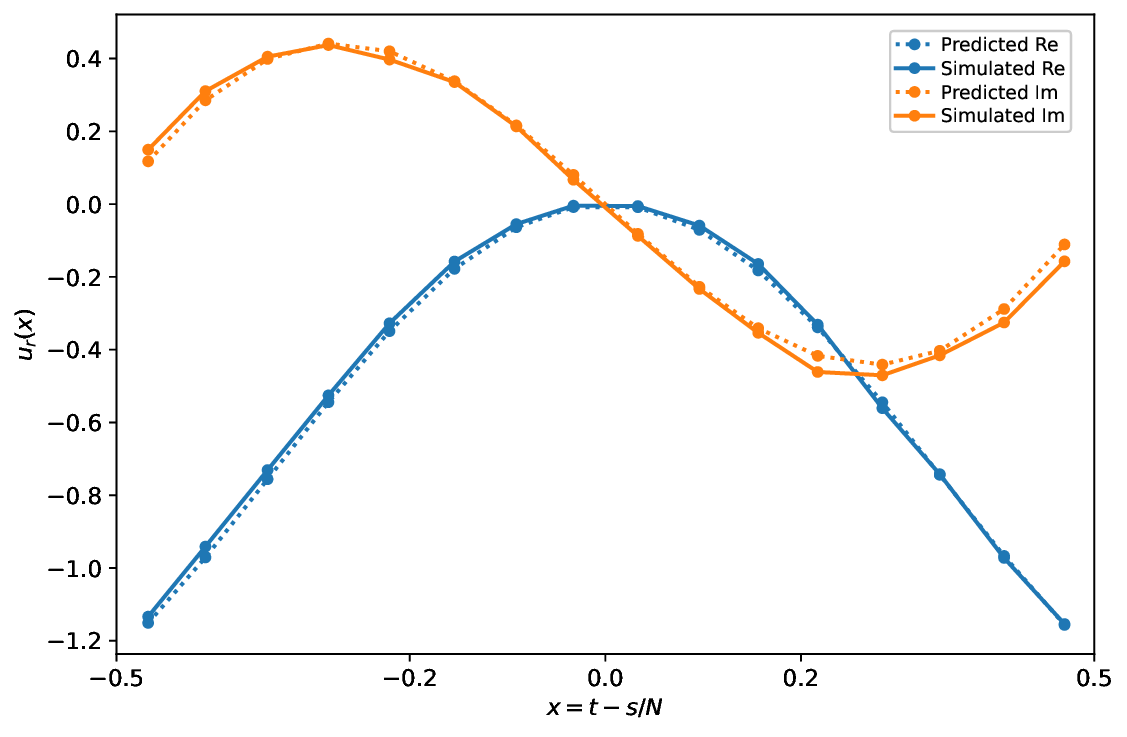}
        \caption{$n=4, p=5, m=4, r=1$}
    \end{subfigure}
    \hfill
    \begin{subfigure}{0.30\textwidth}
        \centering
        \includegraphics[width=\textwidth]{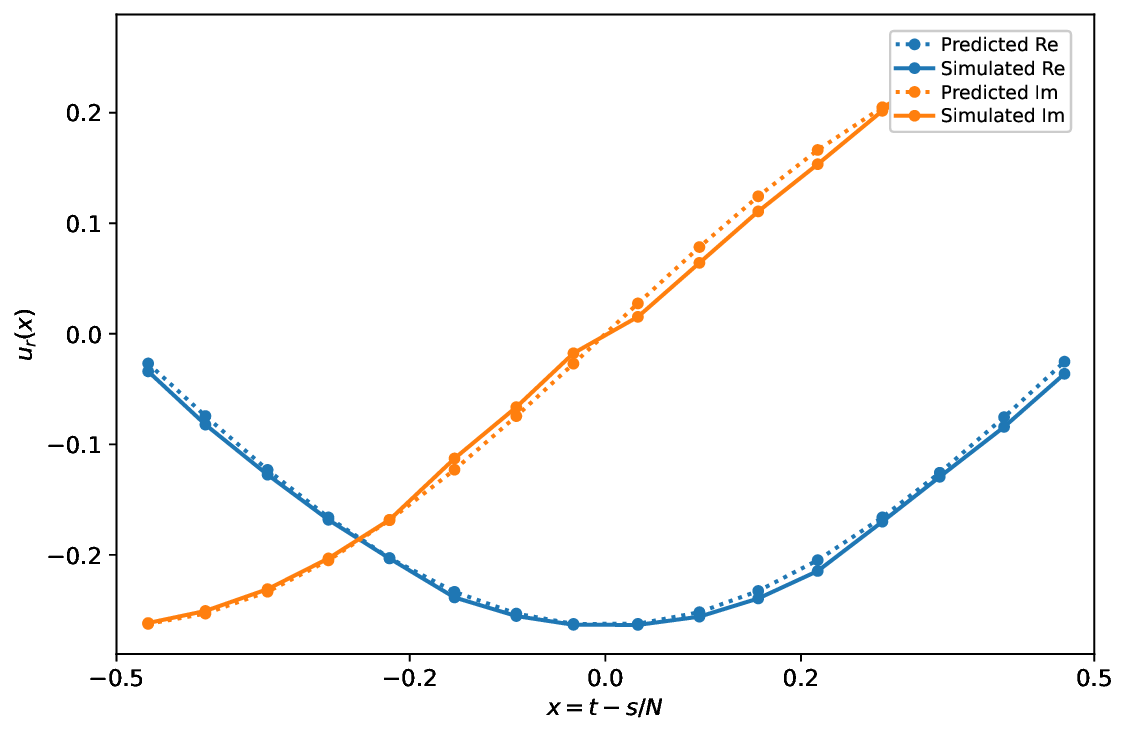}
        \caption{$n=4, p=5, m=4, r=2$}
    \end{subfigure}

    \vspace{0.4cm}

    \begin{subfigure}{0.30\textwidth}
        \centering
        \includegraphics[width=\textwidth]{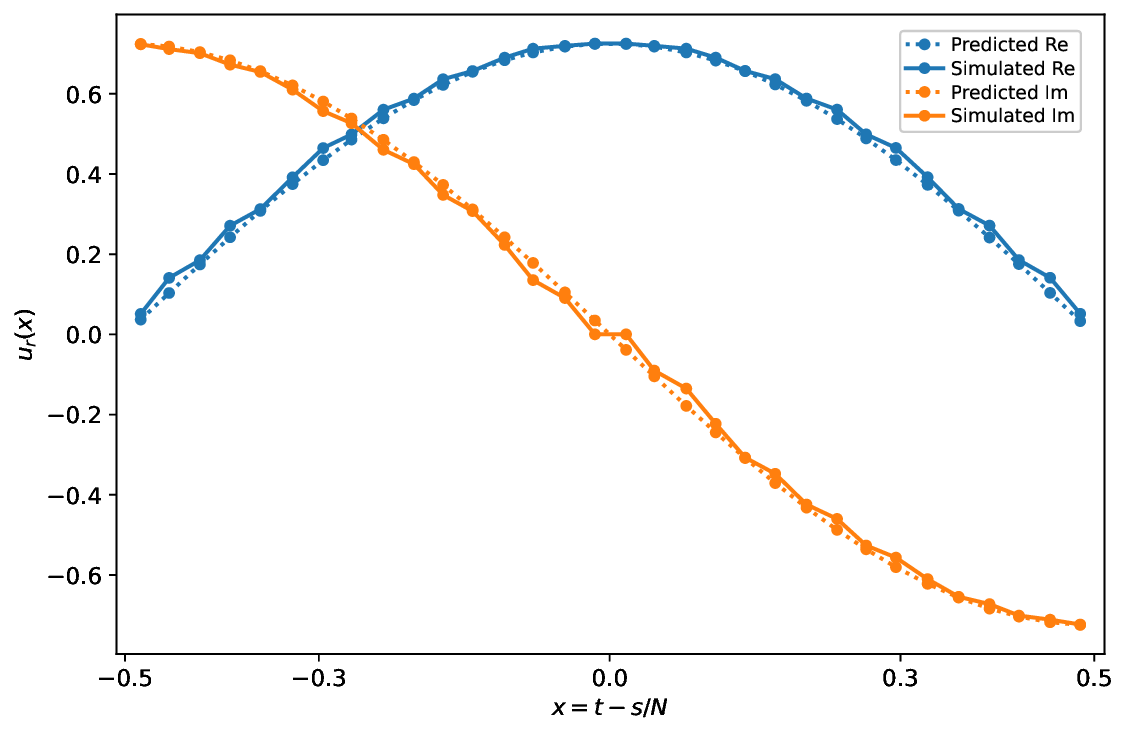}
        \caption{$n=5, p=5, m=4, r=0$}
    \end{subfigure}
    \hfill
    \begin{subfigure}{0.30\textwidth}
        \centering
        \includegraphics[width=\textwidth]{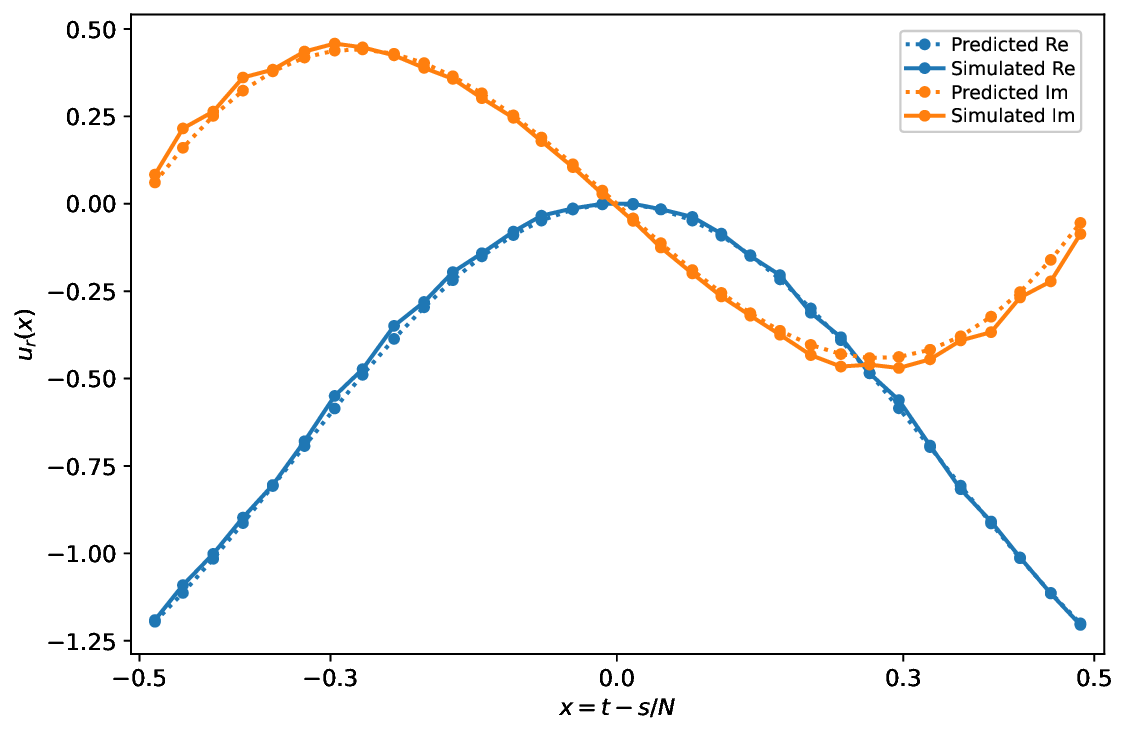}
        \caption{$n=5, p=5, m=4, r=1$}
    \end{subfigure}
    \hfill
    \begin{subfigure}{0.30\textwidth}
        \centering
        \includegraphics[width=\textwidth]{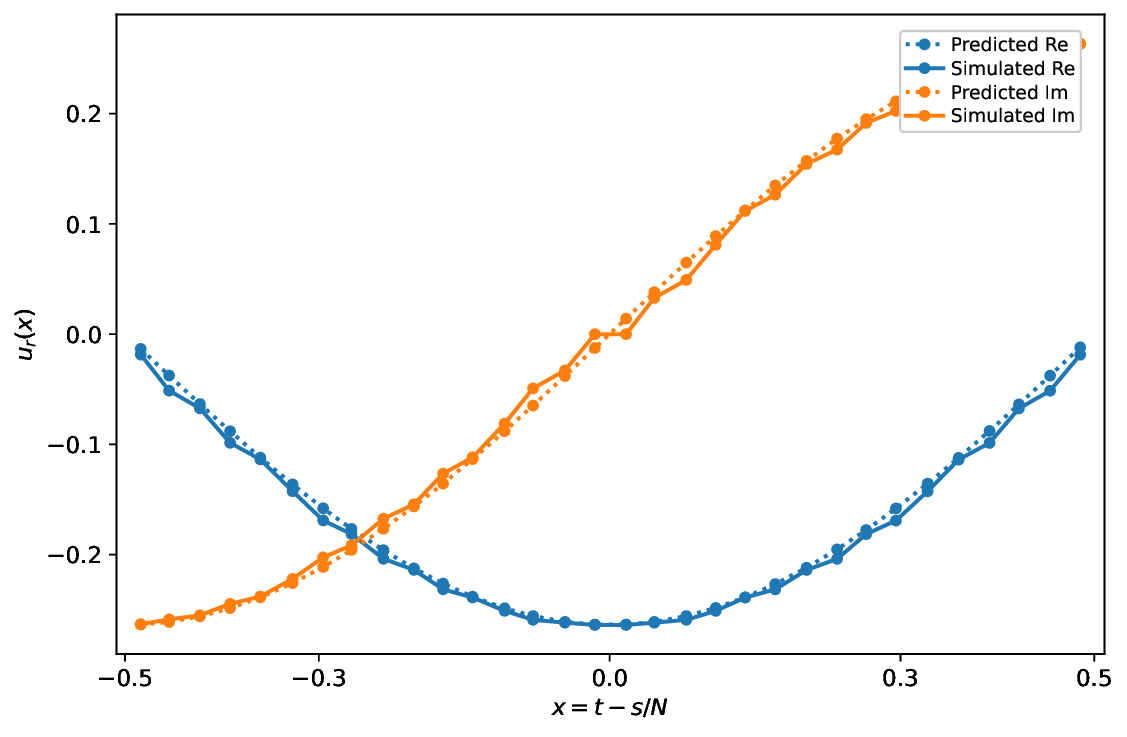}
        \caption{$n=5, p=5, m=4, r=2$}
    \end{subfigure}
    \caption{Comparison of the computed and predicted for different parameters of the quantum circuit $U_{u_{\vec r}}$ for $n \in \{4,5,6\}$ and $r \in \{0,1,2\}$ with $m=4$ and $p=5$.}
    \label{fig:ur-lcu}
\end{figure}
Note that the LCU algorithm dominates the computational cost of implementing the method. This is because the LCU procedure requires multi-qubit controlled unitaries which, when decomposed into elementary gates, reduce to multi-qubit controlled-$X$ gates, since the arithmetic loader ultimately consists of such operations. Given current simulation capabilities, this approach can only be implemented efficiently for small values of $K$ and $n$. In particular, we restrict to the case $K=2$.

\subsection{Numerical Validation of the $M_s$ Block Encoding}
\label{sec:ms-block-encoding-validation}
Lastly, we validate the block encoding constructed for the matrix $M_{\vec s}$. The implemented circuit exploits the structure of $M_{\vec s}$. Since each column has exactly one nonzero entry, the block encoding is realized via a simplified right--isometry. Although inspired by the sparse-access block-encoding lemma, the implementation does not explicitly realize the oracle triplet $(O_c, O_r, O_A)$. Instead, it uses the column structure of $M_{\vec s}$ to build a direct circuit. The resulting unitary $U_{M_{\vec s}}$ uses $n+1$ ancilla qubits and satisfies
\begin{equation}\label{to-very}
(\langle 0|^{n+1}  \otimes I) U_{M_{\vec s}} (|0\rangle^{n+1}  \otimes I)
= \frac{M_s}{\sqrt{d_r}}, 
\quad \quad d_r = \| \vec c_{s} \|_\infty.  
\end{equation}

We verify \cref{to-very} by simulating $U_{M_{\vec s}}$, extracting the encoded matrix by projecting the ancilla register onto $|0\rangle^{n+1}$, and comparing it with the normalized target matrix $M_{\vec s}/\sqrt{d_r}$. We quantify the error using the maximum entrywise norm and the spectral norm. The construction is tested on representative vectors $\vec s$ with different sparsity patterns:

\begin{enumerate}
\item \textbf{Identity:} $s_j=j$, yielding the identity matrix ($d_r=1$).
\item \textbf{Bit-Reversal:} $s_j$ is the bit-reversal of $j$ (permutation, $d_r=1$).
\item \textbf{Constant:} all entries of $s$ are identical, giving maximal multiplicity $d_r=N$.
\item \textbf{Two-Cluster:} half the columns map to one row and half to another ($d_r=N/2$).
\item \textbf{Heavy Row:} most columns map to one row, with the remainder distributed.
\item \textbf{Random:} $s_j$ sampled uniformly from $\{0,\dots,N-1\}$.
\end{enumerate}

We report representative results for system size $n=5$ in \cref{tab:ms-block-encoding-validation}. The extracted encoded block agrees with the expected matrix $M_s/\sqrt{d_r}$ to machine precision for all tested configurations.
\begin{table}[h]
    \renewcommand{\arraystretch}{1.25}
    \centering
    \scalebox{0.85}{
    \begin{tabular}{c|c|c|c|c|c}\hline\hline
        \# & Family & $d_r$ & $\alpha=\sqrt{d_r}$ & Max error & Spectral error \\\hline
        1 & Identity       & 1  & 1.000000 & $0.0$           & $0.0$ \\
        2 & Bit reversal   & 1  & 1.000000 & $0.0$           & $0.0$ \\
        3 & Constant ($s_j=0$)   & 32 & 5.656854 & $3.9\times10^{-16}$ & $1.7\times10^{-15}$ \\
        4 & Constant ($s_j=N-1$) & 32 & 5.656854 & $3.9\times10^{-16}$ & $1.7\times10^{-15}$ \\
        5 & Two cluster    & 16 & 4.000000 & $7.3\times10^{-16}$ & $2.0\times10^{-15}$ \\
        6 & Heavy row      & 17 & 4.123106 & $6.9\times10^{-16}$ & $1.9\times10^{-15}$ \\
        7 & Random         & 3  & 1.732051 & $1.6\times10^{-14}$ & $2.4\times10^{-14}$ \\\hline\hline
    \end{tabular}}
    \caption{Numerical validation of the $M_{\vec s}$ block encoding for $n=5$.}
    \label{tab:ms-block-encoding-validation}
\end{table}

\begin{remark}
The block encoding QFT is straightforward. 
A complete block encoding of the NUQFT matrix can be obtained by composing the components simulated above. We defer the full construction to future work, as the simulation becomes computationally expensive for large $K$.
\end{remark}

\section{Conclusion}\label{conclusion}
We have developed a quantum algorithm for implementing the Type-II non-uniform discrete Fourier transform (NUDFT), which can be readily extended to the Type-I and Type-III variants. We have also presented a rigorous error analysis that identifies the dominant sources of approximation error and derives explicit upper bounds on their respective contributions. The analysis shows that the required computational resources scale logarithmically with the inverse of the target precision and polynomially with the number of qubits.

\subsection{Future Directions}
We discuss below several future research directions:

\begin{enumerate}
\item 
The proposed NUQFT algorithm assumes oracle access to the non-uniform sampling points (cf.~\cref{assumption-t}) and to the corresponding inverse mapping (cf.~\cref{assumption-Or}). An important direction for future work is to identify concrete computational scenarios in which such oracles can be implemented efficiently, either through explicit circuit constructions or through data-generation models where the required access arises naturally.

\item
The present framework relies on Chebyshev polynomial approximations implemented via quantum signal processing. Alternative approximation strategies, including rational approximations, Fourier--Bessel expansions, or minimax constructions tailored to restricted domains, may offer improved constants or reduced ancilla usage. Exploring how such schemes can be incorporated into QSP or the more general QSVT framework may lead to more resource-efficient NUQFT implementations.


\item
The analysis is carried out in a fault-tolerant setting with idealized primitives, including exact QFT implementations and perfect oracle access. Future work should address explicit compilation into standard fault-tolerant gate sets, with attention to optimizing \(T\)-count and \(T\)-depth, as well as adapting the construction to hardware constraints such as limited connectivity. Extending the framework to NISQ devices is another important direction. This includes incorporating approximate implementations of the Quantum Fourier Transform (QFT) and approximate rotations (see.~\cref{rotation-approximate}) by eliminating small-angle rotations to reduce circuit depth.

\item  The numerical experiments presented here are limited by the qubit counts and memory capacity of current quantum simulators and hardware. Future work may explore more scalable implementations. In particular, more resource-efficient realizations of key arithmetic primitives---especially the $\arccos$ subroutine and the LCU stage---could substantially reduce qubit and gate overhead. Possible directions include improved polynomial approximations, QSVT-based constructions, circuit-level optimizations, or hybrid LCU implementations. Such improvements could enable larger-scale realizations and provide a clearer evaluation of the practical performance of the proposed approach.

\item
Many applications of non-uniform Fourier analysis arise in higher dimensions. Extending the NUQFT framework to multidimensional NUDFTs, for example via tensorized or separable low-rank decompositions, is therefore a natural next step. More broadly, the techniques developed here may generalize to other non-uniform integral transforms, including cosine and Laplace transforms, discrete Legendre transforms, and Fourier transforms defined on manifolds.

\item
Finally, the NUQFT is intended to function as a subroutine within larger quantum algorithms rather than as an isolated primitive. Studying its integration into complete quantum workflows—including state preparation, intermediate processing, and measurement—remains an important open problem, particularly in identifying application domains where such integration yields a provable advantage.
\end{enumerate}

\printbibliography

\end{document}